\theoremstyle{definition} \newtheorem{definition}{Definition}[section]
\theoremstyle{definition} \newtheorem{remark}[definition]{Remark}%[section]
\theoremstyle{definition} %\newtheorem{assumption}[definition]{Assumption}%[section]
\theoremstyle{definition} %[section]
\theoremstyle{definition} %[section]
\theoremstyle{definition} %[section]
\theoremstyle{plain} \newtheorem{theorem}[definition]{Theorem}
\theoremstyle{plain} 
\theoremstyle{plain} \newtheorem{lemma}[definition]{Lemma}
\theoremstyle{plain} 
\theoremstyle{plain} \newtheorem{corollary}[definition]{Corollary}
\theoremstyle{definition} \newtheorem{observation}[definition]{Observation}
\theoremstyle{plain} \newtheorem{conjecture}[definition]{Conjecture}
\theoremstyle{definition}
\newcommand{\fff}{\mathbb{F}}
\newcommand{\fffbar}{\overline{\mathbb{F}}}
\newcommand{\qqq}{\mathbb{Q}}
\newcommand{\rrr}{\mathbb{R}}
\newcommand{\ccc}{\mathbb{C}}
\newcommand{\eee}{\mathrm e}
\newcommand{\calC}{\mathcal{C}}
\newcommand{\calR}{\mathcal{R}}
\newcommand{\eps}{\varepsilon}    %% <-- NEW
\newcommand{\wto}{\widetilde O}
\DeclareMathOperator{\rank}{rank}
\newcommand{\expv}{\mathbb{E}}
\newcommand{\prob}{\mathbb{P}}
\DeclareMathOperator{\Var}{Var}
\title{Improved upper bounds for the rigidity of Kronecker products\footnote{This paper is to appear in the proceedings of the 46th International Symposium on
Mathematical Foundations of Computer Science (MFCS'21)}}
\author{Bohdan Kivva \vspace{0.3cm}\\ 
University of Chicago \\
bkivva@uchicago.edu}
\begin{document}
\maketitle
\begin{abstract}
The \emph{rigidity} of a matrix $A$ for target rank $r$ is the minimum
number of entries of $A$ that need to be changed in order to obtain a
matrix of rank at most $r$.  At MFCS'77, Valiant introduced matrix rigidity as a tool to prove circuit lower bounds for linear functions and since then this notion received much attention and found applications in other areas of complexity theory. The problem of constructing an explicit family of matrices that are sufficiently rigid for Valiant's reduction (Valiant-rigid) still remains open. 
 Moreover, since 2017 most of the long-studied candidates
have been shown not to be  Valiant-rigid.

Some of those former candidates for rigidity are Kronecker
products of small matrices.  In a recent paper (STOC'21),
Alman gave a general non-rigidity result for such matrices:
he showed that if an $n\times n$ matrix $A$ (over any field)
is a Kronecker product of $d\times d$ matrices $M_1,\dots,M_k$
(so $n=d^k$) $(d\ge 2)$ then changing only $n^{1+\eps}$ entries of
$A$ one can reduce its rank to $\le n^{1-\gamma}$,
where $1/\gamma$ is roughly $2^d/\eps^2$.

In this note we improve this result in two directions.  First,
we do not require the matrices $M_i$ to have equal size.
Second, we reduce $1/\gamma$ from exponential in $d$
to roughly $d^{3/2}/\eps^2$ (where $d$ is the maximum
size of the matrices $M_i$), and to nearly linear (roughly $d/\eps^2$)
for matrices $M_i$ of sizes within a constant factor of each other.

As an application of our results we significantly expand the class
 of Hadamard matrices that are known not to be Valiant-rigid;
 these now include the Kronecker products of Paley-Hadamard matrices
 and Hadamard matrices of bounded size. 
\end{abstract}

\section{Introduction}

\subsection{Recent upper bounds on rigidity}

In his celebrated MFCS'77 paper~\cite{valiant}, Leslie Valiant introduced
the notion of matrix rigidity as a tool to prove lower bounds for arithmetic circuits. Since then, several other important problems in
complexity theory have been reduced to proving rigidity lower bounds for explicit families of  matrices (see, e.g., \cite{razborov, goldreich-oded} and the survey~\cite{survey}).  
\begin{definition} Let $\fff$ be a field. For a matrix $A\in \fff^{n\times m}$
and a target rank $0\leq r\leq \min(n,m)$ let $R_{\fff}(A, r)$
denote the 
minimum    
number of non-zero entries in a matrix $Z\in \fff^{n\times m}$ such that $\rank(A-Z)\leq r$. The function $R_{\fff}(A, \cdot)$ is called the
\emph{rigidity}   
of $A$ over $\fff$.
\end{definition}

Valiant~\cite{valiant} proved that if for some $\eps>0$ the sequence of matrices $A_n\in \fff^{n\times n}$ satisfies 
\begin{equation}\label{eq:Valiant-rigid}
 R_{\fff}(A_n, n/\log\log n)\geq n^{1+\eps}, 
\end{equation}
then the linear functions $x\mapsto A_n x$ cannot be computed by arithmetic circuits of size $O(n)$ and depth $O(\log n)$.
Following \cite{dvir-edelman},   
we say that a family of matrices $A_n$ is \emph{Valiant-rigid} if it satisfies Eq.~\eqref{eq:Valiant-rigid} for some $\eps >0$ and all sufficiently large $n$. By saying that a family $\mathcal{F}$ of matrices  in not Valiant-rigid we mean that none of the subsequences of matrices of $\mathcal{F}$ of increasing order is Valiant-rigid. 

The problem of constructing explicit Valiant-rigid matrices has been attacked for more than four decades, but still remains open (see the survey~\cite{survey}). Over this time a few candidates of Valiant-rigid families were proposed that included Hadamard matrices~\cite{pudlak-savicky, razborov}, circulants~\cite{codenotti2000some}, Discrete Fourier Transorm (DFT) matrices~\cite{valiant}, incidence matrices of projective planes over a finite field~\cite{valiant}.    

In 2017, Alman and Williams~\cite{alman-williams} proved, that,
contrary to expectations,  
the Walsh--Hadamard matrices are not Valiant-rigid over $\qqq$. 
Subsequently, most of other long-studied candidates for rigidity
were shown 
not to be  
Valiant-rigid. 
Dvir and Edelman~\cite{dvir-edelman} proved that 
$G$-circulants\footnote{For a finite abelian group $G$, a
\emph{$G$-circulant} is a $|G|\times |G|$ matrix of the form
$M(f)$ with entries $M(f)_{xy}=f(x-y)$ $(x,y\in G)$
where $f$ is any function with domain $G$.}
are not Valiant-rigid over $\fff_p$ for $G$ the additive group of $\fff_p^n$. 
Dvir and Liu~\cite{dvir-liu} proved that DFT matrices, circulant matrices, and more generally, $G$-circulant matrices for any abelian group $G$,  
are not Valiant-rigid over $\ccc$.
Moreover, as observed in~\cite{field-matters}, the results of Dvir and Liu imply
that the Paley-Hadamard matrices and the Vandemonde matrices with
a geometric progression as generators are not Valiant-rigid over
$\ccc$, and the incidence matrices of projective planes over finite fields
are not Valiant-rigid either over $\fff_2$ (contrary to Valiant's
suggestion~\cite{valiant}) or over $\ccc$.

Upper bounds on the rigidity of 
the Kronecker powers of a fixed matrix play an important role in these results. Indeed, the Walsh--Hadamard matrices are just the Kronecker powers of $H_2 = \left(\begin{matrix}
1 & 1\\
1 & -1
\end{matrix}\right)$. 
Furthermore, in order to show that DFT and circulant matrices are not Valiant-rigid, as the first step, Dvir and Liu~\cite{dvir-liu} prove that
generalized Walsh--Hadamard matrices (Kronecker powers of a DFT matrix)
are not Valiant-rigid. 

Hence,
one may expect that strong
upper bounds on the rigidity of  
Kronecker products
of small matrices  
may lead to
upper bounds on the rigidity of  
interesting new    
families of matrices, and so will further contribute to our intuition of where not to look for Valiant-rigid matrices.

Recently, Josh Alman~\cite{alman} proved that Kronecker products of
square matrices of any \emph{fixed} size  
are not Valiant-rigid. More precisely, he proved the following result. 

\begin{theorem}[Alman]\label{thm:Alman}
Given $d\ge 2$ and $\eps > 0$, there exists
$\gamma = \Omega\left(\dfrac{d\log d}{2^d}\cdot \dfrac{\eps^2 }{\log^2(1/\eps)}\right)$ such that the following holds for any sequence of matrices
$M_1, M_2, \ldots, M_k \in \fff^{d\times d}$.
Let $M = \bigotimes\limits_{i=1}^{k} M_i$ and $n=d^k$.
Then
 $R_{\fff}(M, n^{1-\gamma})\leq n^{1+\eps}$. 
\end{theorem}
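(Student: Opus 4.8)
The goal is to produce, for suitable parameters, a decomposition $M = L + E$ with $\rank L \le n^{1-\gamma}$ and $E$ having at most $n^{1+\eps}$ nonzero entries; then $R_{\fff}(M, n^{1-\gamma}) \le |\supp E| \le n^{1+\eps}$. This extends the probabilistic-polynomial strategy used by Alman--Williams for the Walsh--Hadamard matrices.

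First, normalizations. If some $M_i = 0$ then $M = 0$ and we are done, so assume all $M_i \ne 0$; since permuting the rows or columns of a factor $M_i$ merely permutes rows or columns of $M$, and rigidity is permutation-invariant, we may assume $M_i[1,1]\ne 0$ for every $i$. Let $R_i := M_i[1,1]^{-1}\,M_i[\cdot,1]\,M_i[1,\cdot]$ be the unique matrix of rank $\le 1$ agreeing with $M_i$ on the first row and the first column, and $Q_i := M_i - R_i$. Then $Q_i$ vanishes on the first row and first column, so $\supp Q_i \subseteq \{2,\dots,d\}^2$ and $\rank Q_i \le d-1$. Expanding,
\[
 M = \bigotimes_{i=1}^{k}(R_i + Q_i) = \sum_{S\subseteq[k]} \Bigl(\bigotimes_{i\in S}Q_i\Bigr)\otimes\Bigl(\bigotimes_{i\notin S}R_i\Bigr),
\]
where the $S$-summand has rank at most $(d-1)^{|S|}$ and support contained in $\{(x,y): x_i\ne 1 \text{ and } y_i\ne 1 \text{ for all } i\in S\}$. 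A single global truncation of this sum does not suffice: to make $\rank L$ a true power of $d$ below $n$ one must keep only $|S|$ up to about $(1-1/d)k$, but then the discarded terms already cover a positive constant fraction of the $n\times n$ grid, far more than $n^{1+\eps}$.

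The remedy is a two-scale truncation. Partition $[k]$ into $g=k/b$ blocks of a common size $b$ (pad with $1\times 1$ factors if needed) and let $N_\ell$ be the $d^b\times d^b$ Kronecker product over block $\ell$. Inside each block, truncate the expansion at a threshold $s$: let $\wt R_\ell$ sum the within-block terms using at most $s$ of that block's $Q_i$'s and $\wt Q_\ell$ the rest, so $\rank\wt R_\ell \le \sum_{j\le s}\binom bj (d-1)^j$, while the fraction of entries of $N_\ell$ lying in $\supp\wt Q_\ell$ equals $\Pr[\operatorname{Bin}(b,(1-1/d)^2) > s]$ (a coordinate satisfies $x_i\ne 1$ and $y_i\ne 1$ with probability $(1-1/d)^2$), which Chernoff makes exponentially small in $b$ once $s>(1-1/d)^2 b$. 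Then multiply the block identities $N_\ell = \wt R_\ell + \wt Q_\ell$, expand once more over the set $T\subseteq[g]$ of blocks contributing their $\wt Q$-part, and let $L$ be the sum of the terms with $|T|\le u$ and $E$ the rest. Bounding $\rank\wt Q_\ell\le d^b$ crudely gives $\rank L \le \sum_{j\le u}\binom gj (d^b)^{j}(\rank\wt R_\ell)^{g-j}$ and $|\supp E| \le (d^b)^{2g}\sum_{j>u}\binom gj\delta^{j}$, where $\delta$ is the per-block error density.

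One then balances $b$, $s$, and $u$, and here lies the crux. The window $s/b\in\bigl((1-1/d)^2,\,1-1/d\bigr)$ is forced (below the left end $\wt Q_\ell$ is not sparse, above the right end $\wt R_\ell$ is not low-rank), and over this entire window the Kullback--Leibler exponent $\operatorname{KL}(s/b \,\|\, (1-1/d)^2)$ is only $O(1/d)$, so $\delta$ is at best $e^{-\Theta(b/d)}$. This small per-block savings is the bottleneck: it dictates how large $u$ must be so that the factor $(d^b)^{2g}=n^{2}$ in the support bound is driven down to $n^{1+\eps}$, which in turn --- once $b$ is taken large in terms of $d$ and $1/\eps$ so the entropy terms $\binom gu$ do not spoil the rank bound --- is what forces $\gamma$ to be exponentially small in $d$. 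Carrying the binomial-tail exponents through the two truncations (and applying the whole construction recursively to the $d^b\times d^b$ blocks in place of the crude ``keep at most $s$ of the $Q_i$'s'' step, which is needed for the bounds to actually close) yields an admissible $\gamma=\gamma(d,\eps)>0$ of the claimed order. The step I expect to be hardest --- and the one this paper sharpens --- is precisely this within-block estimate: lowering $\delta$ for a given rank budget for $\wt R_\ell$ is what governs the final trade-off, and a better low-rank-plus-sparse decomposition of a Kronecker product of $b$ small matrices is what should turn the exponential dependence on $d$ into a polynomial one.
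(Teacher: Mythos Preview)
This statement is Alman's theorem, quoted as prior work; the paper does not give its own proof of it. What the paper proves are the strictly stronger Theorems~\ref{thm:unequal-intr-no-assum} and~\ref{thm:main-intr} (equivalently Theorem~\ref{thm:equal-sizes} for the equal-size case), with $\gamma$ polynomial rather than exponential in $1/d$; Alman's bound is then a weak corollary.

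Your outline is not the paper's argument at all --- it is a sketch of (something close to) Alman's original approach as the paper summarizes it in Section~1.5: an additive rank-$1$/sparse splitting $M_i=R_i+Q_i$ followed by block-truncation and recursion, with the $2^{-d}$ in $\gamma$ coming from the recursion depth. And as you yourself say, the argument does not close without that recursion, which you never carry out; the final paragraph is commentary rather than a derivation of the claimed $\gamma$. So as a proof it is incomplete.

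The paper's route is entirely different and avoids recursion. Each $M_i$ is written \emph{multiplicatively} as a product of $2d-2$ ``V-matrices'' $G_d(x)$ together with permutation and diagonal factors (Observation~\ref{obs:factor}, Corollary~\ref{cor:decomp}). Row--column rigidity is submultiplicative under matrix product (Lemma~\ref{lem:rigidity-prod}) and trivial for monomial matrices, so the problem reduces in one step to bounding $R^{rc}_{\fff}(G_{d,k}(X),\cdot)$ (Lemma~\ref{lem:rigid-reduct-Gd}). For a Kronecker product of V-matrices the nonzeros concentrate in rows/columns whose index has an atypical number of coordinates equal to $d$; a single Bernstein bound shows those rows/columns are few, and the remainder is sparse (Theorem~\ref{thm:Gdnonrigid}). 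This gives $\gamma=\Theta\bigl(\eps^2/(d\log d\,\log^2(1/\eps))\bigr)$ directly --- no two-scale truncation and no induction on $d$. The difference that matters is structural: your additive splitting forces you to beat a density $\Theta(1)$ per factor and then recurse, whereas the multiplicative V-matrix decomposition makes each factor so sparse that one concentration estimate suffices.
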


\subsection{Our results: improved bounds, non-uniform sizes}

In this note we improve Alman's result in two directions.
First, we do not require the matrices $M_i$ to have equal size.
Second, we reduce $1/\gamma$ from exponential in $d$
to roughly $d^{3/2}/\eps^2$ (where $d$ is the maximum
size of the matrices), and to nearly linear (roughly $d/\eps$)
for matrices $M_i$ of sizes within a constant
factor of each other.

This means that for matrices of equal size we get a meaningful reduction of the
rank already for $k= \wto_{\eps}(d)$,
in contrast to Alman's result that kicks in when
$k$ reaches about $2^d$.  

\begin{theorem}\label{thm:unequal-intr-no-assum}
Given $d\ge 2$ and $\eps > 0$, there exists
$\gamma = \Omega\left(\dfrac{1}{d^{3/2}\log^3(d)}\cdot\dfrac{\eps^2}{\log^2(1/\eps)}\right)$ such that the following holds for any sequence of matrices
$M_1, M_2, \ldots, M_k$ where $M_i \in \fff^{d_i\times d_i}$
for some $d_i\le d$.
Let $M = \bigotimes\limits_{i=1}^{k} M_i$ and
$n = \prod\limits_{i=1}^{k} d_i$.
If $n\geq d^{1/\gamma}$, then
$R_{\fff}(M, n^{1-\gamma})\leq n^{1+\eps}$. 
\end{theorem}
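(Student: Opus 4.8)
The plan is to produce, for a suitable small $\gamma$, a decomposition $M=L+Z$ where $\rank(L)\le n^{1-\gamma}$ and $Z$ has at most $n^{1+\eps}$ nonzero entries, built entirely from the multiplicative form $M_{\vec x\vec y}=\prod_{i=1}^{k}(M_i)_{x_iy_i}$. Before doing this I would first dispose of the variation in the sizes: greedily group consecutive factors into blocks $N_j=\bigotimes_{i\in B_j}M_i$, appending factors to the current block until its order $D_j$ first exceeds $\sqrt d$, so that (since every $d_i\le d$) one gets $D_j\in[\sqrt d,\,d^{3/2}]$ for all $j$, hence $M=\bigotimes_j N_j$ with all block orders within a factor $d$ of each other. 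Thus it suffices to prove the statement when the factor sizes are comparable, with the effective maximal size raised from $d$ to $d^{3/2}$; for the special case in which the original $d_i$ are already within a constant factor of one another this grouping is unnecessary and the effective size stays $\Theta(d)$, which is where the cleaner bound $1/\gamma=O(d/\eps^2)$ will come from.

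For the core (comparable sizes, say all in $[D,2D]$), fix for each $i$ a target rank $\rho_i<d_i$ and write $M_i=L_i+S_i$ with $\rank(L_i)\le\rho_i$ and $S_i=M_i-L_i$ as sparse as possible: a rank-$\rho_i$ matrix can be forced to agree with $M_i$ on at least $\rho_i(2d_i-\rho_i)$ entries (match $\rho_i$ rows exactly, then one further entry in each remaining row), and in the extreme $\rho_i=d_i-1$ one entry change suffices, so there is a clean tradeoff between $\rho_i$ and $\sigma_i:=\|S_i\|_0$. Expanding multilinearly,
\[
M=\bigotimes_{i=1}^{k}(L_i+S_i)=\sum_{T\subseteq[k]}\Big(\bigotimes_{i\in T}S_i\Big)\otimes\Big(\bigotimes_{i\notin T}L_i\Big),
\]
where the $T$-term has rank at most $\prod_{i\in T}\rank(S_i)\cdot\prod_{i\notin T}\rho_i$ and support of size at most $\prod_{i\in T}\sigma_i\cdot\prod_{i\notin T}d_i^{2}$.

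Next I would split this sum at a threshold on the weighted size $w(T)=\sum_{i\in T}\log_2(d_i^2/\sigma_i)$ — light $T$ (small $w(T)$) go into $L$, heavy $T$ into $Z$, the weighting making coordinates with very sparse $S_i$ cheap. The rank bound is then a truncation of $\sum_{T}\prod_{i\in T}\rank(S_i)\prod_{i\notin T}\rho_i$; with $\rho_i=d_i-O(1)$ the full sum is $\prod_i d_i=n$, and the truncated sum is $\le n^{1-\gamma}$ as long as $\gamma$ is small and the threshold sits below the governing binomial peak. For $\|Z\|_0$: a fixed position $(\vec x,\vec y)$ lies in the support of a heavy $T$-term only if the set of coordinates $i$ with $(x_i,y_i)\in\supp(S_i)$ is itself heavy; treating $(\vec x,\vec y)$ as uniform, this is a large-deviation event for a sum of independent indicators with means $\sigma_i/d_i^2$, so a Chernoff/Bernstein estimate gives $\|Z\|_0\le n^2\cdot\Pr[\text{sum exceeds the threshold}]$. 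Balancing these two bounds against $n^{1-\gamma}$ and $n^{1+\eps}$, choosing the $\rho_i$ and the threshold, and carrying the actual sizes $d_i$ (not the crude $d$) through the computation, produces $\gamma$; the hypothesis $n\ge d^{1/\gamma}$ is exactly the condition that places the threshold in the window where both estimates are nontrivial, and the $\eps^2/\log^2(1/\eps)$ shape is the usual cost of the concentration step.

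\textbf{The main obstacle.} The naive version of this scheme does not beat Alman's $1/\gamma\approx 2^d/\eps^2$ — in fact it risks giving no positive $\gamma$ at all — because the binomial overhead $\binom{k}{\le s}$ in the rank bound and the per-coordinate concentration fight each other: each indicator has mean at best $\Theta(1/d^2)$, which is far from negligible, so one cannot simultaneously make the light part genuinely low rank and the heavy part genuinely sparse. Beating this requires a better local step, and the three candidates I would pursue are: a \emph{randomized} rank reduction of each $M_i$ in the spirit of probabilistic rank, driving the effective per-coordinate error well below $1/d^2$; exploiting that after permuting rows and columns $M_i$ is within a small sparse correction of a block-triangular ``staircase'', which is already low rank on a constant fraction of positions; or approximating grouped blocks $\bigotimes_{i\in B}M_i$ of size about $\sqrt d$ as single units far more efficiently than coordinate-by-coordinate — it is the optimal granularity of this last grouping that should produce the $d^{3/2}$. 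I expect essentially all the difficulty to be concentrated in making this improved local decomposition work together with the weighted concentration bound; once the right ``comparable sizes'' statement is in hand, the grouping step handling unequal $d_i$ is routine bookkeeping.
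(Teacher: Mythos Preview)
Your plan has a genuine gap, and you have already put your finger on it: the additive per-factor splitting $M_i=L_i+S_i$ followed by a multilinear expansion over $T\subseteq[k]$ is exactly the classical route, and as you say, the per-coordinate indicator means $\Theta(1/d)$ (or $\Theta(1/d^2)$) are too large for the rank-of-light-part and sparsity-of-heavy-part bounds to close with $\gamma$ polynomial in $1/d$. None of your three proposed fixes is the one the paper actually uses, and I do not see how to push any of them through.

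The paper's key idea is \emph{multiplicative}, not additive. Every $d\times d$ matrix factors as a product
\[
A \;=\; Q_1\,G_d(y_1)^T\cdots G_d(y_{d-1})^T\cdot W\cdot G_d(x_{d-1})\cdots G_d(x_1)\,P_1,
\]
where the $P_j,Q_j$ are permutations, $W$ is diagonal, and each $G_d(x)$ is the ``V-matrix'' $\bigl(\begin{smallmatrix}I_{d-1}\\0\end{smallmatrix}\ x\bigr)$, which has at most two nonzeros per row and one dense column. Tensoring and using $(A_1\otimes B_1)(A_2\otimes B_2)=(A_1A_2)\otimes(B_1B_2)$ expresses $M=\bigotimes_i M_i$ as a product of $2d-2$ matrices of the form $G_{\vec d}(X)=\bigotimes_i G_{d_i}(X_i)$ together with monomial matrices. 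Since row--column rigidity is submultiplicative under matrix product, $R^{rc}(AB,r+s)\le R^{rc}(A,r)\,R^{rc}(B,s)$, the whole problem reduces to bounding $R^{rc}(G_{\vec d}(X),\cdot)$. For this single, very structured matrix the densest rows and columns are precisely those indexed by tuples with few (respectively many) coordinates equal to $d_i$; a Bernstein bound on the number of such tuples gives the low-rank part, and a direct count on the remaining rows/columns gives the sparse part. This is where the clean $\gamma=\Theta(\eps^2/(d\log d\log^2(1/\eps)))$ for comparable sizes comes from, and the extra $d^{1/2}$ in the general case comes from exactly your bin-packing step (the paper packs into $[\sqrt d,d]$ rather than $[\sqrt d,d^{3/2}]$, but the effect is the same).

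So the missing ingredient is not a better local additive decomposition of $M_i$, nor a cleverer randomization, but the observation that one should decompose each $M_i$ as a \emph{product} of $O(d)$ extremely sparse matrices and then exploit submultiplicativity of $R^{rc}$ under matrix multiplication. Your ``staircase'' intuition is in the right neighborhood, but the actual mechanism---Gaussian-elimination-style factorization into V-matrices plus the product lemma for $R^{rc}$---is what makes the quantitative bound work.
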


If the $d_i$ are within a constant factor of each other, we
obtain the following stronger result.

\begin{theorem}  \label{thm:unequal-intr}  \label{thm:main-intr}
Given $d\ge 2$, $\eps > 0$, and a constant $c > 0$, there exists
$\gamma = \Omega_c\left(\dfrac{1}{d\log d}\cdot\dfrac{\eps^2}{\log^{2}(1/\eps)}\right)$ 
such that the following holds for any sequence of matrices
$M_1, M_2, \ldots, M_k$ where $M_i \in \fff^{d_i\times d_i}$
and $cd \le d_i\le d$.
Let $M = \bigotimes\limits_{i=1}^{k} M_i$ and
$n = \prod\limits_{i=1}^{k} d_i$.
If $n\geq d^{1/\gamma}$, then
$R_{\fff}(M, n^{1-\gamma})\leq n^{1+\eps}$. 
\end{theorem}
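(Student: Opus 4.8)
The plan is to reduce Theorem~\ref{thm:main-intr} to the Walsh--Hadamard-style rigidity machinery of Alman and Williams by way of the ``tensoring trick'': group the factors $M_1,\dots,M_k$ into blocks and treat each block as a single matrix of controlled size, then apply a rank-reduction step to each block and tensor the results. Concretely, I would pick a blocking parameter $t$ (to be optimized, roughly $t\sim \log(1/\eps)/\eps$ times a factor depending on $d$) and partition $\{1,\dots,k\}$ into $\lceil k/t\rceil$ consecutive blocks, each containing at most $t$ of the $M_i$. Since $cd\le d_i\le d$, each block matrix $N_j=\bigotimes_{i\in B_j}M_i$ has size $s_j=\prod_{i\in B_j}d_i$ satisfying $(cd)^{|B_j|}\le s_j\le d^{|B_j|}$, so all the $s_j$ lie in a window $[D^{c'},D]$ with $D=d^t$ and $c'$ depending only on $c$; this near-uniformity of block sizes is exactly what makes the constant-factor hypothesis pay off.

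The core is a single-matrix lemma of the following shape, which I expect the paper to have proved (or which I would prove here): \emph{for any $N\in\fff^{s\times s}$ and any parameter $\ell$, one can change at most $s^2\cdot(\text{small})$ entries of $N\otimes N\otimes\cdots$ — or rather, change few entries per ``coordinate'' — to bring the rank down by a factor like $(1-\delta)$ where $\delta\sim 1/\log s$ or $\delta\sim \ell/s$.} The mechanism is the standard one: write $N=L+S$ where $S$ is sparse and $\rank(L)\le (1-\delta)s$ is impossible in general, so instead one uses the polynomial/truncation approach of Alman--Williams: expand a power of $(L+S)$, discard the high-weight terms in $S$ (few entries, by a Chernoff/counting bound on the number of coordinates with many ``errors''), and bound the rank of what remains by a sum of binomial coefficients $\sum_{i\le \alpha k}\binom{k}{i}s^i$, which is $s^{(1-\gamma)k}$ for suitable $\alpha,\gamma$. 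Applying this with the blocks as the ``letters'' (alphabet size up to $D=d^t$, word length $k'=\lceil k/t\rceil$), and using $n=D^{k'}$ up to the slack from non-uniform block sizes, yields $\rank\le n^{1-\gamma}$ after changing $n^{1+\eps}$ entries, provided $n\ge D^{1/\gamma}=d^{t/\gamma}$, i.e.\ $n\ge d^{1/\gamma'}$ for the stated $\gamma'$.

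The optimization of parameters is where the quantitative improvement over Theorem~\ref{thm:Alman} comes from, and it is the step I would be most careful about. Alman's $2^{-d}$ loss arises because he treats each $d\times d$ factor individually and the rank-reduction-per-factor he can guarantee is only $\sim d^{-1}2^{-d}$-ish; by blocking $t$ factors together, the ``cost'' $2^{-d}$ becomes $2^{-D}=2^{-d^t}$ per block but is amortized over a block contributing $\log D=t\log d$ to $\log n$, and more importantly the relevant bound is not $2^{-D}$ but the genuinely better $\log(D)^{-1}\sim (t\log d)^{-1}$ bound that holds when the alphabet is large relative to the error rate. Balancing the two contributions — the entry budget $n^{1+\eps}$ forces the per-coordinate error count to be $\lesssim D^{\eps}$, which via the Chernoff bound caps the achievable $\gamma$ at roughly $\eps^2/(\log D\cdot\log^2(1/\eps)) = \eps^2/(t\log d\cdot\log^2(1/\eps))$, while the constraint $n\ge d^{t/\gamma}$ wants $t$ small — gives the optimal $t=\Theta(1)$ (depending on $c$) and hence $\gamma=\Omega_c\bigl(\tfrac{1}{d\log d}\cdot\tfrac{\eps^2}{\log^2(1/\eps)}\bigr)$.

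The main obstacle, as just indicated, is making the single-block rank-reduction lemma yield a reduction factor $\delta$ that depends on $s$ like $1/\log s$ rather than like $1/s$ or $2^{-s}$, while keeping the number of altered entries at $s^{1+O(\eps)}$ per coordinate; this is precisely the place where the near-equality $cd\le d_i\le d$ is used, since without it some blocks would be forced to be much smaller, their reduction factor would dominate, and we would be back to a $d^{-3/2}$-type or worse bound (which is what Theorem~\ref{thm:unequal-intr-no-assum} settles for). Everything else — the blocking, the tensoring of the rank-reduced blocks, the bookkeeping that $\prod_j s_j=n$ and the altered-entry counts multiply correctly under $\otimes$, and the final translation of ``low rank after few changes per block'' into the statement $R_\fff(M,n^{1-\gamma})\le n^{1+\eps}$ — is routine given the lemma, and I would relegate those computations to the body of the proof.
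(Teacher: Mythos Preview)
Your plan has a genuine gap at its core: the ``single-matrix lemma'' you need does not exist in the generality you require, and without it the argument is circular. You want, for each block matrix $N_j\in\fff^{s_j\times s_j}$ (which is itself an \emph{arbitrary} matrix, being a Kronecker product of arbitrary $M_i$), a decomposition $N_j=L_j+S_j$ with $\rank(L_j)\le(1-\delta)s_j$ and $S_j$ sparse. But an arbitrary matrix admits no such decomposition---a generic matrix is rigid. The Alman--Williams polynomial method you invoke works only because Walsh--Hadamard entries are evaluations of a \emph{specific} polynomial whose low-degree truncation furnishes the low-rank part; no such structure is available for an arbitrary $N_j$. Your own parameter optimization exposes the circularity: you conclude $t=\Theta(1)$, so the blocks have size $d^{O(1)}$ and you are back to bounding the rigidity of a Kronecker product of $\Theta(k)$ arbitrary matrices of bounded size---exactly the statement to be proved.

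The paper's proof is entirely different and avoids any per-factor rank reduction. The key observation is that every $A\in\fff^{d\times d}$ factors as a \emph{matrix product} of at most $2d-2$ ``V-matrices'' $G_d(x)$ (identity on the first $d-1$ columns, arbitrary last column), interleaved with permutation and diagonal matrices (Corollary~\ref{cor:decomp}). Since row-column rigidity is submultiplicative under matrix product (Lemma~\ref{lem:rigidity-prod}), it suffices to bound $R^{rc}_\fff\bigl(G_{\vec d}(X),r\bigr)$ for $G_{\vec d}(X)=\bigotimes_i G_{d_i}(X_i)$. Such a Kronecker product of V-matrices is sparse in a structured way: indexing rows and columns by tuples in $[\vec d\,]$, the nonzeros concentrate in the few columns whose index has many coordinates equal to $d_i$ and the few rows with few such coordinates. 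A Bernstein bound on the score $s(x)=\sum_i\mathbf{1}[x_i=d_i]$ shows that removing those rows and columns (rank cost $n^{1-\gamma'}$) leaves at most $n^{\eps/(2d)}$ nonzeros per row and column (Theorem~\ref{thm:unequal}). The hypothesis $cd\le d_i\le d$ enters only to force the parameters $K,L$ of that theorem to be $O_c(1)$, yielding the stated $\gamma$ after absorbing the $2d-2$ factors from the V-matrix decomposition (Corollary~\ref{cor:linear-gap}).
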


In the above theorem, the dependence of $\gamma$ on $c$ is nearly linear and the explicit formula can be found in Corollary~\ref{cor:linear-gap}.

The key strength of our results is that we do not need to assume
uniform size of the 
matrices participating in the Kronecker product.
Previous approaches depended on uniform size because of their reliance
either on a polynomial method, or on an induction on the size of the matrix. As an applications of the bound for matrices of non-uniform size we show that our results significantly expand the class of Hadamard matrices that are
 known not to be Valiant-rigid. We show that the Kronecker
 products of Paley-Hadamard matrices and Hadamard matrices of bounded size are not Valiant-rigid (see Sec.~\ref{sec:Hadamard-intr}).

Another strength of our improvement is that
our bounds are sufficiently strong to be fed into the machinery developed
by Dvir and Liu~\cite{dvir-liu} for matrices of ``well-factorable'' size.
Hence, we expect that this improvement might
lead to further applications. 

We note, that our
bound on $\gamma$ in Theorem~\ref{thm:unequal-intr} 
matches the Dvir--Liu bounds 
for Kronecker powers of specific classes of matrices, such as
the generalized Walsh--Hadamard matrices and
DFT matrices\footnote{The DFT (Discrete Fourier Transform)
matrix of a finite abelian group $G$ is
the character table of $G$.}  
of direct products of small abelian groups.

We also note that our upper bounds, similarly to upper bounds
in recent work \cite{alman-williams, dvir-edelman, dvir-liu, alman},
apply to a stronger notion of rigidity, called
\emph{row-column rigidity}.

\begin{definition}[Row-column rigidity] For a matrix $A\in \fff^{n\times n}$ and a target rank $0\leq r\leq n$, let $R^{rc}_{\fff}(A, r)$ be the minimal $t$ for which there exists $Z\in \fff^{n\times n}$ such that $\rank(A-Z)\leq r$ and every row and column of $Z$ has at most $t$ non-zero entries.
\end{definition}

In Theorems~\ref{thm:Alman},~\ref{thm:unequal-intr-no-assum},
and~\ref{thm:unequal-intr} the conclusion can be replaced
by   
$R^{rc}_{\fff}(M, n^{1-\gamma})\leq n^{\eps}$. 
Clearly, the latter statement is stronger, as for any
$A\in \fff^{n\times n}$ the inequality
$R_{\fff}(A, r)\leq n\cdot R_{\fff}^{rc}(A, r)$ holds. The row-column versions of Theorems~\ref{thm:unequal-intr-no-assum}, 
and~\ref{thm:unequal-intr} are stated and proved as Theorem~\ref{thm:row-column-main-noassum} and Corollary~\ref{cor:dtod2-bound}.

\subsection{The field}

We should point out that    Theorems~\ref{thm:Alman}-\ref{thm:unequal-intr}  make no  
assumption about the field $\fff$, and they use
elements of $\fff$ for the rank reduction.
This is in contrast to the results of Dvir and Liu
who require a field extension to achieve their
rank reduction.

If the field $\fff$ is the field of definition of the matrix $A$,
in \cite{field-matters} we call the corresponding rigidity function
$R_{\fff}(A,.)$ the \emph{strict rigidity} of $A$.  If $\fffbar$
denotes the algebraic closure of $\fff$ then we call $R_{\fffbar}(A,.)$
the \emph{absolute rigidity} of $A$ because, as shown in \cite{field-matters},
this gives the smallest possible rigidity among all extension fields.
A gap between these two quantities is
demonstrated  
in \cite{field-matters}.
Note that  
an upper bound on strict rigidity is a stronger statement than
the same upper bound on absolute rigidity.  

In this terminology,
Dvir and Liu give upper bounds on absolute rigidity, whereas
Alman's result and our results give upper bounds on strict rigidity.

\subsection{Application of our results: rigidity upper bounds for Hadamard matrices}\label{sec:Hadamard-intr}

We recall that an \emph{Hadamard matrix} is a square matrix whose entries are $+1$ and $-1$ and whose rows are mutually orthogonal. In addition to many other interesting properties, it was long believed that one can find a family of Valiant-rigid matrices among Hadamard matrices. 

Contrary to expectations, two of the most well-studied families of Hadamard matrices were recently shown to be not Valiant-rigid. In 2017, Alman and Williams~\cite{alman-williams} proved that Walsh-Hadamard matrices are not strictly Valiant-rigid. As a corollary to~\cite{dvir-liu}, in~\cite{field-matters} it was shown that Paley-Hadamard matrices are not absolutely Valiant-rigid. These results inspired the following conjecture.

\begin{conjecture}[Babai]
The family of known Hadamard matrices is not strictly Valiant-rigid.
\end{conjecture}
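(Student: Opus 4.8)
Babai's conjecture concerns an informal and ever-growing class of matrices, so I do not expect a complete proof; the plan is instead to reduce it to non-rigidity statements about the standard construction schemes for Hadamard matrices and to settle those schemes that produce families expressible as Kronecker products of bounded-size blocks --- which is exactly the regime covered by Theorem~\ref{thm:unequal-intr-no-assum}.

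First I would organize the known constructions of infinite families into: (i) Sylvester/Walsh matrices $H_2^{\otimes k}$; (ii) Kronecker products $\bigotimes_i H_i$ of a fixed finite list of Hadamard matrices; (iii) Paley type~I and type~II matrices, of order $q+1$ and $2(q+1)$ respectively for prime powers $q$; (iv) Williamson-type, array, and difference-set constructions; and (v) the closure property that a Kronecker product of Hadamard matrices is again Hadamard. Since every known infinite family of unbounded order is assembled from finitely many ``building blocks'' of types (iii)--(iv) via (v), the conjecture splits into the case where all building blocks are drawn from a fixed set of bounded order and the case where the blocks are allowed to grow.

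The bounded case follows at once from the present paper. If $M=\bigotimes_{i=1}^{k}M_i$ with each $M_i$ a Hadamard matrix of order $d_i\le d$, then $M$ satisfies the hypotheses of Theorem~\ref{thm:unequal-intr-no-assum} verbatim: its entries $\pm1$ lie in any field $\fff$, in particular $\fff=\qqq$. Hence for every $\eps>0$ there is $\gamma>0$ so that, once $k$ is large enough that $n=\prod_i d_i\ge d^{1/\gamma}$, we get $R^{rc}_{\fff}(M,n^{1-\gamma})\le n^{\eps}$, whence no subsequence of increasing order is strictly Valiant-rigid. In particular, the family of all Kronecker products of Paley-Hadamard matrices of bounded order (optionally mixed with any Hadamard matrices of bounded order) is not strictly Valiant-rigid. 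Here the non-uniform-size generality of Theorem~\ref{thm:unequal-intr-no-assum} is the essential point: the admissible Paley orders $q+1$ are $4,8,12,20,24,\dots$, not powers of one $d$, so Alman's Theorem~\ref{thm:Alman} does not apply; and unlike the Dvir--Liu route used in~\cite{field-matters} for individual Paley families, this bound is on strict, not merely absolute, rigidity. When the blocks have orders within a constant factor of one another, $\gamma$ can be sharpened via Theorem~\ref{thm:unequal-intr}.

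The hard part is the unbounded case, together with the strict-versus-absolute distinction. A single Paley family $S_q$ of growing order is presently known to be non-rigid only over $\ccc$ --- through its circulant core and the Dvir--Liu $G$-circulant bound --- and not over $\qqq$; and the Kronecker machinery here is useless once the block size $d$ grows, since $\gamma$ carries a factor $d^{-3/2}$ (Alman's a factor $2^{-d}$), which pushes the threshold $d^{1/\gamma}$ past every polynomial in $n$. A natural route would be to establish a ``strict'' analogue of the Dvir--Liu bound, at least for $\zzz/q$-circulants with $q$ prime (which cover the Paley cores), and then feed it into the non-uniform Kronecker framework of this paper. But a strict analogue of Dvir--Liu is itself open, and in view of the strict/absolute rigidity gap exhibited in~\cite{field-matters} it may fail in its strongest form; moreover, genuinely new array constructions of unbounded order (Williamson matrices and the like) would each still require their own non-rigidity argument. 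So the conjecture remains open in full, and what the present results contribute is its resolution on the large, explicitly describable sub-family of bounded-block Kronecker products.
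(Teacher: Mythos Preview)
The statement is a conjecture, and the paper does not prove it; you correctly recognize this and aim only at partial progress. Your bounded-block argument is sound: Theorem~\ref{thm:unequal-intr-no-assum} indeed shows that any family of Kronecker products of Hadamard matrices of uniformly bounded order is not strictly Valiant-rigid, and this is a genuine consequence of the non-uniform-size generality that Alman's Theorem~\ref{thm:Alman} lacks.

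However, you undersell what the paper actually establishes toward the conjecture. You treat the case of \emph{unbounded} Paley blocks as entirely open and sketch it only as a hypothetical (``feed it into the non-uniform Kronecker framework of this paper''), contingent on a strict Dvir--Liu analogue. But the paper carries this out over~$\ccc$: Theorem~\ref{thm:bounded-and-non-rigid-intr} gives a general mechanism that combines the Kronecker-of-bounded-blocks bound with any family that is already sufficiently non-rigid, and Theorem~\ref{thm:main-Hadamard} applies it with the known absolute non-rigidity of Paley--Hadamard matrices (Theorem~\ref{thm:paley-hadamard-bound}) to conclude that Kronecker products of arbitrarily large Paley--Hadamard matrices together with bounded-size Hadamard matrices are not \emph{absolutely} Valiant-rigid. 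So the paper's contribution toward Babai's conjecture is strictly larger than the bounded-block sub-family you isolate; what remains open is exactly the strict-versus-absolute gap you identify for the Paley blocks, plus the other unbounded constructions (Williamson, etc.) that neither you nor the paper address.
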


We mention that in addition to infinite families, new classes of Hadamard matrices arise as Kronecker products of a
steadily growing starter set of small
Hadamard matrices  with other known Hadamard matrices (see, e.g., surveys~\cite{hadamard-survey2, hadamard-survey1}). Indeed, note that if $H_1$ and $H_2$ are Hadamard matrices, then $H_1\otimes H_2$ is an Hadamard matrix as well.

As an application of our results for Kronecker products of matrices of non-uniform size we further expand the family of Hadamard matrices that are known not to be Valiant-rigid. 

\begin{theorem}\label{thm:main-Hadamard}
Let $\mathcal{F}_0$ be the family of Paley-Hadamard matrices and Hadamard matrices of bounded size. Let $\mathcal{F}$ be the family of all matrices that can be obtained as Kronecker products of some matrices from $\mathcal{F}_0$. Then $\mathcal{F}$ is not absolutely Valiant-rigid.
\end{theorem}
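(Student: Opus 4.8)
The plan is to reduce the statement to the uniform claim that for every $\eps>0$ there is $\gamma=\gamma(\eps)>0$ with $R^{rc}_{\fffbar}(M,n^{1-\gamma})\le n^{\eps}$ for every $M\in\mathcal F$ of sufficiently large order $n$. This suffices: it gives $R_{\fffbar}(M,n^{1-\gamma})\le n^{1+\eps}$, and since $n^{1-\gamma}\le n/\log\log n$ for all large $n$, no subsequence of $\mathcal F$ of increasing order can satisfy~\eqref{eq:Valiant-rigid} for any $\eps>0$ --- which is exactly what it means for $\mathcal F$ not to be absolutely Valiant-rigid. So fix $\eps>0$ and write a given matrix of $\mathcal F$ as $M=\bigotimes_{i=1}^{s}H_i\otimes\bigotimes_{j=1}^{t}B_j$, where each $H_i$ is a Paley--Hadamard matrix of order $q_i+1$ (with $q_i=p_i^{e_i}$ a prime power) and each $B_j$ is a Hadamard matrix of order at most the fixed bound $D$.

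It is tempting to handle the two kinds of factors separately --- the $B_j$'s by Theorem~\ref{thm:unequal-intr-no-assum} and the $H_i$'s by the Dvir--Liu results used in~\cite{field-matters} to show that Paley--Hadamard matrices are not absolutely Valiant-rigid --- and then glue the two rank reductions along the Kronecker product. This fails directly: the low-rank part produced by either reduction is essentially dense, so tensoring it against the sparse part of the other reduction destroys the row and column sparsity. Instead one performs a single rank reduction on all of $M$ at once. Each Paley factor contributes a truncated tensor decomposition: after deleting its all-ones border row and column, the core of $H_{q_i+1}$ is the $\fff_{q_i}^{+}$-circulant $\big(\psi_i(x-y)\big)_{x,y}$ for a sign function $\psi_i$, and --- at least when $q_i$ is a power of a bounded prime, the remaining cases being covered by the absolute non-rigidity of Paley--Hadamard matrices from~\cite{field-matters} --- the Dvir--Liu machinery decomposes it through the characters of $\fff_{q_i}^{+}\cong\zzz_{p_i}^{e_i}$, exactly as for DFT matrices of direct products of small abelian groups. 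Each bounded factor $B_j$ contributes the truncated decomposition furnished by the proof of Theorem~\ref{thm:unequal-intr-no-assum}. The total truncation error is then bounded by a single concentration estimate over all $s+t$ factors. The upshot is that $M$, up to a row/column-sparse matrix, becomes a Kronecker product of matrices of pairwise-different orders --- the small circulant/DFT blocks arising from the $\fff_{q_i}^{+}$ together with the $B_j$ --- each of order $\polylog(n)$; applying (the row-column version of) Theorem~\ref{thm:unequal-intr-no-assum} to this product with $d=\polylog(n)$, whose hypothesis $n\ge d^{1/\gamma}$ is met, yields the required $\gamma=\gamma(\eps)>0$.

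The step I expect to be the main obstacle is this joint truncation: the tensor factors it must handle --- small DFT blocks, small Paley cores, and bounded Hadamard matrices --- have genuinely unequal orders, and the single Chernoff-type bound on the accumulated error must tolerate that heterogeneity. It is exactly here that the equal-size bounds of Alman (Theorem~\ref{thm:Alman}) and of the earlier polynomial-method results cannot be used and the non-uniform-size statement of Theorem~\ref{thm:unequal-intr-no-assum} becomes essential; this is precisely why this application is out of reach of~\cite{alman}. Two further points need care: the all-ones border rows and columns of the $H_i$ must be propagated through the Kronecker product without their contributions accumulating past the $n^{1-\gamma}$ budget --- handled along the lines of the single-matrix case in~\cite{field-matters} --- and one must check that the Dvir--Liu truncation stays valid after tensoring with the $B_j$, which are neither circulant nor attached to an abelian group; this last point is intuitively harmless because the $B_j$ have bounded order, but it needs verifying.
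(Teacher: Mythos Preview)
Your dismissal of the ``separate handling'' approach is the key error. You write that tensoring the low-rank part of one reduction against the sparse part of the other ``destroys the row and column sparsity.'' But the cross term $L_A\otimes S_B$ need not be sparse: it is \emph{low-rank}, with $\rank(L_A\otimes S_B)\le r_A\cdot m$. Putting both cross terms into the rank budget and only $S_A\otimes S_B$ into the sparsity budget gives exactly
\[
R^{rc}_{\fff}(A\otimes B,\; r_A m + r_B n)\ \le\ R^{rc}_{\fff}(A,r_A)\cdot R^{rc}_{\fff}(B,r_B),
\]
which is Dvir--Liu's tensor lemma (Lemma~\ref{lem:ten-prod-bound} here). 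So the modular route does not fail; one only has to control the combined rank $r_A m + r_B n$, which requires a bit of care when one factor is tiny --- and the paper handles precisely this by a short case analysis.

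The paper's proof follows exactly the path you abandoned. It invokes the Dvir--Liu/field-matters bound for a single Paley--Hadamard matrix as a black box (Theorem~\ref{thm:paley-hadamard-bound}), then applies the general packaging result Theorem~\ref{thm:bounded-and-non-rigid}: if every matrix in a family is either of bounded size or individually satisfies a quantitative non-rigidity bound, then all Kronecker products from the family are non-Valiant-rigid. The whole argument is three lines; no character decomposition of Paley cores, no border bookkeeping, no joint truncation.

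Your alternative route, by contrast, has real gaps. Writing the Paley core over $\fff_{q}^{+}$ through characters gives a \emph{matrix product} $\text{DFT}\cdot D\cdot \text{DFT}^{-1}$, not a Kronecker product of small blocks; you cannot feed that directly into Theorem~\ref{thm:unequal-intr-no-assum}. And your parenthetical ``at least when $q_i$ is a power of a bounded prime, the remaining cases being covered by~\cite{field-matters}'' leaves unexplained how those remaining large-prime Paley factors are merged with the rest --- which is again the tensor-combination problem you thought was unsolvable.
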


\begin{remark}
We note that Hadamard matrices are naturally defined over $\qqq$, while the theorem above shows that matrices in $\mathcal{F}$ are not sufficiently rigid when we make changes from $\ccc$. It is still open whether Paley-Hadamard matrices are strictly Valiant-rigid. If one proves that Paley-Hadamard matrices are not strictly Valiant-rigid, our proof will immediately yield the stronger version of the theorem above, that $\mathcal{F}$ is not strictly Valiant-rigid. 
\end{remark}
\begin{remark}
We note that Theorem~\ref{thm:main-Hadamard} does not follow from Alman's original upper bound for rigidity of Kronecker products (Theorem~\ref{thm:Alman}).
\end{remark}

A more general version of Theorem~\ref{thm:main-Hadamard} can be stated for Kronecker products of matrices of bounded size and matrices that are sufficiently not rigid. 

\begin{theorem}\label{thm:bounded-and-non-rigid-intr} Let $0<\varepsilon <1/2$ and $b\geq 2$. Let $\mathcal{F}$ be a family of matrices over $\fff$, such that for every $d\times d$ matrix $A\in \mathcal{F}$ either $d\leq b$, or 
\begin{equation}
R_{\fff}^{rc}\left(A, d^{1-\gamma}\right) \leq d^{\varepsilon} \quad \text{for} \quad \gamma = \dfrac{12 (\log\log d)^2}{\varepsilon^3 \log d}.
\end{equation} 
Then, for every sequence of matrices $M_1, M_2, \ldots M_k \in\mathcal{F}$, the $n\times n$ matrix $M = \bigotimes\limits_{i\in [k]} M_i$ either satisfies
$
R_{\fff}^{rc}\left(M, {n}/{\log n}\right)\leq n^{6\varepsilon},
$  
 or $n$ is bounded above by a function of $b$ and $\varepsilon$.
\end{theorem}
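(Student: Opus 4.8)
The plan is to re-use the multi-scale tensoring machinery that underlies Theorem~\ref{thm:unequal-intr-no-assum} (and its row--column form), changing only its base case: where that argument starts from arbitrary matrices of size at most $d$, here we start from the mixed input allowed in the statement --- each factor is \emph{either} of bounded size, at most $b$, \emph{or} already carries a non-rigidity certificate of the form $R^{rc}_{\fff}(A,d^{1-\gamma(d)})\le d^{\eps}$ with $\gamma(d)=12(\log\log d)^2/(\eps^3\log d)$. Throughout, assume $n$ exceeds an explicit threshold $N_\star=N_\star(b,\eps)$ accumulated along the way; if $n<N_\star$ then $n$ is bounded in terms of $b$ and $\eps$ and we are in the second alternative. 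The bounded-size factors do not individually satisfy such a certificate, so the first step is to \emph{bundle} them: let $N=\bigotimes_{d_i\le b}M_i$, of order $n'$. If $n'\ge b^{1/\gamma_b}$ for the constant $\gamma_b=\gamma_b(b,\eps)>0$ supplied by the row--column form of Theorem~\ref{thm:unequal-intr-no-assum} with $d=b$, then $N$ itself satisfies $R^{rc}_{\fff}(N,(n')^{1-\gamma_b})\le(n')^{\eps}$, so $N$ joins the large factors as one more factor with a certificate of exactly the same shape; and if $n'<b^{1/\gamma_b}$, then $n'=O_{b,\eps}(1)$ and $N$ is carried along separately using $R^{rc}_{\fff}(N\otimes B,\,n'r)\le n'\cdot R^{rc}_{\fff}(B,r)$, costing only a bounded multiplicative factor that is absorbed by the slack in the exponents below.

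After this reduction it suffices to bound $R^{rc}_{\fff}\bigl(\bigotimes_{j\in[m]}F_j\bigr)$ for a Kronecker product in which every factor $F_j$, of order $e_j$, admits $F_j=L_j+S_j$ with $\rank L_j\le e_j^{1-\beta_j}$ and $S_j$ having at most $e_j^{\eps}$ nonzero entries in each row and each column, where $\beta_j=\gamma(e_j)$ (or $\beta_j=\gamma_b$ for the bundled factor). Here the rank is easy: already $L=\bigotimes_j L_j$ has rank $\prod_j e_j^{1-\beta_j}=n\cdot\prod_j e_j^{-\gamma(e_j)}$, and a short computation from the explicit form of $\gamma$ shows $\prod_j e_j^{\gamma(e_j)}\ge\log n$ once $n\ge N_\star$, so $\rank L\le n/\log n$ as required. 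The work is entirely in making the complement sparse. Writing
\[
M-L=\sum_{\emptyset\ne T\subseteq[m]}\Bigl(\bigotimes_{j\in T}S_j\Bigr)\otimes\Bigl(\bigotimes_{j\notin T}L_j\Bigr),
\]
the term indexed by $T$ is supported, in a given row, only on columns whose $j$-th coordinate lies in the support of the corresponding row of $S_j$ for all $j\in T$, hence has at most $\prod_{j\in T}e_j^{\eps}\prod_{j\notin T}e_j$ nonzero entries per row --- small only when $T$ is large, while most subsets $T$ are ``middle-sized'', where this estimate is useless and the term also has too high a rank to move into $L$. A naive union bound over all $2^{m}$ subsets is therefore hopeless, and this is where one must invoke the paper's multi-scale grouping: organize $[m]$ into a controlled number $s$ of blocks, treat each block as a single composite factor (again non-rigid, by the argument applied one scale down), and run the above threshold argument over the $2^{s}$ meta-subsets only; the block count $s$ is chosen as a function of $b$ and $\eps$ small enough that $2^{s}\le n^{\eps}$, and the calibration $\gamma(d)=12(\log\log d)^2/(\eps^3\log d)$ is precisely the per-factor rank deficit that makes such an $s$ affordable while keeping $\rank L\le n/\log n$. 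Propagating the two accumulating quantities across scales then leaves the sparse part with at most $n^{6\eps}$ nonzero entries per row and per column (the exponent $6\eps$ is where the constant $12$ in $\gamma$ is spent). Undoing the bundling and collecting the thresholds ($n'\ge b^{1/\gamma_b}$, the ``$n$ large'' requirements of the grouping, and $n^{1-o(1)}\le n/\log n$) gives the dichotomy.

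The genuine obstacle is the multi-scale grouping under \emph{highly non-uniform} sizes. The large factors $M_i$ may be arbitrarily big with individually vanishing relative rank deficit $\gamma(d_i)\to0$, so one cannot group them into equal-size pieces or run a uniform-size induction; one must instead show that, for a suitable adaptive choice of block structure, the contribution of the ``middle'' subsets stays negligible \emph{uniformly in the orders $e_j$}. This is exactly the point at which the present argument improves on the $2^{d}$-type bound of Theorem~\ref{thm:Alman} and the reason the non-uniform-size machinery of Theorems~\ref{thm:unequal-intr-no-assum}--\ref{thm:unequal-intr} is needed in the first place. A secondary, purely bookkeeping, point is to check that the grouping engine never exploits the bounded size of its base-level factors, only their non-rigidity certificate; granting this, the proof amounts to re-running the proof of Theorem~\ref{thm:unequal-intr-no-assum} with the base-level parameters recorded in the hypothesis, the target rank in the conclusion being only $n/\log n$ (rather than $n^{1-\gamma}$) precisely because the unbounded factors have vanishing relative rank deficit.
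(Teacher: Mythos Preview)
Your bundling of the bounded-size factors via Theorem~\ref{thm:unequal-intr-no-assum} is right and matches the paper, and your threshold expansion $M=\sum_T\bigl(\bigotimes_{j\in T}S_j\bigr)\otimes\bigl(\bigotimes_{j\notin T}L_j\bigr)$ split at $|T|\gtrless\eps m$ is the right idea; this is precisely Lemma~\ref{lem:ten-betw-squares}. But you then claim the remaining multi-scale step ``amounts to re-running the proof of Theorem~\ref{thm:unequal-intr-no-assum},'' and that checking the engine ``never exploits the bounded size of its base-level factors'' is ``purely bookkeeping.'' That check fails. The proof of Theorem~\ref{thm:unequal-intr-no-assum} does not take non-rigidity certificates as input at all: it writes each factor as a product of $O(d)$ V-matrices (Corollary~\ref{cor:decomp}) and exploits the explicit sparsity pattern of $G_{\vec d}(X)$. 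Every step of that argument---the score function $s(x)$, the sets $\mathcal C(\delta),\mathcal R(\delta)$, the Bernstein bound---is specific to V-matrices and to factors of bounded size; there is nothing to re-run once you replace bounded-size factors by large ones carrying a certificate.

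The mechanism the paper actually uses for the large factors is different and, crucially, does \emph{not} try to keep the number of top-level blocks $s$ small enough for a threshold argument ($2^s\le n^\eps$). Instead it buckets by dyadic size, $I_t=\{i:d_i\in(b^{2^t},b^{2^{t+1}}]\}$, so that inside each bucket all orders lie within a square and the threshold expansion of Lemma~\ref{lem:ten-betw-squares} applies directly---this is the terminating base case your recursive sketch is missing. Across the $L=\log\log d_{\max}$ buckets no threshold argument is run at all; one simply iterates the Dvir--Liu product bound $R^{rc}(A\otimes B,\,r_a m+r_b n)\le R^{rc}(A,r_a)\,R^{rc}(B,r_b)$ (Lemma~\ref{lem:ten-prod-bound}), paying only an additive $L$ in the rank and a multiplicative $n^{\eps}$ for the small buckets (Lemma~\ref{lem:general-ten-bound}). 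The $\log\log d$ in the hypothesis $\gamma(d)=12(\log\log d)^2/(\eps^3\log d)$ is calibrated to this bucket count $L$, not to any $2^s$. Your recursive plan, as written, has no base case where the threshold argument becomes valid, and the invocation of Theorem~\ref{thm:unequal-intr-no-assum} cannot supply one.
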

\begin{corollary}
Let $\mathcal{F}$ be a family of square matrices over $\fff$. If for every $0<\varepsilon<1/2$ there exists $b\geq 2$ such that $\mathcal{F}$ satisfies the assumptions of Theorem~\ref{thm:bounded-and-non-rigid-intr}, then the family of Kronecker products of matrices from $\mathcal{F}$ is not Valiant-rigid over $\fff$.
\end{corollary}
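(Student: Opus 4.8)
The plan is to unwind the definition of ``not Valiant-rigid'' and feed the result straight into Theorem~\ref{thm:bounded-and-non-rigid-intr}; beyond this, only careful handling of the quantifiers and of two routine quantitative losses is needed. Write $\mathcal{G}$ for the family of all Kronecker products $M=\bigotimes_{i\in[k]}M_i$ with $k\ge1$ and $M_i\in\mathcal{F}$. By definition, $\mathcal{G}$ is not Valiant-rigid over $\fff$ iff no increasing subsequence of $\mathcal{G}$ is Valiant-rigid, i.e.\ for every increasing subsequence $(B_j)$ of $\mathcal{G}$, writing $n_j$ for the order of $B_j$, and every $\eps>0$, there are infinitely many $j$ with $R_{\fff}\bigl(B_j,n_j/\log\log n_j\bigr)<n_j^{1+\eps}$. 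Thus it suffices to prove the following uniform statement: for every $\delta>0$ there is a threshold $N(\delta)$, independent of $k$ and of the factors $M_i$, such that every $n\times n$ matrix $M\in\mathcal{G}$ with $n\ge N(\delta)$ satisfies $R_{\fff}\bigl(M,n/\log\log n\bigr)<n^{1+\delta}$. Granting this, along any increasing subsequence $(B_j)$ of $\mathcal{G}$ all but finitely many $j$ satisfy $n_j\ge N(\delta)$ and hence $R_{\fff}\bigl(B_j,n_j/\log\log n_j\bigr)<n_j^{1+\delta}$; since $\delta>0$ is arbitrary, no increasing subsequence is Valiant-rigid, so $\mathcal{G}$ is not Valiant-rigid.

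To prove the uniform statement, fix $\delta>0$ and choose $\eps\in(0,1/2)$ with $6\eps<\delta$ (such $\eps$ exists --- for instance $\eps=\min\{\delta/7,1/4\}$). Applying the hypothesis of the corollary to this $\eps$, we obtain some $b=b(\eps)\ge 2$ for which $\mathcal{F}$ meets the assumptions of Theorem~\ref{thm:bounded-and-non-rigid-intr} with the pair $(\eps,b)$: every $d\times d$ matrix $A\in\mathcal{F}$ with $d>b$ satisfies $R^{rc}_{\fff}\bigl(A,d^{1-\gamma}\bigr)\le d^{\eps}$ with $\gamma=12(\log\log d)^2/(\eps^3\log d)$. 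Theorem~\ref{thm:bounded-and-non-rigid-intr} then tells us that for every choice of $M_1,\dots,M_k\in\mathcal{F}$ the product $M\in\mathcal{G}$ of order $n$ either satisfies $R^{rc}_{\fff}\bigl(M,n/\log n\bigr)\le n^{6\eps}$, or has $n$ bounded above by some number $N_0=N_0(b,\eps)$ not depending on $k$ or on the factors. Set $N(\delta):=\max\{N_0(b(\eps),\eps),4\}$; this depends on $\delta$ alone.

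For a matrix $M\in\mathcal{G}$ of order $n>N(\delta)$ the first alternative necessarily holds, and we convert the row-column bound to a bound on ordinary rigidity:
\[
R_{\fff}\bigl(M,\,n/\log\log n\bigr)\ \le\ R_{\fff}\bigl(M,\,n/\log n\bigr)\ \le\ n\cdot R^{rc}_{\fff}\bigl(M,\,n/\log n\bigr)\ \le\ n\cdot n^{6\eps}\ =\ n^{1+6\eps}\ <\ n^{1+\delta}.
\]
The first inequality uses that $R_{\fff}(A,\cdot)$ is non-increasing in the target rank, together with $n/\log\log n\ge n/\log n$ for $n\ge3$; the second is the inequality $R_{\fff}(A,r)\le n\,R^{rc}_{\fff}(A,r)$ recorded after the definition of row-column rigidity; and the last holds because $6\eps<\delta$ and $n\ge2$. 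This establishes the uniform statement with threshold $N(\delta)$, and the corollary follows.

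The argument has no genuine obstacle: all the quantitative work is already in Theorem~\ref{thm:bounded-and-non-rigid-intr}. The two points that call for care are (i) reading the ``or $n$ is bounded above by a function of $b$ and $\varepsilon$'' clause of that theorem as a single explicit threshold that is uniform over all lengths $k$ and all choices of the factors, so it can be promoted to a threshold $N(\delta)$ depending on $\delta$ alone; and (ii) tracking constants through the chain above so that the exponent lands strictly below $1+\delta$ even as the target rank is relaxed from $n/\log n$ to $n/\log\log n$.
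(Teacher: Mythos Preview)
Your proof is correct and is exactly the intended argument. The paper states this corollary without proof, treating it as an immediate consequence of Theorem~\ref{thm:bounded-and-non-rigid-intr}; you have simply made the routine quantifier chase explicit, including the passage from the row-column bound to ordinary rigidity via $R_{\fff}(A,r)\le n\,R^{rc}_{\fff}(A,r)$ and the relaxation of the target rank from $n/\log n$ to $n/\log\log n$.
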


We prove Theorems~\ref{thm:main-Hadamard} and~\ref{thm:bounded-and-non-rigid-intr}  in Section~\ref{sec:Hadamard}.

\subsection{Our approach}

In order to prove Theorem~\ref{thm:Alman}, Alman~\cite{alman} first uses a beautiful trick to deduce the claim  for Kronecker products of $2\times 2$ matrices.
He observes that it is sufficient to prove the claim for $R =\left( \begin{matrix} 1 & 1\\
1 & 0
\end{matrix}\right)
$. The Kronecker powers of $R$ have low rigidity since they are very sparse. 
After that, he applies induction on the size $d$ of the matrices involved
in the Kronecker product.
The technically involved induction argument leads to the
factor $2^{-d}$ in~$\gamma$.    

Our proof is simpler and omits
induction. Instead, we observe that an idea,
somewhat similar to Alman's proof of the base case
$d=2$, can be applied for any~$d$.  

Our key observation is that any $d\times d$ matrix can be written as a product
of at most $2d$ very sparse matrices and $2d$ permutation matrices.

Specifically,   
for a vector $x\in \fff^d$ define a $d\times d$ matrix
\begin{equation}
G_d(x)  = \left(\begin{matrix}
		I_{d-1} \\
		0 
		\end{matrix}\  x\right) = \left(\begin{matrix}
		1 & 0 & 0 \ldots & 0 & x_1\\
		0 & 1 & 0 \ldots & 0 & x_2\\
		\ldots \\
		0 & 0 & 0 \ldots & 1 & x_{d-1} \\
		0 & 0 & 0 \ldots & 0 & x_d
		\end{matrix}\right).
\end{equation} 
We are going to call the matrices of this form  the \textit{V-matrices} for the pattern of their non-zero entries. Next, it is not hard to verify that any $d\times d$ matrix can be written as
\[ A = P_1 \cdot  G_d(y)^T  \cdot \left(\begin{matrix}
		B & 0\\
		0 & \lambda
		\end{matrix}\right)   \cdot G_d(x)  \cdot P_2,
\]
where $\lambda \in \{0, 1\}\subseteq \fff$, $B\in\fff^{ (d-1)\times (d-1)}$ and $P_1, P_2$ are permutation matrices. By repeating this procedure for $B$ at most $d-2$ times one ends up with a product of $2d-2$ V-matrices, a diagonal matrix, and permutation matrices (see Section~\ref{sec:structure}).

As was observed
in \cite{dvir-liu},  
the row-column rigidity of the product can be controlled by the
row-column rigidity of each component. 

\begin{lemma}\label{lem:rigidity-prod} For arbitrary $d\times d$ matrices $A$ and $B$ over a field $\fff$
\begin{equation}
R^{rc}_{\fff}(A\cdot B, r+s)\leq R^{rc}_{\fff}(A, r)\cdot R^{rc}_{\fff}(B, s)
  \,.  
\end{equation}
\end{lemma}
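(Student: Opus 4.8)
The plan is to reduce the claim about the product $A\cdot B$ to separate rank-reductions of $A$ and of $B$, then combine them. Suppose witnessing matrices $Z_A$ and $Z_B$ are given with $\rank(A-Z_A)\le r$, $\rank(B-Z_B)\le s$, and with every row and column of $Z_A$ (resp.\ $Z_B$) having at most $R^{rc}_{\fff}(A,r)$ (resp.\ $R^{rc}_{\fff}(B,s)$) nonzero entries. Write $A=A'+Z_A$ and $B=B'+Z_B$ where $\rank A'\le r$ and $\rank B'\le s$. Then
\[
A\cdot B = A'B' + A'Z_B + Z_A B' + Z_A Z_B.
\]
The first term $A'B'$ has rank at most $\min(r,s)\le r+s$, so it can be absorbed into the low-rank part. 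The remaining three terms must be shown collectively to have bounded row and column support. But that is not quite the right split, since $A'Z_B$ need not be sparse. The fix is to group differently: set $Z := A\cdot Z_B + Z_A\cdot B'$ and $L := A'\cdot B'$; then $A\cdot B = L + Z$ because $A\cdot Z_B + Z_A\cdot B' = (A'+Z_A)Z_B + Z_A(B-Z_B) = A'Z_B + Z_AZ_B + Z_AB - Z_AZ_B = A'Z_B+Z_AB'+Z_AZ_B$ — so one has to be a little careful and instead take $Z := A\cdot Z_B + Z_A\cdot B - Z_A\cdot Z_B$, which equals $A'Z_B + Z_AB'+Z_AZ_B$, and $L := A'B'$. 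This makes $A\cdot B = L+Z$ an identity, and $\rank L\le r+s$; the work is in bounding the supports of $Z$.

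The heart of the argument is the row/column support bound on $Z = A\cdot Z_B + Z_A\cdot B - Z_A\cdot Z_B$. I would handle each of the three summands. For $A\cdot Z_B$: a column $j$ of this product is $A$ times column $j$ of $Z_B$; since column $j$ of $Z_B$ has at most $t_B := R^{rc}_{\fff}(B,s)$ nonzero entries, $(A\cdot Z_B)_{\cdot j}$ is a linear combination of at most $t_B$ columns of $A$, hence lies in a subspace — but that does not bound its number of nonzero coordinates. So the column-support bound must come the other way. Observe instead: the nonzero entries of $Z$ as a \emph{whole} can be located by tracking where $Z_B$ and $Z_A$ have their nonzeros. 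Specifically, $(A\cdot Z_B)_{ij}=\sum_\ell A_{i\ell}(Z_B)_{\ell j}$; if row $i$ is arbitrary, this can be nonzero for many $j$. The correct observation — and this is the standard trick behind Lemma~\ref{lem:rigidity-prod} — is to bound column supports of $A\cdot Z_B$ by $t_B$ and row supports of $Z_A\cdot B$ by $t_A$, which does \emph{not} directly work; rather one bounds the column support of $Z_A\cdot B'$ type terms by exploiting that we are free to choose which of the two factorizations to expand on each side. The cleanest route: bound the column support of $A\cdot Z_B$ is wrong; instead note every column of $Z_A\cdot B$ is supported on at most $t_A$ rows? No. I expect the actual proof to keep $Z=A\cdot Z_B - Z_A\cdot B' $ forms chosen so that each term has small support in exactly one of (rows, columns), and then argue the sum still has small support in both, which is the subtle point.

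Let me restate the clean version I would actually write. Take $t_A=R^{rc}_{\fff}(A,r)$, $t_B=R^{rc}_{\fff}(B,s)$, and set $Z := Z_A\cdot B' + A\cdot Z_B - Z_A\cdot Z_B = A\cdot Z_B + Z_A\cdot(B-Z_B)$. Now $A\cdot Z_B$: its $j$-th column is $A$ applied to the $j$-th column of $Z_B$, which is supported on $\le t_B$ coordinates, so the $j$-th column of $A\cdot Z_B$ can be \emph{nonzero everywhere} — so the column support of $A\cdot Z_B$ is not bounded. Hence this grouping fails, confirming that the \emph{real} device is different: one writes $A\cdot B$ and, for each of the two positions, picks the sparse representative so that the product of a sparse matrix with \emph{anything} inherits sparsity in the matching direction — i.e.\ $Z_A\cdot B$ has at most $t_A$ nonzero entries in each \emph{column} (a column of $Z_A\cdot B$ is $Z_A$ times a column of $B$; since each column of $Z_A$ has $\le t_A$ nonzeros... still not a column bound). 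The honest conclusion: the one term that genuinely works is $Z_A \cdot Z_B$ versus products where a sparse matrix sits on the outside. I expect the paper's proof instead uses the factorization $A=A'+Z_A$, $B=B'+Z_B$, groups $A\cdot B = A'B + Z_A B = A'(B'+Z_B)+Z_AB = A'B' + A'Z_B + Z_A B$, drops $A'B'$ into the rank-$(r+s)$ part, and then — crucially — observes that $A'Z_B$ has at most $t_B$ nonzero entries per column and $Z_A B$ has at most $t_A$ nonzero entries per column (since each column of $A'Z_B$ equals $A'$ times a $\le t_B$-sparse column — no). I will defer the precise combinatorics to the write-up, but the main obstacle, clearly, is exactly this: establishing that a product of a row/column-sparse matrix with an arbitrary matrix retains bounded support in the appropriate direction, and then checking that the three-term (or two-term) sum's supports add up to at most $t_A\cdot t_B$ per row and per column — the multiplicativity, rather than additivity, of the bound is the reason one must multiply the two sparse factors together rather than attach them to low-rank pieces, and getting the bookkeeping right there is the crux.
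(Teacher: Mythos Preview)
The paper does not actually prove this lemma; it is quoted from Dvir and Liu~\cite{dvir-liu} without proof. So there is no ``paper's proof'' to compare against, and I will evaluate your argument on its own merits.

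Your write-up is not a proof: it is a sequence of attempted groupings, each of which you correctly diagnose as failing, followed by ``I will defer the precise combinatorics to the write-up.'' The missing idea---which you nearly reach in your last sentence but never execute---is that the sparse part must be exactly $Z := Z_A Z_B$, and \emph{everything else} goes into the low-rank part. With $A' = A-Z_A$ and $B' = B-Z_B$ one has
\[
AB - Z_A Z_B \;=\; A'B' + A'Z_B + Z_A B' \;=\; A'(B'+Z_B) + Z_A B' \;=\; A'B + Z_A B',
\]
and since $\rank(A')\le r$ and $\rank(B')\le s$ this difference has rank at most $r+s$. That is the rank bound you were struggling to obtain; the point is that $A'B'+A'Z_B$ collapses to $A'B$, so you only pay $r$ once, not twice.

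For the sparsity of $Z_A Z_B$: in row $i$, the entry $(Z_A Z_B)_{ij}=\sum_{\ell}(Z_A)_{i\ell}(Z_B)_{\ell j}$ can be nonzero only if $(Z_B)_{\ell j}\ne 0$ for some $\ell$ in the support of row $i$ of $Z_A$. That support has size at most $t_A$, and for each such $\ell$ there are at most $t_B$ values of $j$ with $(Z_B)_{\ell j}\ne 0$; hence at most $t_A t_B$ nonzeros in row $i$. The column bound is symmetric. This is the ``multiplying the two sparse factors together'' you alluded to, and it is the entire proof---there is no subtle bookkeeping beyond this.
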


Recall that a \emph{monomial matrix} is a matrix where each row and each
column has at most one non-zero element.  A matrix is monomial
exactly if it is the product of a diagonal matrix and a permutation
matrix.  
A Kronecker product of
monomial  
matrices is itself a
monomial 
matrix. Moreover, if $P$ is a monomial matrix then %% <--
$R^{rc}_{\fff}(P, 0) = 1$.   
Thus, in order to prove Theorem~\ref{thm:main-intr} one only needs to
show strong 
upper bounds on row-column rigidity  
for Kronecker products of V-matrices.

To bound the row-column rigidity of a matrix $M$ that is a Kronecker product of V-matrices one needs
to observe that 
most of the non-zero entries are concentrated in just a few columns and rows,
and so these entries form a low-rank matrix. 

We discuss our strategy
for upper bounds on rigidity  
of 
Kronecker products in Section~\ref{sec:strategy}.
We prove Theorem~\ref{thm:main-intr}
for the case of matrices of equal size 
in Section~\ref{sec:equal-size}. We discuss rigidity bounds for
matrices of unequal sizes in Section~\ref{sec:nonequal-size}.
Some standard tail bounds for binomial distributions are reviewed
in Section~\ref{sec:tail-bounds}. We expand the family of Hadamard matrices known not to be Valiant rigid in Section~\ref{sec:Hadamard}. The proof of Obs.~\ref{obs:factor} omitted in Section~\ref{sec:strategy} is provided in Appendix~\ref{sec:proof-obs-factor}.

\subsection{Notation}

We use $[n]$ to denote the set $\{1, 2, \ldots , n\}$. $\fff$ denotes a field. Throughout the paper we use $\vec{d}$ to denote the vector $(d_1, d_2, \ldots d_k)$, where each $d_i\geq 2$. We define $[\vec{d}] = [d_1]\times [d_2] \times \ldots \times [d_k]$. We say that $X \in \fff^{\vec{d}}$ if $X = (X_1, X_2, \ldots, X_k)$, where $X_i \in \fff^{d_i}$.

We use standard asymptotic notation.  %% <-- this line added
Let $f(n), g(n)$ be non-negative functions. We say that $f(n)= O(g(n))$ if there exists a constant $C$ such that $f(n)\leq Cg(n)$ for all
%% $n\geq 1$.   %% -->
sufficiently large $n$.   %% <--
We say that $f(n) = \Omega(g(n))$ if $g(n) = O(f(n))$.
Finally, $f(n) = \Theta(g(n))$ if $f(n) = O(g(n))$ and $f(n) = \Omega(g(n))$.

\section*{Acknowledgments}
 The author is grateful to his advisor 
L\'aszl\'o Babai for helpful discussions, his 
help in improving the organization of the paper, and
for pointing out improvements to the results and simplifications
        of the proofs. The author is partially supported by Prof. L\'aszl\'o Babai's
   NSF Grant CCF 1718902.  All statements made in this paper reflect the
   author's views and have not been evaluated or endorsed by
   the NSF.

\section{Tail bounds}
\label{sec:tail-bounds}

In this section we review a
classical concentration inequality which we apply to
the binomial distribution.

\begin{theorem}[Bernstein inequality]\label{upper-bern}
Let $L\geq 0$ and $\delta>0$.  Let
%% $X_i$ be  %% -->
$X_1,\dots,X_n$ be   %% <--
independent random variables satisfying $|X_i - \expv(X_i)|\leq L$ for $i\in [n]$. Let $X = \sum\limits_{i=1}^{n} X_i$. Then we have
\[ \prob(X\geq \expv(X)+\delta)\leq \exp\left(-\dfrac{\delta^2/2}{\sum\limits_{i=1}^{n}\Var(X_i)+L\delta/3}\right). \] 
\end{theorem}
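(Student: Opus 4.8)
The plan is to run the classical Cram\'er--Chernoff exponential moment method. First I would pass to the centered variables $Y_i := X_i - \expv(X_i)$, so that $\expv(Y_i) = 0$, $|Y_i| \le L$, and $Y := \sum_{i=1}^n Y_i = X - \expv(X)$; writing $\sigma^2 := \sum_{i=1}^n \Var(Y_i) = \sum_{i=1}^n \Var(X_i)$, the goal becomes $\prob(Y \ge \delta) \le \exp\!\big(-\tfrac{\delta^2/2}{\sigma^2 + L\delta/3}\big)$. If $L = 0$, or more generally if $\sigma^2 = 0$, then each $Y_i$ is almost surely equal to its mean $0$, so $Y \equiv 0$, $\prob(Y\ge\delta)=0$, and the bound holds trivially; hence I assume $L>0$ and $\sigma^2 > 0$. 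For any $t > 0$, Markov's inequality applied to the nonnegative variable $e^{tY}$ together with independence gives
\[
\prob(Y \ge \delta) \;\le\; e^{-t\delta}\,\expv\big(e^{tY}\big) \;=\; e^{-t\delta}\prod_{i=1}^{n}\expv\big(e^{tY_i}\big).
\]

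The heart of the argument is a bound on the moment generating function of a single centered bounded variable: if $\expv(Z) = 0$ and $|Z| \le L$, then for every $t$ with $0 < t < 3/L$,
\[
\expv\big(e^{tZ}\big) \;\le\; \exp\!\left(\frac{t^2 \Var(Z)/2}{1 - tL/3}\right).
\]
To prove this, one expands $e^{tZ} = 1 + tZ + \sum_{k\ge 2} \tfrac{t^k Z^k}{k!}$ and takes expectations; the linear term drops out since $\expv(Z)=0$. For $k \ge 2$ the bound $|Z| \le L$ yields $|\expv(Z^k)| \le L^{k-2}\,\expv(Z^2) = L^{k-2}\Var(Z)$, and the elementary inequality $k! \ge 2\cdot 3^{k-2}$ (a one-line induction) lets me dominate the remaining sum by a geometric series:
\[
\expv\big(e^{tZ}\big) \;\le\; 1 + \Var(Z)\sum_{k\ge 2}\frac{t^k L^{k-2}}{k!} \;\le\; 1 + \Var(Z)\cdot\frac{t^2}{2}\sum_{j\ge 0}\Big(\frac{tL}{3}\Big)^{j} \;=\; 1 + \frac{t^2\Var(Z)/2}{1-tL/3},
\]
where convergence uses $tL/3 < 1$; the stated bound follows from $1+x \le e^x$.

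Substituting this estimate into each factor $\expv(e^{tY_i})$ gives, for all $0 < t < 3/L$,
\[
\prob(Y \ge \delta) \;\le\; \exp\!\left(-t\delta + \frac{t^2\sigma^2/2}{1 - tL/3}\right).
\]
It remains to choose $t$. The choice $t^{*} = \dfrac{\delta}{\sigma^2 + L\delta/3}$ works: one checks $t^{*}L/3 = \dfrac{L\delta/3}{\sigma^2 + L\delta/3} < 1$, so $t^{*}<3/L$, and $1 - t^{*}L/3 = \dfrac{\sigma^2}{\sigma^2 + L\delta/3}$, whence
\[
-t^{*}\delta + \frac{(t^{*})^2\sigma^2/2}{1 - t^{*}L/3} \;=\; -\frac{\delta^2}{\sigma^2 + L\delta/3} + \frac{(t^{*})^2(\sigma^2 + L\delta/3)}{2} \;=\; -\frac{\delta^2}{\sigma^2 + L\delta/3} + \frac{\delta^2/2}{\sigma^2 + L\delta/3} \;=\; -\frac{\delta^2/2}{\sigma^2 + L\delta/3},
\]
which is exactly the asserted bound.

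I do not anticipate a real obstacle: this is a textbook computation. The only places that need a little care are the factorial inequality $k! \ge 2\cdot 3^{k-2}$, which is precisely what produces the characteristic $L\delta/3$ term in the denominator, and the bookkeeping in the final substitution for $t^{*}$; the degenerate cases are handled at the outset by the remark that $Y\equiv 0$ whenever $L$ or $\sigma^2$ vanishes.
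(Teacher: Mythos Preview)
Your proof is correct and is the standard Cram\'er--Chernoff argument for Bernstein's inequality; the factorial estimate $k!\ge 2\cdot 3^{k-2}$, the geometric-series bound on the MGF, and the explicit optimizer $t^{*}=\delta/(\sigma^2+L\delta/3)$ are all handled cleanly, and the degenerate cases are dispatched appropriately.

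The paper itself does not prove this theorem: it is stated without proof in Section~\ref{sec:tail-bounds} as a classical concentration inequality being reviewed for later use, so there is no ``paper's own proof'' to compare against. Your write-up supplies exactly the textbook derivation that the paper takes for granted.
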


\begin{corollary}\label{cor:bernstein} Let $d>1$. Let $X_i$ be
i.\,i.\,d.   
$\{0,1\}$-valued 
random variables such that $\prob(X_i = 1) = 1-\prob(X_i = 0) = 1/d$
for $i\in [n]$. Let $X = \sum\limits_{i=1}^{n} X_i$.  Let $\delta<1/d$. Then
\[ \prob\left[X\geq \left(\dfrac{1}{d}+\delta\right)n\right]\leq 
\exp\left(-\dfrac{\delta^2 n}{3d}\right) \quad \text{and}\quad 
\prob\left[X\leq \left(\dfrac{1}{d}-\delta\right)n\right]\leq 
\exp\left(-\dfrac{\delta^2 n}{3d}\right). \]
\end{corollary}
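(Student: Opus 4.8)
The plan is to derive Corollary~\ref{cor:bernstein} as a direct application of the Bernstein inequality (Theorem~\ref{upper-bern}) to the $X_i$, tracking constants carefully enough that the clean exponent $\delta^2 n/(3d)$ falls out. First I would record the elementary moment computations: since each $X_i$ is $\{0,1\}$-valued with $\prob(X_i=1)=1/d$, we have $\expv(X_i)=1/d$ and $\Var(X_i)=\tfrac1d-\tfrac1{d^2}<\tfrac1d$, so $\sum_{i=1}^n\Var(X_i)<n/d$. Moreover $|X_i-\expv(X_i)|\le\max(1/d,\,1-1/d)\le 1$, so we may take $L=1$ in Theorem~\ref{upper-bern}. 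Also $\expv(X)=n/d$.

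For the upper tail, I would apply Theorem~\ref{upper-bern} with the deviation parameter $\delta_{\mathrm{Bern}}=\delta n$ (renaming to avoid clashing with the $\delta$ of the corollary). The exponent becomes $-\dfrac{(\delta n)^2/2}{\sum_i\Var(X_i)+L\delta n/3}$. Using $\sum_i\Var(X_i)<n/d$ together with the hypothesis $\delta<1/d$, which gives $L\delta n/3<n/(3d)$, the denominator is bounded above by $n/d+n/(3d)=\tfrac{4n}{3d}$, so the exponent is at most $-\dfrac{(\delta n)^2/2}{4n/(3d)}=-\dfrac{3d\,\delta^2 n}{8}$. Since $d>1$ this is at most $-\delta^2 n/(3d)$, yielding $\prob[X\ge(\tfrac1d+\delta)n]\le\exp(-\delta^2 n/(3d))$ as claimed (the constant $3$ leaves ample slack).

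For the lower tail I would reduce to the upper-tail case by symmetry: set $Y_i=1-X_i$, so the $Y_i$ are i.i.d.\ $\{0,1\}$-valued with $\prob(Y_i=1)=1-\tfrac1d$, hence $\expv(Y_i)=1-\tfrac1d$, $\Var(Y_i)=\Var(X_i)<\tfrac1d$, and $|Y_i-\expv(Y_i)|\le 1$. Writing $Y=\sum_i Y_i=n-X$, the event $\{X\le(\tfrac1d-\delta)n\}$ is exactly $\{Y\ge\expv(Y)+\delta n\}$, and the same Bernstein computation (with the identical variance sum and $L=1$) gives the matching bound $\exp(-\delta^2 n/(3d))$.

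I expect no genuine obstacle here; the statement is a routine specialization of Bernstein's inequality. The only point requiring a little care is bookkeeping of constants—keeping $\sum\Var(X_i)\le n/d$ and $L\delta/3\le 1/(3d)$ in hand—so that the final exponent can be cleanly stated as $\delta^2 n/(3d)$ rather than an uglier expression; the slack from $\Var(X_i)=\tfrac1d-\tfrac1{d^2}<\tfrac1d$ and from $d>1$ is more than enough to absorb the crude bounding.
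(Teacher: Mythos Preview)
Your proof is correct and follows essentially the same approach as the paper: apply Bernstein's inequality directly for the upper tail, then reduce the lower tail to an upper tail by a sign flip. The only cosmetic differences are that the paper takes the tighter $L=(d-1)/d$ (you use $L=1$, which still suffices) and centers via $Y_i=\expv(X_i)-X_i$ rather than your $Y_i=1-X_i$; both choices lead to the same bound with the same arithmetic.
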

\begin{proof} Note that $\expv(X_i) = 1/d$ and $\Var(X_i) = (d-1)/d^2$. 
The first bound follows from Theorem~\ref{upper-bern}
with 
$L = (d-1)/d$. The second bound follows from Theorem~\ref{upper-bern} 
with $L = (d-1)/d$ by applying it to $Y_i = \expv(X_i) - X_i$.
\end{proof}

We also use the following
elementary 
bound.
\begin{lemma}\label{lem:tail-bound} Let $0\leq k\leq n$. Then 
$
\sum\limits_{i=0}^{k} \binom{n}{i} \leq \left(\dfrac{\eee n}{k}\right)^{k}.
$
\end{lemma}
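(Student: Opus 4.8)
The plan is to use the standard exponential-weighting (Chernoff-type) argument. The case $k=0$ is trivial, since both sides equal $1$ (with the convention $0^0=1$), so assume $1\le k\le n$. For any real $x$ with $0<x\le 1$ and any $0\le i\le k$ we have $x^{\,i-k}\ge 1$, hence
\[
\sum_{i=0}^{k}\binom{n}{i}\ \le\ \sum_{i=0}^{k}\binom{n}{i}x^{\,i-k}
\ =\ x^{-k}\sum_{i=0}^{k}\binom{n}{i}x^{i}
\ \le\ x^{-k}\sum_{i=0}^{n}\binom{n}{i}x^{i}
\ =\ x^{-k}(1+x)^{n},
\]
where the middle inequality adds back the (nonnegative, since $x>0$) tail terms and the last equality is the binomial theorem.

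Next I would optimize the free parameter by taking $x=k/n$, which is admissible because $0<k\le n$. This gives
\[
\sum_{i=0}^{k}\binom{n}{i}\ \le\ \left(\frac{n}{k}\right)^{k}\left(1+\frac{k}{n}\right)^{n}.
\]
Finally I would invoke the elementary bound $1+t\le \eee^{\,t}$ with $t=k/n$, raised to the $n$-th power, to conclude $\left(1+\tfrac{k}{n}\right)^{n}\le \eee^{\,k}$. Substituting this in yields
\[
\sum_{i=0}^{k}\binom{n}{i}\ \le\ \left(\frac{n}{k}\right)^{k}\eee^{\,k}\ =\ \left(\frac{\eee n}{k}\right)^{k},
\]
as claimed.

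There is no real obstacle here: the argument is three lines. The only points deserving a word of care are the choice of the near-optimal weight $x=k/n$ (which makes the two factors balance) and dispatching the degenerate case $k=0$ separately.
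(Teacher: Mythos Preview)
Your argument is correct and is exactly the standard Chernoff-type proof of this elementary bound. The paper itself states the lemma without proof (calling it an ``elementary bound''), so there is nothing further to compare.
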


\section{A strategy for upper bounds on the rigidity of Kronecker products}
\label{sec:strategy}

\subsection{Expressing a matrix as a product of V-matrices}\label{sec:structure}

Recall that for a vector $x\in \fff^d$ we define a $d\times d$ V-matrix
\begin{equation}
G_d(x)  = \left(\begin{matrix}
		I_{d-1} \\
		0 
		\end{matrix}\  x\right) = \left(\begin{matrix}
		1 & 0 & 0 \ldots & 0 & x_1\\
		0 & 1 & 0 \ldots & 0 & x_2\\
		\ldots \\
		0 & 0 & 0 \ldots & 1 & x_{d-1} \\
		0 & 0 & 0 \ldots & 0 & x_d
		\end{matrix}\right).
\end{equation}
Let $\vec{d} = (d_1, d_2, \ldots, d_k)$. For $X = (X_1, X_2, \ldots X_k)$ with $X_i\in \fff^{d_i}$ for $i\in [k]$, define
\begin{equation}
G_{\vec{d}\, }(X) = \bigotimes_{i=1}^{k} G_{d_i}(X_i).
\end{equation}
In the case when $\vec{d}$ consists of $k$ equal coordinates $d_i = d$, we use notation $G_{d, k}(X) := G_{\vec{d}\, }(X)$.

\begin{definition} A square matrix $P$ is called a \textit{permutation matrix}, if every row and every column of $P$ has precisely one non-zero entry, which is equal to 1.  
\end{definition}

Recall, that arbitrary permutation of columns (rows) of a matrix can be represented by right (left, respectively) multiplication by a permutation matrix. 

The multiplication of an $m\times d$ matrix $A$ by $G_d(x)$ from the right corresponds to keeping the first $d-1$ columns of $A$ unchanged and replacing the last column of $A$ with a linear combination of the columns of $A$ with the coefficients given by $x$. We make the following observation. 

\begin{observation}\label{obs:factor} For any matrix $A \in \fff^{d\times d}$ there exist $B\in \fff^{(d-1)\times (d-1)}$, vectors $x, y\in \mathbb{F}^d$, $\lambda\in \{0, 1\}\subseteq \fff$ and permutation matrices $P_1, P_2\in \fff^{d\times d}$ such that
\begin{equation}\label{eq:factor} A = P_1 \cdot  G_d(y)^T  \cdot \left(\begin{matrix}
		B & 0\\
		0 & \lambda
		\end{matrix}\right)   \cdot G_d(x)  \cdot P_2.
\end{equation}
\end{observation}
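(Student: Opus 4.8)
The plan is to establish the factorization~\eqref{eq:factor} by a direct constructive argument, exploiting the interpretation of multiplication by a V-matrix as a single column operation. I would first record the two elementary facts I will use repeatedly: multiplying $A$ on the right by $G_d(x)$ leaves the first $d-1$ columns of $A$ untouched and replaces the last column by $\sum_i x_i A_{\bullet i}$ (the $i$-th column of $A$ with coefficient $x_i$); and, by transposing, multiplying $A$ on the left by $G_d(y)^T$ leaves the first $d-1$ rows of $A$ untouched and replaces the last row by $\sum_i y_i A_{i\bullet}$. Permutation matrices $P_1,P_2$ will be used only to move a chosen row into the last position and a chosen column into the last position, and then to move them back.

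The core of the argument is a rank/pivot case distinction. \textbf{Case 1: $A\neq 0$.} Pick an entry $A_{pq}\neq 0$. Let $P_2$ be the permutation moving column $q$ to the last position and $P_1^{-1}$ (equivalently $P_1$, up to inverse) the permutation moving row $p$ to the last position; after conjugating we may assume $A_{dd}\neq 0$. Now I use the left and right V-multiplications to clear the last row and last column, \emph{reducing to a} $(d-1)\times(d-1)$ \emph{block}. Concretely, choose $x$ so that right-multiplication by $G_d(x)$ subtracts suitable multiples of the last column from itself — more precisely, I want to arrange that after the operation the last column is $e_d$ times a scalar. Since $A_{dd}\neq 0$, setting $x$ appropriately (e.g.\ $x_d$ chosen to normalize and $x_1,\dots,x_{d-1}$ to kill the off-diagonal part) makes the last column equal to $A_{dd}e_d$. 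Symmetrically, choose $y$ so that left-multiplication by $G_d(y)^T$ replaces the last row by a scalar multiple of $e_d^T$; because the $(d,d)$ entry is now nonzero, the last row can be cleared in its first $d-1$ positions by a suitable choice of $y$, while the first $d-1$ rows (and hence the already-cleared last column) are preserved. After rescaling one can take the surviving $(d,d)$ entry to be $\lambda = 1$, and what remains in the top-left $(d-1)\times(d-1)$ corner is the desired matrix $B$; the off-diagonal blocks vanish by construction, giving~\eqref{eq:factor}. \textbf{Case 2: $A = 0$.} Take $B=0$, $\lambda = 0$, $x=y=0$, $P_1=P_2=I$; then every factor on the right is the matrix $\operatorname{diag}(I_{d-1},0)\cdot\operatorname{diag}(I_{d-1},0) = \operatorname{diag}(I_{d-1},0)$ composed with zero blocks — actually one checks directly that the product is the zero matrix, since $G_d(0)^T$ has zero last row and $G_d(0)$ has zero last column, forcing the product to vanish.

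The main obstacle I anticipate is bookkeeping: verifying that the \emph{order} of the two V-multiplications does not interfere, i.e.\ that clearing the last column (right multiplication) and then clearing the last row (left multiplication) can be done with the already-achieved clearing preserved. The cleanest way to handle this is to do the column operation first: after right-multiplication the last column is $\lambda e_d$, and this property is stable under any subsequent \emph{left} multiplication by $G_d(y)^T$ that fixes the first $d-1$ rows, provided the new last row is a combination of the old rows — one must check the new $(\bullet,d)$ entries of rows $1,\dots,d-1$ are still $0$ (they are, being untouched) and the new $(d,d)$ entry is $\sum_i y_i (\text{old column } d)_i = y_d\lambda$, which we keep nonzero by choosing $y_d\neq 0$. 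Then $y_1,\dots,y_{d-1}$ are free to clear the first $d-1$ entries of the last row. Once this compatibility is nailed down, the identification of $B$ as the untouched top-left block and the normalization $\lambda\in\{0,1\}$ are routine. (Since the statement is marked as proved in Appendix~\ref{sec:proof-obs-factor}, I would present this sketch and defer the full entry-by-entry verification there.)
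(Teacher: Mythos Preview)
Your case distinction ($A\neq 0$ versus $A=0$) is the wrong one, and the pivoting step in Case~1 does not work. You claim that once $A_{dd}\neq 0$ you can choose $x$ so that the new last column $\sum_i x_i A_{\bullet i}$ equals a nonzero multiple of $e_d$. But that is exactly the condition $e_d\in\operatorname{colspan}(A)$, which is \emph{not} implied by $A_{dd}\neq 0$. Concretely, take
\[
A=\begin{pmatrix}1&1\\1&1\end{pmatrix}\quad\text{(over any field).}
\]
Here $A_{22}\neq 0$, yet $\operatorname{colspan}(A)=\operatorname{span}\{(1,1)^T\}$ contains neither $e_1$ nor $e_2$, so no choice of $x$ (and no row/column permutation) makes $\sum_i x_i A_{\bullet i}$ a nonzero multiple of $e_d$. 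The same obstruction arises for $A=\begin{pmatrix}1&0\\1&0\end{pmatrix}$ after you permute a nonzero entry into the $(2,2)$ slot. Your intuition ``use $x_1,\dots,x_{d-1}$ to kill the off-diagonal part'' treats the entries of the new last column as if each were controlled by a single $x_j$; in fact entry $r$ equals $\sum_i x_i A_{r,i}$ and the resulting $(d-1)\times(d-1)$ linear system need not be consistent.

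The paper's proof avoids this by splitting on $\operatorname{rank}(A)$ rather than on $A\neq 0$. If $\operatorname{rank}(A)=d$ then $e_d$ genuinely lies in the column span, one permutes so that the coefficient of the last column is nonzero, and the argument you sketched goes through with $\lambda=1$. If $\operatorname{rank}(A)<d$ one instead permutes a \emph{dependent} column (and a dependent row) into the last position; then the last column is already a combination of the first $d-1$ columns, which is precisely what is needed to take $\lambda=0$. Your Case~2 handles only $A=0$, missing every nonzero rank-deficient matrix. The repair is straightforward once you adopt the rank-based split, but as written the proposal has a genuine gap.
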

\begin{proof} See Appendix~\ref{sec:proof-obs-factor}.
\end{proof}

\begin{corollary}\label{cor:decomp} For any matrix $A\in \fff^{d\times d}$ there exist a diagonal matrix $W$, $2(d-1)$ vectors $X_i, Y_i$ for $i\in [d-1]$, and $2(d-1)$ permutation matrices $P_i, Q_i$ for $i\in [d-1]$ such that 
\begin{equation}
A = Q_{1}\cdot G_{d}(Y_{1})^T\cdot Q_{2}\cdot\ldots \cdot G_{d}(Y_{d-1})^T\cdot W \cdot G_d(X_{d-1})\cdot P_{d-1}\cdot \ldots \cdot G_d(X_{1})\cdot P_{1}.
\end{equation}
\end{corollary}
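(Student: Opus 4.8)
The plan is to prove Corollary~\ref{cor:decomp} by induction on $d$, using Observation~\ref{obs:factor} for the inductive step. The base case $d=1$ is immediate: a $1\times 1$ matrix is already diagonal, so we take $W=A$ and the claimed product has no V-matrices and no permutation matrices (as $2(d-1)=0$). One could equally take $d=2$ as the base case, where the statement is Observation~\ref{obs:factor} with the $1\times 1$ block $B$ absorbed into the diagonal factor.

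For the inductive step, assume the statement for $d-1$ and let $A\in\fff^{d\times d}$. First I would invoke Observation~\ref{obs:factor} to write $A = P_1\cdot G_d(y)^T\cdot\bigl(\begin{smallmatrix}B&0\\0&\lambda\end{smallmatrix}\bigr)\cdot G_d(x)\cdot P_2$ with $B\in\fff^{(d-1)\times(d-1)}$, $\lambda\in\{0,1\}$, and $P_1,P_2$ permutation matrices, and then apply the induction hypothesis to $B$, obtaining an alternating product of $d-2$ factors $G_{d-1}(\cdot)^T$, $d-2$ factors $G_{d-1}(\cdot)$, $2(d-2)$ permutation matrices, and one central diagonal factor $W'$. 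Substituting this into $\bigl(\begin{smallmatrix}B&0\\0&\lambda\end{smallmatrix}\bigr)$ and repeatedly using $\bigl(\begin{smallmatrix}MN&0\\0&\mu\nu\end{smallmatrix}\bigr)=\bigl(\begin{smallmatrix}M&0\\0&\mu\end{smallmatrix}\bigr)\bigl(\begin{smallmatrix}N&0\\0&\nu\end{smallmatrix}\bigr)$ distributes the extra $(d-1)$st row and column over the individual factors of the decomposition of $B$, placing the scalar $\lambda$ on the factor carrying $W'$.

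It then remains to identify each padded factor with an object of the required kind at size $d$. A padded permutation matrix is a $d\times d$ permutation matrix and a padded diagonal matrix is diagonal, so the only point that needs an argument is a padded V-matrix. Here I would record the elementary identity $\bigl(\begin{smallmatrix}G_{d-1}(v)&0\\0&1\end{smallmatrix}\bigr)=P\cdot G_d(\hat v)\cdot P$, where $P$ is the transposition matrix interchanging coordinates $d-1$ and $d$ and $\hat v=(v_1,\dots,v_{d-2},0,v_{d-1})$; transposing and using $P^T=P$ gives the corresponding identity for $G_{d-1}(v)^T$. Substituting these, each spurious factor $P$ produced by a padded V-matrix is either merged with an adjacent genuine permutation factor (a product of permutation matrices is a permutation matrix), or, for the two copies of $P$ immediately flanking the central block, combines with the diagonal as $P\,\diag(W',\lambda)\,P$, which is again diagonal and becomes the new central factor $W$. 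A final bookkeeping check confirms the counts: $G_d(y)^T$ together with the $d-2$ padded transposed V-matrices from $B$ gives $d-1$ factors $G_d(\cdot)^T$, symmetrically $d-1$ factors $G_d(\cdot)$, and after merging there are $2(d-1)$ permutation matrices, arranged in exactly the alternating pattern required and with the diagonal factor in the middle flanked only by V-matrices.

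The main obstacle is organizational rather than conceptual: each individual identity (padding distributes over products, conjugating a diagonal matrix by a permutation yields a diagonal matrix, a padded V-matrix is a V-matrix up to two transpositions) is routine, and the real care lies in tracking indices so that the padding, the absorption of stray permutations, and especially the cancellation of the two permutations surrounding the central diagonal conspire to produce precisely the stated shape with precisely $2(d-1)$ V-matrices and $2(d-1)$ permutation matrices.
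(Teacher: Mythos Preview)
Your proposal is correct and follows essentially the same route as the paper: both arguments reduce to the single observation that a block-diagonal matrix $\bigl(\begin{smallmatrix}G_{d-1}(v)&0\\0&1\end{smallmatrix}\bigr)$ equals $P\,G_d(\hat v)\,P'$ for suitable permutation matrices, and then apply Observation~\ref{obs:factor} recursively (the paper) or inductively (you) to peel off one V-factor on each side per step. The paper states the padding identity for an identity block $I_t$ of arbitrary size and leaves the bookkeeping implicit, whereas you pad by a single coordinate and spell out the absorption of stray permutations more carefully; these are the same argument at different levels of detail.
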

\begin{proof} Let $t\in [d-1]$. Observe that for $x \in \fff^{t}$ and the vector $y \in \fff^{d}$, whose first $d-t$ coordinates are 0, and last $t$ coordinates are equal to the corresponding coordinates of $x$,
\begin{equation}
\left(\begin{matrix}
G_{d-t}(x) & 0\\
0 & I_{t}
\end{matrix}\right)  = P_1 G_{d}(y) P_2,
\end{equation} 
where $I_{t}$ is the $t\times t$ identity matrix and $P_1, P_2$ are the permutation matrices that exchange the first $d-t$ and the last $t$ rows and columns, respectively.

Then, the claim of the corollary follows from Observation~\ref{obs:factor} by applying it recursively  $d-1$ times to the remaining top-left block $B$, until we are left with a diagonal matrix.
\end{proof}

By combining the above corollary with Lemma~\ref{lem:rigidity-prod} we
obtain   
the following
inequality.  

\begin{lemma}\label{lem:rigid-reduct-Gd} Let $d\geq 2$, $r\leq d^k$ and $M_1, M_2, \ldots, M_k \in \fff^{d\times d}$. Then
\begin{equation}
R_{\fff}^{rc}\left(\bigotimes\limits_{i=1}^{k} M_i, (2d-2)r\right) \leq \left(\max\limits_{X\in \fff^{d\times k}}R_{\fff}^{rc}\left(G_{d, k}(X), r\right)\right)^{2d-2}.  
\end{equation}
\end{lemma}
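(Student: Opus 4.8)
The plan is to combine Corollary~\ref{cor:decomp} with the submultiplicativity of row-column rigidity (Lemma~\ref{lem:rigidity-prod}), taking Kronecker products along the way. First I would apply Corollary~\ref{cor:decomp} to each matrix $M_i$ separately, writing
\[
M_i = Q_1^{(i)}\cdot G_d(Y_1^{(i)})^T\cdot Q_2^{(i)}\cdots G_d(Y_{d-1}^{(i)})^T\cdot W^{(i)}\cdot G_d(X_{d-1}^{(i)})\cdot P_{d-1}^{(i)}\cdots G_d(X_1^{(i)})\cdot P_1^{(i)},
\]
so that each $M_i$ is a product of $2(d-1)$ V-matrices (of the two types $G_d(\cdot)$ and $G_d(\cdot)^T$) interleaved with $2(d-1)$ permutation matrices and one diagonal matrix $W^{(i)}$.

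Next I would take the Kronecker product over $i\in[k]$. Using the mixed-product property $(A_1 B_1)\otimes(A_2 B_2) = (A_1\otimes A_2)(B_1\otimes B_2)$, which extends to any number of factors, the Kronecker product $\bigotimes_i M_i$ becomes a product of $2(d-1)$ matrices each of the form $\bigotimes_i G_d(Z_i)$ or $\bigotimes_i G_d(Z_i)^T$ (which is exactly $G_{d,k}(Z)$ or $G_{d,k}(Z)^T$ for an appropriate $Z\in\fff^{d\times k}$), together with $2(d-1)$ matrices that are Kronecker products of permutation matrices, and one matrix that is a Kronecker product of diagonal matrices. The latter $2(d-1)+1$ matrices are all monomial matrices, hence have row-column rigidity $1$ at target rank $0$ (as noted in the excerpt). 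For the V-matrix factors, each $\bigotimes_i G_d(Z_i)$ satisfies $R^{rc}_{\fff}(\cdot,r)\le \max_{X\in\fff^{d\times k}}R^{rc}_{\fff}(G_{d,k}(X),r)$; the transposed factors $\bigotimes_i G_d(Z_i)^T = \big(\bigotimes_i G_d(Z_i)\big)^T$ have the same row-column rigidity since transposition preserves it (it just swaps the roles of rows and columns in the sparse matrix $Z$).

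Finally I would apply Lemma~\ref{lem:rigidity-prod} repeatedly to this product of $4(d-1)+1$ matrices. The monomial factors contribute a factor of $1$ to the rigidity bound and $0$ to the target rank, so they are free; the $2(d-1)$ V-matrix factors each contribute a factor of $\max_X R^{rc}_{\fff}(G_{d,k}(X),r)$ to the bound and $r$ to the target rank. Summing, the target rank becomes $(2d-2)r$ and the rigidity bound becomes $\big(\max_X R^{rc}_{\fff}(G_{d,k}(X),r)\big)^{2d-2}$, which is exactly the claimed inequality; the hypothesis $r\le d^k = n$ just ensures the target ranks stay meaningful.

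The main thing to be careful about — though it is routine rather than a genuine obstacle — is bookkeeping the Kronecker/permutation interleaving: after taking the Kronecker product, consecutive monomial factors (a Kronecker product of permutations, then possibly a Kronecker product of diagonals, then more permutations) can be merged, and one should check that between any two V-matrix factors there is at most the product of monomials, so that the count of V-matrix factors is exactly $2(d-1)$ and not more. Since Corollary~\ref{cor:decomp} already arranges each $M_i$ in this alternating shape with exactly $2(d-1)$ V-matrices, the Kronecker product inherits the same shape, and the argument goes through cleanly.
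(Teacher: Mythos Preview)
Your proposal is correct and follows essentially the same route as the paper: decompose each $M_i$ via Corollary~\ref{cor:decomp}, pull the Kronecker product inside using the mixed-product identity, observe that the permutation/diagonal factors tensor to monomial matrices with $R^{rc}(\cdot,0)=1$, and then apply Lemma~\ref{lem:rigidity-prod} across the $2d-2$ V-matrix factors. Your explicit remark that transposition preserves row-column rigidity (needed for the $G_{d,k}(Y)^T$ factors) is a point the paper leaves implicit.
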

\begin{proof} By Corollary~\ref{cor:decomp}, for each $i\in [k]$ there exist $2(d-1)$ vectors $X_{i}^{(j)}, Y_{i}^{(j)}\in \fff^{d}$, $2(d-1)$ permutation matrices $P_{i}^{(j)}, Q_{i}^{(j)}$ and a diagonal matrix $W_i$ such that
\[ M_i = \prod\limits_{j=1}^{d-1}\left(Q_{i}^{(j)}\cdot G_{d}(Y^{(j)}_i)^T\right)\cdot  W_i \cdot \prod\limits_{j=1}^{d-1}\left(G_d(X^{(d-j)}_i)\cdot P^{(d-j)}_i\right). \]  
Then, using that $(A_1\otimes B_1)\cdot (A_2\otimes B_2) = (A_1A_2)\otimes (B_1B_2)$ holds for any matrices, we get 
\[\bigotimes\limits_{i=1}^{k} M_i = \prod\limits_{j=1}^{d-1} \left(\bigotimes\limits_{i=1}^{k} Q_{i}^{(j)} \bigotimes\limits_{i=1}^{k} G_{d}(Y^{(j)}_i)^T\right)\cdot \left(\bigotimes\limits_{i=1}^{k} W_i\right)\cdot \prod\limits_{j=1}^{d-1} \left( \bigotimes\limits_{i=1}^{k} G_{d}(X^{(d-j)}_i) \bigotimes\limits_{i=1}^{k} P_{i}^{(d-j)}\right). 
\]
Let $X^{(j)}\in \fff^{d\times k}$ be a matrix whose $i$-th column is $X_{i}^{(j)}$. Define $Y^{(j)}\in \fff^{d\times k}$ similarly. Let
\begin{equation} Q^{(j)} = \bigotimes\limits_{i=1}^{k} Q_{i}^{(j)}, \qquad P^{(j)} = \bigotimes\limits_{i=1}^{k} P_{i}^{(j)},\qquad W = \bigotimes\limits_{i=1}^{k} W_i.
\end{equation} 
Then, $Q^{(j)}$ and $P^{(j)}$ are permutation matrices, and $W$ is a diagonal matrix, and
\begin{equation}
M = \bigotimes\limits_{i=1}^{k} M_i = \prod\limits_{j=1}^{d-1}\left(Q^{(j)}\cdot G_{d, k}(Y^{(j)})^T\right)\cdot  W \cdot \prod\limits_{j=1}^{d-1}\left(G_{d, k}(X^{(d-j)})\cdot P^{(d-j)}\right).
\end{equation}
Thus, by Lemma~\ref{lem:rigidity-prod},
\begin{equation}
\begin{split}
R_{\fff}^{rc}\left(M, (2d-2)r\right) & \leq \prod\limits_{j=1}^{d-1} R_{\fff}^{rc}\left(G_{d, k}(Y^{(j)}), r\right)\cdot \prod\limits_{j=1}^{d-1} R_{\fff}^{rc}\left(G_{d, k}(X^{(j)}), r\right) \leq \\
& \leq  \left(\max\limits_{X\in \fff^{d\times k}}R_{\fff}^{rc}\left(G_{d, k}(X), r\right)\right)^{2d-2}.   \qedhere   
\end{split}       
\end{equation} 
\end{proof}

Furthermore, we note that a similar statement holds for Kronecker products of matrices of not necessarily equal size.

\begin{lemma}\label{lem:rigid-reduct-Gd-gen} Let $d_i\geq 2$ and $M_i\in \fff^{d_i\times d_i}$ for $i\in [k]$. Assume $d_k\geq d_i$ for $i\in [k]$. Then
\begin{equation}
R_{\fff}^{rc}\left(\bigotimes\limits_{i=1}^{k} M_i, (2d_k-2)r\right) \leq \left(\max\limits_{X\in \fff^{\vec{d}}}R_{\fff}^{rc}\left(G_{\vec{d}\, }(X), r\right)\right)^{2d_k-2}.  
\end{equation}
\end{lemma}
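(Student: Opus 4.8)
The plan is to follow the proof of Lemma~\ref{lem:rigid-reduct-Gd} essentially verbatim; the one new ingredient needed to accommodate matrices of different sizes is a padding trick. First I would apply Corollary~\ref{cor:decomp} to each factor: for every $i\in[k]$ it produces a diagonal matrix $W_i$, permutation matrices $Q_i^{(j)},P_i^{(j)}$ and vectors $Y_i^{(j)},X_i^{(j)}\in\fff^{d_i}$ ($1\le j\le d_i-1$) with
\[
M_i=\prod_{j=1}^{d_i-1}\bigl(Q_i^{(j)}\cdot G_{d_i}(Y_i^{(j)})^T\bigr)\cdot W_i\cdot\prod_{j=1}^{d_i-1}\bigl(G_{d_i}(X_i^{(d_i-j)})\cdot P_i^{(d_i-j)}\bigr).
\]
The obstruction to mimicking Lemma~\ref{lem:rigid-reduct-Gd} directly is that the number $d_i-1$ of V-matrix factors on each side varies with $i$, so the factorizations do not line up when we take $\bigotimes_i$.

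The padding trick removes this obstruction. The key observation is that the identity matrix $I_d$ is \emph{simultaneously} a permutation matrix and a V-matrix: indeed $I_d=G_d(e_d)$ where $e_d=(0,\dots,0,1)^T\in\fff^d$, and $G_d(e_d)^T=I_d$ as well. Since by hypothesis $d_i\le d_k$ for all $i$, I would, for each $i$ with $d_i<d_k$, insert $d_k-d_i$ trivial blocks $I_{d_i}\cdot G_{d_i}(e_{d_i})^T$ at the outer end of the left product and $d_k-d_i$ trivial blocks $G_{d_i}(e_{d_i})\cdot I_{d_i}$ at the outer end of the right product (inserting identity factors does not change the product). After relabelling, this expresses every $M_i$ as a product with exactly $d_k-1$ factors $Q_i^{(j)}\,G_{d_i}(Y_i^{(j)})^T$ on the left and exactly $d_k-1$ factors $G_{d_i}(X_i^{(d_k-j)})\,P_i^{(d_k-j)}$ on the right, surrounding $W_i$, where now $X_i^{(j)},Y_i^{(j)}\in\fff^{d_i}$ are defined for all $1\le j\le d_k-1$. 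This is precisely where the hypothesis $d_k\ge d_i$ enters: we only ever pad \emph{up}.

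Now all $k$ factorizations have the same shape, so I can pull $\bigotimes_i$ through them using the mixed-product property $(A_1\otimes B_1)(A_2\otimes B_2)=(A_1A_2)\otimes(B_1B_2)$ together with the fact that transposition commutes with $\otimes$. Setting $Q^{(j)}=\bigotimes_i Q_i^{(j)}$, $P^{(j)}=\bigotimes_i P_i^{(j)}$ (permutation matrices), $W=\bigotimes_i W_i$ (diagonal), and $X^{(j)}=(X_1^{(j)},\dots,X_k^{(j)})\in\fff^{\vec{d}}$ and $Y^{(j)}$ analogously, this gives
\[
M=\bigotimes_{i=1}^k M_i=\prod_{j=1}^{d_k-1}\bigl(Q^{(j)}\cdot G_{\vec{d}\,}(Y^{(j)})^T\bigr)\cdot W\cdot\prod_{j=1}^{d_k-1}\bigl(G_{\vec{d}\,}(X^{(d_k-j)})\cdot P^{(d_k-j)}\bigr).
\]

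Finally I would apply Lemma~\ref{lem:rigidity-prod} iteratively to this product of $4(d_k-1)+1$ matrices, assigning target rank $r$ to each of the $2(d_k-1)$ factors of the form $G_{\vec{d}\,}(\cdot)$ or $G_{\vec{d}\,}(\cdot)^T$, and target rank $0$ to each of the $2(d_k-1)$ permutation factors and to the diagonal factor $W$. Since a monomial matrix $P$ satisfies $R_{\fff}^{rc}(P,0)\le 1$ (this covers both the permutations and $W$), and since transposing a matrix preserves row–column rigidity, this yields
\[
R_{\fff}^{rc}\bigl(M,(2d_k-2)r\bigr)\le\prod_{j=1}^{d_k-1}R_{\fff}^{rc}\bigl(G_{\vec{d}\,}(Y^{(j)}),r\bigr)\cdot\prod_{j=1}^{d_k-1}R_{\fff}^{rc}\bigl(G_{\vec{d}\,}(X^{(j)}),r\bigr)\le\Bigl(\max_{X\in\fff^{\vec{d}}}R_{\fff}^{rc}\bigl(G_{\vec{d}\,}(X),r\bigr)\Bigr)^{2d_k-2},
\]
which is the claim. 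I do not expect a genuine difficulty here: the argument is parallel to the equal-size case, and the only slightly delicate point is the bookkeeping in the padding step, which is painless once one notices the dual role of $I_d$ as both a permutation matrix and a V-matrix.
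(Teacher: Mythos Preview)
Your proposal is correct and follows essentially the same approach as the paper: pad each factorization from Corollary~\ref{cor:decomp} with $2(d_k-d_i)$ identity matrices on each side so that all $k$ products have the common length $4d_k-3$, then proceed exactly as in Lemma~\ref{lem:rigid-reduct-Gd}. Your observation that $I_{d_i}=G_{d_i}(e_{d_i})$ plays the dual role of permutation matrix and V-matrix is precisely the content of the padding step, spelled out in slightly more detail than the paper gives.
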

\begin{proof} For $M_i$ with $d_i<d_k$ we multiply the decomposition given by Corollary~\ref{cor:decomp}, by $2(d_k - d_i)$ identity matrices from the left and from the right, so that every $M_i$ is decomposed into the product of precisely $4d_k - 3$ matrices. After this, the proof is identical to the proof of the previous lemma. 
\end{proof}

\subsection{General approach to upper bounds for the rigidity of {$G_{\vec{d}\, }(X)$}}
\label{sec:rigid-Gd-gen}

As we see from Lemmas~\ref{lem:rigid-reduct-Gd} and~\ref{lem:rigid-reduct-Gd-gen}, in order to prove Theorems~\ref{thm:unequal-intr-no-assum} and~\ref{thm:main-intr} it is sufficient to bound the row-column rigidity of Kronecker products of V-matrices. We describe a general approach to such bounds.

The key observation is that most of the non-zero entries of $G_{\vec{d}\, }(X)$ are concentrated in just a few columns and a few rows. Hence, after deleting these columns and rows we expect to get a matrix with very sparse rows and columns.

Recall, that for $\vec{d} = (d_1, d_2, \ldots, d_k)$ we define $[\vec{d}] = [d_1]\times [d_2]\times \ldots \times [d_k]$. The rows and the columns of $G_{\vec{d}\, }(X)$ are indexed by tuples in $[\vec{d}]$.

For a $d_i\times d_i$ V-matrix all rows, except the $d_i$-th row, have up to two non-zero entries, while the $d_i$-th row has only one non-zero entry. For V-matrices, all columns, except the last one, have one non-zero entry and the last column may have up to $d_i$ non-zero entries. Therefore, the densest rows are indexed by tuples with a small number of $i$-th coordinates being equal $d_i$ and the densest columns are indexed by tuples with a large number of $i$-th coordinates being equal $d_i$, for $i\in [k]$.

Let $w: \rrr\rightarrow \rrr$ be a function. Define a score of a string $x\in [\vec{d}]$ as 
\begin{equation}\label{eq:sx}
s(x) = \sum\limits_{i=1}^{k} w(d_i)\mathbf{1}[x_i = d_i].
\end{equation}
Consider
the  
uniform distribution on $[\vec{d}]$. Then $\chi(i) = w(d_i)\mathbf{1}[x_i = d_i]$ are independent random variables, and $s(x) = \sum\limits_{i=1}^{k} \chi(i)$ with 
\begin{equation}
 \expv_x [s(x)] = \sum\limits_{i=1}^{k} \frac{w(d_i)}{d_i} 
\quad \text{and} \quad \Var[s(x)] = \sum\limits_{i=1}^{k} w(d_i)^{2}\left(\dfrac{d_i - 1}{d_i^2}\right). 
\end{equation} 
For $\delta>0$ we define a pair of sets of strings of high and low scores, respectively
\begin{equation}
\mathcal{C}(\delta) = \{x\in [\vec{d}] \mid s(x) \geq \expv_y [s(y)]+\delta\} \quad \text{and} \quad \mathcal{R}(\delta) = \{x\in [\vec{d}] \mid s(x) \leq \expv_y [s(y)]-\delta\}. 
\end{equation}
These sets correspond to indices of the most dense columns and rows, respectively. Note, that one may use concentration inequalities to bound the sizes of
these sets (see Sec.~\ref{sec:tail-bounds}). \\
The matrix $E$ defined by the union of the rows  of $G_{\vec{d}}(X)$ in $\mathcal{R}(\delta)$ and the columns of $G_{\vec{d}}(X)$ in $\mathcal{C}(\delta)$ has low rank. Now we want to count the number of non-zero entries in rows and columns of $G_{\vec{d}}(X) - E$. Define
\begin{equation}
T_c(\delta, y) = \{x\in [\vec{d}]\setminus \mathcal{R}(\delta) \mid \forall i:\  y_i \in \{x_i, d_i\}\}\quad \text{and} \quad  M_c(\delta) = \max\limits_{y\in [\vec{d}]\setminus \mathcal{C}(\delta)} |T_c(\delta, y)|;
\end{equation}
\begin{equation}
T_r(\delta, x) = \{ y\in [\vec{d}]\setminus \mathcal{C}(\delta) \mid\forall i:\   y_i \in \{x_i, d_i\} \} \quad \text{and} \quad  M_r(\delta) = \max\limits_{x\in [\vec{d}]\setminus \mathcal{R}(\delta)} |T_r(\delta, x)|.
\end{equation}

With this notation, we have the following inequality.  

\begin{lemma}\label{lem:rig-general} Let $\delta>0$. Then, for any $X\in \fff^{\vec{d}}$,
\[ R_{\fff}^{rc}(G_{\vec{d}\, }(X), |\mathcal{C}(\delta)|+|\mathcal{R}(\delta)|)\leq \max(M_c(\delta), M_r(\delta)).\] 
\end{lemma}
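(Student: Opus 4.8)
The plan is to exhibit an explicit correction matrix, splitting $G_{\vec{d}}(X)$ into a low-rank part and a sparse part. First I would let $E$ be the matrix obtained from $G_{\vec{d}}(X)$ by retaining every entry that lies in a row indexed by an element of $\mathcal{R}(\delta)$ or in a column indexed by an element of $\mathcal{C}(\delta)$, and zeroing everything else; then set $Z := G_{\vec{d}}(X) - E$, so that $G_{\vec{d}}(X) - Z = E$. It then suffices to check two things: that $\rank(E) \le |\mathcal{C}(\delta)| + |\mathcal{R}(\delta)|$, and that every row and column of $Z$ has at most $\max(M_c(\delta), M_r(\delta))$ nonzero entries.

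The rank bound is the easy half. For $y \notin \mathcal{C}(\delta)$, the $y$-th column of $E$ agrees with the $y$-th column of $G_{\vec{d}}(X)$ on the coordinates indexed by $\mathcal{R}(\delta)$ and is zero elsewhere; hence it lies in the $|\mathcal{R}(\delta)|$-dimensional space of vectors supported on $\mathcal{R}(\delta)$. The remaining columns of $E$ number $|\mathcal{C}(\delta)|$. So the column space of $E$ has dimension at most $|\mathcal{C}(\delta)| + |\mathcal{R}(\delta)|$.

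For the sparsity of $Z$, the key elementary fact I would record first is the support pattern of a single V-matrix: an entry $G_{d_i}(X_i)_{a,b}$ can be nonzero only if $b \in \{a, d_i\}$ — for $b < d_i$ the $b$-th column is the standard basis vector $e_b$, forcing $a = b$, while the $d_i$-th column is unconstrained. Taking the Kronecker product, $G_{\vec{d}}(X)_{x,y} = \prod_{i} G_{d_i}(X_i)_{x_i,y_i}$ is nonzero only when $y_i \in \{x_i, d_i\}$ for every $i$. Now a row of $Z$ indexed by $x \in \mathcal{R}(\delta)$ is entirely zero, since that whole row was absorbed into $E$; and for $x \notin \mathcal{R}(\delta)$, a nonzero entry of $Z$ in row $x$ must sit in a column $y$ with $y \notin \mathcal{C}(\delta)$ (otherwise it too was absorbed into $E$) and with $y_i \in \{x_i, d_i\}$ for all $i$, that is, $y \in T_r(\delta, x)$; hence that row has at most $|T_r(\delta, x)| \le M_r(\delta)$ nonzero entries. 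The symmetric argument — columns indexed by $\mathcal{C}(\delta)$ vanish in $Z$, and a column $y \notin \mathcal{C}(\delta)$ of $Z$ has nonzero entries only in rows $x \in T_c(\delta, y)$ — bounds every column of $Z$ by $M_c(\delta)$. Putting these together yields $R_{\fff}^{rc}(G_{\vec{d}}(X), |\mathcal{C}(\delta)|+|\mathcal{R}(\delta)|) \le \max(M_c(\delta), M_r(\delta))$, as claimed.

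I do not anticipate a genuine obstacle here; the one place to be careful is the direction of the V-matrix support implication — it is the column index $y_i$ that gets pinned to $\{x_i, d_i\}$, not $x_i$ — since this is precisely what links the combinatorially defined quantities $M_c(\delta)$ and $M_r(\delta)$ to the actual zero pattern of $G_{\vec{d}}(X)$. Everything else is bookkeeping about which rows and columns have been moved into the low-rank part $E$.
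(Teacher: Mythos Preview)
Your proposal is correct and follows essentially the same approach as the paper: define $E$ as the part of $G_{\vec d}(X)$ supported on the rows in $\mathcal R(\delta)$ and columns in $\mathcal C(\delta)$, bound its rank by $|\mathcal C(\delta)|+|\mathcal R(\delta)|$, and use the support pattern $y_i\in\{x_i,d_i\}$ to bound the nonzeros of $Z=G_{\vec d}(X)-E$ row- and column-wise by $M_r(\delta)$ and $M_c(\delta)$. You supply a bit more detail on why the rank bound holds and why the support condition takes that form, but the argument is the same.
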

\begin{proof} Let $E$ be the matrix obtained from $G_{\vec{d\, }}(X)$ by changing to 0 every entry that is not in a column with index in $\mathcal{C}(\delta)$ and is not in a row with index in $\mathcal{R}(\delta)$. Then 
\begin{equation}
 \rank(E) \leq |\mathcal{C}(\delta)|+|\mathcal{R}(\delta)|.
 \end{equation}
Let $Z = G_{\vec{d}\, }(X) - E$ and $x, y \in [\vec{d}]$. Observe that the  entry of $G_{\vec{d}\, }(X)$ with coordinates $(x, y)$ is non-zero only if for every $i\in [k]$ either $x_i = y_i$ or $y_i = d_i$. Therefore, every row of $Z$ has at most $M_{r}(\delta)$ non-zero entries, and every column of $Z$ has at most $M_{c}(\delta)$ non-zero entries.
\end{proof}

Hence, in order to prove
an upper bound on the rigidity of  
$G_{\vec{d}\, }(X)$,   
one just may come up with a good choice of weights $w$ and a threshold $\delta>0$ which make all the quantities $|\mathcal{C}(\delta)|$, $|\mathcal{R}(\delta)|$, $M_c(\delta)$ and  $M_r(\delta)$ small.

\section{Rigidity of Kronecker products of matrices of uniform size}\label{sec:equal-size}

Now, we show how the
approach, described above, provides  
a strong upper bound on the rigidity of  
Kronecker products of matrices of uniform size.
We follow the notation introduced earlier.
\begin{theorem}  \label{thm:Gdnonrigid}
Given $d\ge 2$ and $0<\eps <1$, there exists
$\gamma = \Theta\left(\dfrac{1}{d\log d}\cdot\dfrac{\eps^2}{\log^{2}(1/\eps)}\right)$ 
such that for all $k\geq 1$ and $X\in \fff^{d\times k}$ we have 
$
  R_{\fff}^{rc}(G_{d, k}(X), n^{1-\gamma}) \leq n^{\eps/d}
$, where $n = d^k$.
\end{theorem}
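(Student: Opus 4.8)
The plan is to instantiate the strategy of Section~\ref{sec:strategy}: apply Lemma~\ref{lem:rig-general} to $G_{d,k}(X)$ with a judicious choice of weight function and threshold. In the equal-size case the weight function has a single value, which we normalise to $w(d)=1$, so the score is $s(x)=|\{i\in[k]:x_i=d\}|$; under the uniform distribution on $[d]^k$ it is $\mathrm{Bin}(k,1/d)$-distributed with mean $\mu=k/d$ and variance $\tfrac{k(d-1)}{d^2}=\Theta(k/d)$. By Lemma~\ref{lem:rig-general} it then suffices to find $\delta>0$ for which simultaneously $|\mathcal{C}(\delta)|+|\mathcal{R}(\delta)|\le n^{1-\gamma}$ and $\max\{M_c(\delta),M_r(\delta)\}\le n^{\eps/d}$, where $n=d^k$. (The estimates below require $k$ above a threshold depending on $d$ and $\eps$; for bounded $n$ the statement is treated separately and is in any case irrelevant to the applications, which carry their own lower bound $n\ge d^{1/\gamma}$.)

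First I would bound $M_c(\delta)$ and $M_r(\delta)$ by direct counting. An entry of $G_{d,k}(X)$ at $(x,y)$ can be nonzero only if $y_i\in\{x_i,d\}$ for every $i$. Fixing a column index $y\notin\mathcal{C}(\delta)$, the admissible rows $x$ are free on the $s(y)$ coordinates with $y_i=d$ and forced (and $\ne d$) elsewhere; requiring $x\notin\mathcal{R}(\delta)$ forces all but fewer than $2\delta$ of those free coordinates to satisfy $x_i=d$. Counting, and using $s(y)<\mu+\delta<2k/d$, gives $M_c(\delta)\le\sum_{t<2\delta}\binom{s(y)}{t}(d-1)^t\le\mathrm{poly}(\delta)\cdot(\eee k/\delta)^{2\delta}$. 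Symmetrically, for a row index $x\notin\mathcal{R}(\delta)$ the admissible columns $y\notin\mathcal{C}(\delta)$ are counted by $M_r(\delta)\le\sum_{\ell<2\delta}\binom{k}{\ell}\le(\eee k/(2\delta))^{2\delta}$ by Lemma~\ref{lem:tail-bound}. Taking $\log_d$, both quantities are at most $2\delta\log_d(\eee k/\delta)+o(\eps k/d)$, so I would choose $\delta$ of order $\dfrac{\eps k\log d}{d(\log d+\log(1/\eps))}$ — precisely, the largest $\delta<k/d$ with $2\delta\log_d(\eee k/\delta)\le\tfrac12\eps k/d$ — which is exactly the value making $\max\{M_c(\delta),M_r(\delta)\}\le n^{\eps/d}$.

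With $\delta$ fixed, it remains to bound $|\mathcal{C}(\delta)|+|\mathcal{R}(\delta)|$, and this is where the sharp tail estimate for $\mathrm{Bin}(k,1/d)$ is used: applying Bernstein's inequality (Theorem~\ref{upper-bern}) with the genuine variance $\Theta(k/d)$ — equivalently a multiplicative Chernoff bound — yields, for $\delta\le\mu$, that $\prob[\,|s-\mu|\ge\delta\,]\le 2\exp(-c\delta^2 d/k)$ for an absolute constant $c>0$, hence $|\mathcal{C}(\delta)|+|\mathcal{R}(\delta)|\le 2d^k\exp(-c\delta^2 d/k)$. This is at most $n^{1-\gamma}=d^{k(1-\gamma)}$ as soon as $c\delta^2 d/k\ge k\gamma\ln d+\ln2$, i.e. $\gamma\lesssim c\delta^2 d/(k^2\ln d)$. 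Substituting the chosen $\delta$ gives $\gamma=\Omega\!\left(\dfrac{\eps^2\log d}{d(\log d+\log(1/\eps))^2}\right)$, and since $\log d\cdot\log(1/\eps)=\Omega(\log d+\log(1/\eps))$ for $d\ge2$ and, say, $\eps\le\tfrac12$, this is $\Omega\!\left(\dfrac{1}{d\log d}\cdot\dfrac{\eps^2}{\log^2(1/\eps)}\right)$; taking $\gamma$ equal to the latter keeps every inequality valid, as decreasing $\gamma$ only relaxes the target rank.

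The quantitative heart of the argument — and the source of the improvement over Theorem~\ref{thm:Alman} — is this last step: because $\mathrm{Bin}(k,1/d)$ has variance $\Theta(k/d)$ rather than $\Theta(k)$, one must not lose a factor of $d^2$ in the exponent of the tail bound; a careless estimate would yield only $\gamma=\Omega(1/d^3)$. The combinatorial bounds on $M_c,M_r$ are routine once one notices that the gap between the row threshold $\mu-\delta$ and the column threshold $\mu+\delta$ is exactly $2\delta$; the single delicate point there is balancing the two competing constraints on $\delta$, which is what forces the logarithmic correction factor and thus the $\log d$ and $\log^2(1/\eps)$ in the denominator of $\gamma$.
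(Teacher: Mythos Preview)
Your proposal is correct and follows essentially the same route as the paper's proof: both take $w\equiv 1$ so that $s(x)$ is $\mathrm{Bin}(k,1/d)$, bound $|\mathcal C|+|\mathcal R|$ via Bernstein using the true variance $\Theta(k/d)$, and bound $M_c,M_r$ by the same ``gap at most $2\delta$'' counting, then balance the two constraints to pin down $\delta$ and $\gamma$. The only cosmetic differences are that you work with an absolute threshold where the paper uses $\delta k$, and that your optimisation of $\delta$ yields the slightly sharper intermediate form $\gamma=\Omega\!\big(\eps^2\log d\,/\,(d(\log d+\log(1/\eps))^2)\big)$ before you relax it to the stated $\Theta$; the paper's choice $\delta=\Theta(\eps/(d\log(1/\eps)))$ lands directly on the stated order.
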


\begin{proof} We use the approach described in the previous section. We take $w(x) = 1$, then $s(x)$ given by Eq.~\eqref{eq:sx} counts the number of coordinates equal to $d$. A simple computation gives
\begin{equation}
\expv[s(x)] = k/d \qquad \text{and} \qquad \Var[s(x)] = k(d - 1)/d^2.
\end{equation}
Then, by the Bernstein inequality (see Corollary~\ref{cor:bernstein}), for $\delta<1/d$,
\begin{equation}
|\mathcal{C}(\delta k)| = d^k \cdot  \prob\left[s(x)\geq k/d+\delta k \right]\leq 
\exp \left(\left(\ln d - \frac{\delta^2d}{3}\right)k\right).
\end{equation}

\begin{equation}
|\mathcal{R}(\delta k)| = d^k \cdot  \prob\left[s(x)\leq k/d-\delta k \right] \leq  \exp \left(\left(\ln d - \frac{\delta^2d}{3}\right)k\right).
\end{equation}
Therefore, for $\delta = \Theta\left(\dfrac{\eps}{\log(1/\eps)d}\right)$,
\begin{equation}
 |\mathcal{R}(\delta k)|+|\mathcal{C}(\delta k)|\leq d^{(1-\gamma) k} = n^{1-\gamma},\quad \text{for some}\quad \gamma = \Theta\left(\dfrac{\eps^2}{d\log d\log^2(1/\eps)}\right). 
\end{equation}

For a string $x\in [d]^k$ define the set $S_x = \{i\mid x_i = d\}$. Recall, that for every $y \in [d]^k\setminus \mathcal{C}(\delta k)$ we have $|S_y|\leq k(1/d+\delta)$ and for every $x \in [d]^k\setminus \mathcal{R}(\delta k)$ we have $|S_x|\geq k(1/d-\delta)$.

For each $y \in [d]^k\setminus \mathcal{C}(\delta k)$, a vector $x\in T_{c}(\delta k, y)$ can be described by picking a subset $U$ of $S_y$ of the size at most $2\delta k$ such that $U = S_y\setminus S_x$ and by picking $x_j\in [d-1]$ for every $j\in U$. So, by Lemma~\ref{lem:tail-bound}, 
\begin{equation}
|T_{c}(\delta k, y)|  \leq \sum\limits_{i\leq 2\delta k} \binom{|S_y|}{i}(d-1)^i\leq \left(\dfrac{\eee|S_y|(d-1)}{2\delta k}\right)^{2\delta k}\leq \exp(2\delta\ln(3/\delta)k). 
\end{equation} 
Similarly, for every row index $x\in [d]^k\setminus \mathcal{R}(\delta k)$, a column $y\in T_{r}(\delta k, x)$ can be described by picking a subset $U\subseteq [k]\setminus S_x$ of size at most $2\delta k$ and by setting $y_j = d$ for $j\in U$ and $y_j = x_j$ for $j\notin U$. Hence, by Lemma~\ref{lem:tail-bound}, 
\begin{equation}
|T_{r}(\delta k, x)|  \leq \sum\limits_{i\leq 2\delta k} \binom{k-|S_x|}{i}\leq \left(\dfrac{\eee k}{2\delta k}\right)^{2\delta k}\leq \exp(2\delta\ln(2/\delta)k).
\end{equation} 
 Therefore, 
\begin{equation}
\max(M_c(\delta k), M_r(\delta k)) \leq \exp\left(2\delta\ln(3/\delta)k\right)\leq d^{\displaystyle \eps k/d}, \quad \text{for some } \delta = \Theta\left(\dfrac{\eps}{\log(1/\eps)d}\right).
\end{equation} 
Hence, the conclusion of the theorem follows from Lemma~\ref{lem:rig-general}.
\end{proof}

Finally, we can deduce our improved bound for the rigidity of
Kronecker products of matrices of uniform size.

\begin{theorem}\label{thm:equal-sizes}
Given $d\ge 2$ and $0<\eps <1$, there exists
$\gamma = \Omega\left(\dfrac{1}{d\log d}\cdot \dfrac{\eps^2}{\log^2(1/\eps)}\right)$
such that the following holds for any
$M_1, M_2, \ldots, M_k \in \fff^{d\times d}$ with $k > 1/\gamma$.
Let $M = \bigotimes\limits_{i=1}^{k} M_i$ and $n = d^k$.  Then
\[ R_{\fff}^{rc}\left(M, n^{1-\gamma}\right)\leq n^{\eps}, \quad \text{ and so } \quad R_{\fff}\left(M, n^{1-\gamma}\right)\leq n^{1+\eps}.\] 
\end{theorem}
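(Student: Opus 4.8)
The plan is to combine two facts already proved --- the product inequality for $V$-matrix decompositions (Lemma~\ref{lem:rigid-reduct-Gd}) and the rigidity bound for Kronecker products of $V$-matrices (Theorem~\ref{thm:Gdnonrigid}) --- followed by a short parameter bookkeeping. First I would apply Theorem~\ref{thm:Gdnonrigid} with $\eps$ replaced by $\eps/2$, obtaining a constant
\[ \gamma_0 \;=\; \Theta\!\left(\tfrac{1}{d\log d}\cdot\tfrac{\eps^2}{\log^2(1/\eps)}\right) \]
(the substitution $\eps\mapsto\eps/2$ only affects the hidden constant) such that $R_{\fff}^{rc}\bigl(G_{d,k}(X),\,n^{1-\gamma_0}\bigr)\le n^{\eps/(2d)}$ for every $k\ge 1$ and every $X\in\fff^{d\times k}$, where $n=d^k$. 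Plugging $r=n^{1-\gamma_0}$ into Lemma~\ref{lem:rigid-reduct-Gd} then yields
\[ R_{\fff}^{rc}\!\left(M,\,(2d-2)\,n^{1-\gamma_0}\right)\;\le\;\left(n^{\eps/(2d)}\right)^{2d-2}\;\le\;n^{\eps}, \]
since $(2d-2)/(2d)<1$.

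The remaining step is to rewrite the target rank $(2d-2)\,n^{1-\gamma_0}$ in the clean form $n^{1-\gamma}$, which I would do by shrinking $\gamma_0$ by a constant factor and invoking monotonicity of rigidity in the target rank (larger target rank means smaller rigidity). It suffices to pick $\gamma>0$ with $(2d-2)\,n^{1-\gamma_0}\le n^{1-\gamma}$, i.e.\ $2d-2\le d^{\,k(\gamma_0-\gamma)}$. This is exactly where the hypothesis $k>1/\gamma$ enters: choosing $\gamma=\gamma_0/3$ gives $k>3/\gamma_0$, hence $k(\gamma_0-\gamma)=\tfrac23 k\gamma_0>2$, so $d^{\,k(\gamma_0-\gamma)}>d^2\ge 2d-2$ for all $d\ge2$. (The same estimate shows $(2d-2)\,n^{1-\gamma_0}<n$, so the target rank is admissible, and trivially $n^{1-\gamma_0}\le d^k$, so Lemma~\ref{lem:rigid-reduct-Gd} is applicable.) With $\gamma=\gamma_0/3=\Omega\!\left(\tfrac{1}{d\log d}\cdot\tfrac{\eps^2}{\log^2(1/\eps)}\right)$ we then get $R_{\fff}^{rc}(M,n^{1-\gamma})\le R_{\fff}^{rc}\bigl(M,(2d-2)n^{1-\gamma_0}\bigr)\le n^{\eps}$, and the plain-rigidity conclusion $R_{\fff}(M,n^{1-\gamma})\le n\cdot n^{\eps}=n^{1+\eps}$ follows from the elementary inequality $R_{\fff}(A,r)\le n\cdot R_{\fff}^{rc}(A,r)$.

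I do not anticipate a genuine difficulty here: the real work is hidden inside Theorem~\ref{thm:Gdnonrigid} (the choice of the weight function $w\equiv 1$, the threshold $\delta$, and the Bernstein and binomial tail estimates of Section~\ref{sec:tail-bounds}). The only points requiring care are the two places where the factor $2d-2$ coming from the $(2d-2)$-fold product of $V$-matrix factors must be absorbed: in the exponent --- which is why we start from $\eps/2$ rather than $\eps$, so that $(n^{\eps/(2d)})^{2d-2}\le n^{\eps}$ --- and in the target rank --- handled by replacing $\gamma_0$ with $\gamma_0/3$, which is legitimate precisely because of the assumption $k>1/\gamma$. Degenerate ranges of $\eps$ near $1$ affect only hidden constants.
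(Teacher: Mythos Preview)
Your proposal is correct and follows essentially the same route as the paper: apply Theorem~\ref{thm:Gdnonrigid} with $\eps/2$, feed the result into Lemma~\ref{lem:rigid-reduct-Gd}, and then set $\gamma=\gamma_0/3$ so that the hypothesis $k>1/\gamma$ absorbs the extra $2d-2$ factor in the target rank. The paper's proof is the same argument, only more tersely written (it uses $2d$ in place of $2d-2$ and the equivalent inequality $2d\cdot d^{-2\gamma k}\le 1$).
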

\begin{proof} By Theorem~\ref{thm:Gdnonrigid} and Lemma~\ref{lem:rigid-reduct-Gd}, for $\eps' = \eps/2$ we have
\[ R_{\fff}^{rc}\left(M, 2d\cdot d^{(1-\gamma')k}\right)\leq d^{2\eps' k} \quad \text{for some} \quad \gamma' = \Theta\left(\dfrac{\eps^2}{\log^2(1/\eps)d\log d}\right).\]
Pick $\gamma = \gamma'/3$ and note that for $k>1/\gamma$ we have $2d\cdot d^{-2\gamma k}\leq 2d\cdot d^{-2}\leq 1$.
\end{proof}

\section{Rigidity bounds for matrices of unequal sizes} 
\label{sec:nonequal-size}

In this section we show how a similar approach may be used to
show upper bounds on 
the rigidity of Kronecker products of matrices of
not necessarily equal sizes.  

\begin{theorem}\label{thm:unequal} 
Let $0<\eps<1$ and let $w(x)\geq 1$ be a non-decreasing function. There exists an absolute constant $c>0$ such that the following is true. Let $2\leq d_1 \leq d_2 \leq \ldots \leq d_k$ be integers. Let $L>0$ and $K>0$ be such that  
\begin{enumerate}[(i)]
\item $\sum\limits_{i=1}^{k}w(d_i)/d_i\leq K\left(\sum\limits_{i=1}^{k} \log d_i / \log d_k\right) \cdot (w(d_1)/d_k)$, and  
\item $\sum\limits_{i=1}^{k}w(d_i)/d_i\geq L \left(\sum\limits_{i=1}^{k} \log d_i/\log d_k\right) \cdot (w(d_k)/d_k)$.
\end{enumerate}
 Let $n = \prod\limits_{i=1}^{k} d_i$. Consider arbitrary matrices $M_i\in \fff^{d_i\times d_i}$ and let $M = \bigotimes\limits_{i=1}^{k} M_i$. Then 
 \[ R^{rc}_{\fff}(M, 2d_k\cdot n^{1-\gamma})\leq n^{\eps}\quad \text{where}\quad \gamma= \dfrac{c L \eps^2}{d_k \log d_k \cdot K^2\log^2(K/\eps)}. \] 
\end{theorem}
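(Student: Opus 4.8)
The plan is to mimic the proof of Theorem~\ref{thm:Gdnonrigid} using the general machinery of Section~\ref{sec:rigid-Gd-gen}, but now with the non-trivial weight function $w$ and a carefully chosen threshold $\delta$. By Lemma~\ref{lem:rigid-reduct-Gd-gen} it suffices to bound $\max_{X}R^{rc}_{\fff}(G_{\vec d}(X),r)$ for a suitable $r$, raised to the power $2d_k-2$; and by Lemma~\ref{lem:rig-general} the latter reduces to making all four quantities $|\calC(\delta)|,|\calR(\delta)|,M_c(\delta),M_r(\delta)$ simultaneously small. So the whole proof is a bookkeeping exercise in the style of Theorem~\ref{thm:Gdnonrigid}, with the hypotheses (i) and (ii) playing the role that the equal-size computation $\expv[s]=k/d$, $\Var[s]=k(d-1)/d^2$ played there.

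First I would record the elementary estimates. Writing $\mu := \expv_x[s(x)] = \sum_i w(d_i)/d_i$ and $V := \Var[s(x)] = \sum_i w(d_i)^2(d_i-1)/d_i^2 \le \sum_i w(d_i)^2/d_i$, note that $|\chi(i)|\le w(d_i)\le w(d_k)$, so Bernstein's inequality (Theorem~\ref{upper-bern}) gives, for a threshold $\delta$,
\[
|\calC(\delta)| = n\cdot\prob[s(x)\ge \mu+\delta] \le n\cdot\exp\!\Big(-\frac{\delta^2/2}{V + w(d_k)\delta/3}\Big),
\]
and the same bound for $|\calR(\delta)|$. I want $\delta$ of the form $\delta = \alpha\,\mu$ for a small parameter $\alpha$ to be fixed; hypotheses (i) and (ii) are exactly what let me compare $\mu$, $V$, $w(d_k)$ and $\sum_i \log d_i$. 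Concretely, set $N := \sum_i \log d_i$ so that $n = e^{N}$; hypothesis (ii) gives $\mu \ge L\,(N/\log d_k)\,w(d_k)/d_k$, which controls $\delta = \alpha\mu$ from below in terms of $w(d_k)$, and hypothesis (i) gives the matching upper bound $\mu \le K\,(N/\log d_k)\,w(d_1)/d_k$, which (with $w$ non-decreasing, $w(d_1)\le w(d_i)$) lets me bound $V \le w(d_k)\mu$ and hence bound the Bernstein exponent by roughly $\delta^2/(C w(d_k)\mu) \gtrsim \alpha^2\mu/w(d_k) \gtrsim \alpha^2 L N/(d_k\log d_k)$. Choosing $\alpha = \Theta(\eps/(K\log(K/\eps)))$ as in Theorem~\ref{thm:Gdnonrigid} makes this exponent at least $\gamma N$ with $\gamma = \Theta(L\eps^2/(d_k\log d_k\cdot K^2\log^2(K/\eps)))$ after subtracting the $\ln n = N$ term, so $|\calC(\delta)|+|\calR(\delta)| \le n^{1-\gamma}$.

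Next I would bound $M_c(\delta)$ and $M_r(\delta)$. Fix $y\notin\calC(\delta)$. A string $x\in T_c(\delta,y)$ is obtained by selecting a subset $U$ of the coordinates $i$ with $y_i = d_i$ (on which $x_i$ will be chosen in $[d_i-1]$, while $x_i = y_i$ off $U$), subject to the constraint that flipping these coordinates drops the score by more than $\delta$, i.e.\ $\sum_{i\in U}w(d_i) \le \delta$ (since $x\notin\calR(\delta)$ forces $s(x)\ge\mu-\delta$ and $s(y)\le\mu+\delta$... the precise bookkeeping gives $\sum_{i\in U} w(d_i)\le 2\delta$). Since each chosen $w(d_i)\ge 1 = w$... here I use $w(x)\ge 1$, so $|U|\le 2\delta$, and the number of choices for $x_i\in[d_i-1]$ is at most $d_k-1$ each; thus by Lemma~\ref{lem:tail-bound},
\[
|T_c(\delta,y)| \le \sum_{j\le 2\delta}\binom{k}{j}(d_k-1)^{j} \le \Big(\frac{\eee k (d_k-1)}{2\delta}\Big)^{2\delta},
\]
and symmetrically for $|T_r(\delta,x)|$ with the factor $(d_k-1)$ absent. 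Taking logarithms, $\ln M_c(\delta) \le 2\delta\ln(\eee k d_k/(2\delta))$; since $\delta = \alpha\mu$ and $k \le N$ (as each $d_i\ge 2$) while $\mu$ is bounded below via (ii), this is $O(\alpha N \log(d_k/\alpha)) $, and the choice $\alpha = \Theta(\eps/(K\log(K/\eps)))$ makes it at most $\eps N = \ln n^{\eps}$, possibly after shrinking the implied constants. Combining with Lemma~\ref{lem:rig-general} yields $R^{rc}_{\fff}(G_{\vec d}(X),n^{1-\gamma})\le n^{\eps/(2d_k)}$ type bound, and then Lemma~\ref{lem:rigid-reduct-Gd-gen} gives $R^{rc}_{\fff}(M, (2d_k-2)n^{1-\gamma})\le (n^{\eps/(2d_k-2)})^{2d_k-2} = n^{\eps}$, i.e.\ the claimed inequality (absorbing the factor $2d_k$ into the target rank as in the statement).

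\textbf{Main obstacle.} The arithmetic heart is the second paragraph: showing that hypotheses (i)--(ii) really do force the Bernstein exponent to beat $\ln n = N$ by the stated margin. The delicate point is that $\delta$ appears quadratically in the numerator but only linearly (via $w(d_k)\delta/3$) in the denominator, so for the bound to be useful I need $\delta \lesssim V/w(d_k)$, i.e.\ $\delta$ must not be so large that the $w(d_k)\delta/3$ term dominates; verifying that the choice $\delta = \alpha\mu$ with the prescribed $\alpha$ lands in the "variance-dominated" regime is where hypotheses (i) and (ii) are used in tandem, and getting the constants to line up so that the final $\gamma$ has exactly the advertised form $c L\eps^2/(d_k\log d_k\, K^2\log^2(K/\eps))$ — with the correct powers of $K$ and $\log(K/\eps)$ — is the part that requires care rather than cleverness. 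A secondary subtlety is confirming the combinatorial bound $\sum_{i\in U}w(d_i)\le 2\delta$ for strings in $T_c(\delta,y)\setminus\calR(\delta)$ (and its row analogue), which is where $w\ge 1$ is essential to convert a bound on $\sum_{i\in U}w(d_i)$ into a bound on $|U|$.
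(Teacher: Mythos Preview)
Your approach is exactly the paper's: reduce to $G_{\vec d}(X)$ via Lemma~\ref{lem:rigid-reduct-Gd-gen}, apply Lemma~\ref{lem:rig-general} with the score $s(x)$ of Section~\ref{sec:rigid-Gd-gen}, bound $|\calC|,|\calR|$ by Bernstein, and bound $M_c,M_r$ combinatorially.  The ``main obstacle'' you flag is not one: since $V\le w(d_k)\mu$ and your threshold is $\alpha\mu$ with $\alpha<1$, the Bernstein denominator satisfies $V+w(d_k)\alpha\mu/3\le (4/3)w(d_k)\mu$, so the exponent is automatically $\Omega(\alpha^2\mu/w(d_k))$ with no regime check needed.

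There is, however, a genuine bookkeeping slip in your $M_c,M_r$ step.  From $\sum_{i\in U}w(d_i)\le 2\delta$ you extract $|U|\le 2\delta$ using only $w\ge 1$.  That loses a crucial factor: you should use that $w$ is non-decreasing to get $w(d_i)\ge w(d_1)$ and hence $|U|\le 2\delta/w(d_1)=2\alpha\mu/w(d_1)$.  The point of hypothesis~(i) is precisely that the $w(d_1)$ in its right-hand side then cancels this $w(d_1)$, yielding $|U|\le 2\alpha K\ln n/(d_k\ln d_k)$ free of $w$.  With your bound $|U|\le 2\alpha\mu$ and then (i), you are left with an uncancelled $w(d_1)$ multiplying the exponent, so your claimed estimate $\ln M_c(\delta)=O(\alpha N\log(d_k/\alpha))$ does not follow for general $w$; it would force $w(d_1)=O(\log d_k)$, which is not assumed.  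Relatedly, you have the roles of (i) and (ii) crossed in two places: the bound $V\le w(d_k)\mu$ needs no hypothesis at all (it is immediate from $w$ non-decreasing), and bounding $M_c,M_r$ requires the \emph{upper} bound on $\mu$ from~(i), not the lower bound from~(ii) that you cite.  Hypothesis~(ii) is used only to push the Bernstein exponent up to $\alpha^2 L\ln n/(d_k\ln d_k)$, exactly as you wrote in your Bernstein paragraph.  With these two fixes---divide by $w(d_1)$ and invoke~(i) for the combinatorial bound---your argument goes through and matches the paper's line for line.
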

%\begin{proof}
%See Appendix~\ref{sec:proof-unequal-size}.
%\end{proof}
\begin{proof}
 We follow the notation of Section~\ref{sec:rigid-Gd-gen}. Recall that
\begin{equation}
 m := \expv [s(x)] = \sum\limits_{i=1}^{k} w(d_i)/d_i \quad \text{and} \quad \Var[s(x)] = \sum\limits_{i=1}^{k} w(d_i)^{2}\left(\dfrac{d_i - 1}{d_i^2}\right)\leq w(d_k)m. 
\end{equation} 

Then, by the Bernstein inequality, for $0<\delta<1$,

\begin{equation}\label{eq:cdel-uneq}
\begin{split}
|\mathcal{C}(\delta m)| = n \cdot  \prob\left[s(x)\geq m+\delta m \right]\leq 
n\cdot \exp \left(- \frac{\delta^2m^2/2}{\Var[s(x)]+\delta w(d_{k}) m/3}\right) \leq \\
\leq n\cdot \exp\left(- \frac{\delta^2m}{3w(d_{k})}\right).
\end{split}
\end{equation}
and similarly,
\begin{equation}\label{eq:rdel-uneq}
\begin{split}
|\mathcal{R}(\delta m)| = n \cdot  \prob\left[s(x)\leq m-\delta m \right]
\leq n\cdot \exp \left(- \frac{\delta^2m}{3w(d_{k})}\right).
\end{split}
\end{equation}
Now we want to bound the number of entries in $T_{c}(\delta m, y)$ and $T_{r}(\delta m, x)$. For $x\in [\vec{d}]$ define $S_x = \{i\in [n]\mid x_i = d_i\}$. 

Let $x\in [\vec{d}]\setminus \calR(\delta m)$ and $y\in T_{r}(\delta m, x)$, by definition, $S_x\subseteq S_y$ and 
\begin{equation}\label{eq:sx-sy}
\sum\limits_{i\in S_y\setminus S_x} w(d_i) \leq 2\delta m.
\end{equation}
Thus, $t:= |S_y\setminus S_x|\leq 2 \delta m/w(d_1) =: t_{\max}$. Assumption (i) implies that 
\[ t_{\max}\leq 2\delta K\ln(n)/(d_k\ln d_k).\]
At the same time, $t_{\max}\geq 2\delta\cdot ( k w(d_1)/d_k)/w(d_1) = 2\delta k/d_k$.
Note that $y\in T_{r}(\delta m, x)$ is uniquely defined by $S_y\setminus S_x$. Thus, by Lemma~\ref{lem:tail-bound},    
 
\begin{equation}
\begin{gathered}
T_{r}(\delta m, x) \leq 
 \sum\limits_{i=0}^{t_{\max}} \binom{k}{i}\leq 
\exp\left(t_{\max} \ln\left(\dfrac{\eee k}{t_{\max}}\right)\right) \leq \\
\leq \exp(\ln n \cdot  2\delta K\ln(2d_k/\delta)/(d_k\ln d_k)).
\end{gathered}
\end{equation} 
Similarly, for $y\in [\vec{d}]\setminus \calC(\delta m)$ we have $x\in T_{c}(\delta m, y)$ only if $S_x\subseteq S_y$ and Eq.~\eqref{eq:sx-sy} is satisfied. Clearly, $|S_y|\leq m(1+\delta)/w(d_1)\leq 2m/w(d_1)$, and $t = |S_y\setminus S_x|$ satisfies $t\leq t_{\max}$. Hence, by Lemma~\ref{lem:tail-bound},
\begin{equation}
\begin{split}
T_{c}(\delta m, y) & \leq \sum\limits_{i=0}^{t_{\max}} \binom{|S_y|}{t_{\max}} d_k^{i}\leq  d_k^{t_{\max}}\sum\limits_{i=0}^{t_{\max}} \binom{2m/w(d_1)}{t_{\max}} \leq 
\\ & \leq\exp(t_{\max} \ln(3 \delta^{-1})+t_{\max}\ln{d_k})\leq  \exp(\ln n \cdot  2\delta K\ln(3\delta^{-1})/d_k).
\end{split}
\end{equation} 
Therefore, there exists $\delta = \Theta\left((\eps/K)/\log(K/\eps)\right)$ such that
\[ \max(M_{r}(\delta m), M_{c}(\delta m))\leq n^{\eps/(2d_k)}\,. \]

\noindent  
Moreover, for such choice of $\delta$, by assumption (ii) and Eq.~\eqref{eq:cdel-uneq}-\eqref{eq:rdel-uneq}, we obtain
\begin{equation}
|\mathcal{C}(\delta m)|+|\mathcal{R}(\delta m)|\leq n\exp(-\delta^2L\ln(n)/(3 d_k\ln d_k))\leq n^{1-\gamma}.
\end{equation}
Thus, the claim follows from Lemma~\ref{lem:rig-general} and 
Lemma~\ref{lem:rigid-reduct-Gd-gen}.
\end{proof}

Observe that Theorem~\ref{thm:equal-sizes} is a special case of
Theorem~\ref{thm:unequal} 
with $K = L = 1$. More generally, we can get the same bound on $\gamma$ for the case when all the $d_i$ are within a constant factor
of  
each other.

\begin{corollary}\label{cor:linear-gap}
Given $0<\eps, c \leq 1$, and $d\geq 2$, there exists 
$\gamma = \Omega\left(\dfrac{c\eps^2}{d\log(d)\log(1/c)\log^{2}(1/(c\eps))}\right)$
such that the following holds for any sequence of matrices
$M_1, M_2, \ldots, M_k$ where $M_i \in \fff^{d_i\times d_i}$
and $cd \le d_i\le d$.
Let $M = \bigotimes\limits_{i=1}^{k} M_i$ and
$n = \prod\limits_{i=1}^{k} d_i$.  
If $n \ge d^{1/\gamma}$, then
$ R^{rc}_{\fff}(M, n^{1-\gamma})\leq n^{\eps}$.
\end{corollary}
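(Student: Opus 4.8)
The plan is to obtain Corollary~\ref{cor:linear-gap} as a direct consequence of Theorem~\ref{thm:unequal}. I would invoke that theorem with the constant weight function $w\equiv 1$ (the choice for which Theorem~\ref{thm:equal-sizes} is already a special case), and then verify that its two hypotheses hold with parameters $K$ and $L$ that are harmless when all the $d_i$ fall in the window $[cd,d]$. With $w\equiv 1$ one has $m=\sum_{i=1}^{k}1/d_i$, and, writing $n=\prod_i d_i$, the factor $\sum_{i=1}^{k}\log d_i/\log d_k$ occurring in both hypotheses equals $\log n/\log d_k$.

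Hypothesis (ii) is immediate: since $d_i\le d_k$ we have $1/d_i\ge 1/d_k$ and $\log d_i\le\log d_k$, so $1/d_i\ge(\log d_i/\log d_k)(1/d_k)$ termwise, and summing over $i$ gives (ii) with $L=1$. For hypothesis (i) I would use $cd\le d_i\le d_k\le d$ to bound the left-hand side by $k/(cd)$ and the right-hand side from below by $K\cdot(k\log(cd)/\log d)\cdot(1/d)$; hence (i) holds as soon as $K\ge(\log d)/(c\log(cd))$, so one may take $K=\lceil(\log d)/(c\log(cd))\rceil=\Theta(1/c)$, at least in the meaningful regime $cd\ge 2$ (if $cd<2$ the hypothesis $cd\le d_i$ is vacuous and the claim follows already from Theorem~\ref{thm:unequal-intr-no-assum}). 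The one place that deserves real attention is checking that this order of $K$ is genuinely sufficient for \emph{every} admissible configuration, and in particular that the extremal configurations (many factors of size close to $cd$ together with a few close to $d$) do not force a larger $K$; this is the main technical point of the argument.

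Feeding $w\equiv 1$, $L=1$, $K=\Theta((\log d)/(c\log(cd)))$ and $d_k\le d$ into the conclusion of Theorem~\ref{thm:unequal} gives $R^{rc}_{\fff}(M,2d_k\,n^{1-\gamma_0})\le n^{\eps}$ for some $\gamma_0$ of order $c^{2}(\log(cd))^{2}\eps^{2}/(d(\log d)^{3}\log^{2}(K/\eps))$; simplifying the logarithmic factors, with $\log(K/\eps)=O(\log(1/(c\eps))+\log\log d)$, puts $\gamma_0$ into a form of the shape claimed in the statement (the precise shape of the constants and $\log$-factors being a matter of routine bookkeeping). Finally, to replace $2d_k\,n^{1-\gamma_0}$ by a clean $n^{1-\gamma}$ I would absorb the factor $2d_k\le 2d\le n^{\gamma_0/2}$, which is legitimate once $n\ge d^{4/\gamma_0}$; taking $\gamma=\gamma_0/4$ then yields $R^{rc}_{\fff}(M,n^{1-\gamma})\le n^{\eps}$ under the hypothesis $n\ge d^{1/\gamma}$, as required. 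No extra work is needed for the row–column formulation since Theorem~\ref{thm:unequal} is already stated in that form, and the non-row–column version would follow from $R_{\fff}\le n\cdot R^{rc}_{\fff}$.
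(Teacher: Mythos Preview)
Your overall plan---apply Theorem~\ref{thm:unequal} with $w\equiv 1$ and then absorb the $2d_k$ prefactor by shrinking $\gamma$---is exactly what the paper does. The place where your argument falls short of the stated bound is the choice of~$L$.

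You take $L=1$ uniformly and $K$ equal to the worst-case upper bound $\Theta((\log d)/(c\log(cd)))=O((1/c)\log(1/c))$. Both choices are admissible, but then $L/K^{2}=O(c^{2}/\log^{2}(1/c))$, and Theorem~\ref{thm:unequal} yields only
\[
\gamma_0=\Omega\!\left(\frac{c^{2}\,\eps^{2}}{d\log d\,\log^{2}(1/c)\,\log^{2}(1/(c\eps))}\right),
\]
which is weaker than the claimed $\Omega\bigl(c\,\eps^{2}/(d\log d\,\log(1/c)\,\log^{2}(1/(c\eps)))\bigr)$ by a factor of order $c/\log(1/c)$. This is a genuine loss in the $c$-dependence, not ``routine bookkeeping'' of log-factors: for instance when $\log d\gg\log(1/c)$ your $\gamma_0$ scales like $c^2$ while the corollary promises a bound scaling like~$c$.

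The remedy is that Theorem~\ref{thm:unequal} is invoked for a \emph{fixed} sequence $(d_1,\dots,d_k)$, so for that sequence you may take $K=L=r$, where
\[
r \;=\; \frac{d_k\log d_k\,\sum_i 1/d_i}{\sum_i\log d_i}
\]
is the exact value of the common ratio (with $w\equiv 1$ both hypotheses (i) and (ii) become equalities). Then $L/K^{2}=1/r$, and a termwise estimate---$d_k/d_i\le 1/c$ together with $\log d_k/\log d_i=O(\max(1,\log(1/c)))$---gives $r\le O((1/c)\log(1/c))$ for every admissible sequence, which produces the stated $\gamma$. This is what the paper means by ``$K=L\le (1/c)\log(1/c)$''. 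Your termwise verification that $r\ge 1$ is correct; it just is not the sharp choice for~$L$.
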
 
\begin{proof} Observe that the assumptions of Theorem~\ref{thm:unequal} are satisfied for $w(x) = 1$ with $K = L \leq (1/c)\log(1/c)$. Let $\gamma'$ be the constant provided by Theorem~\ref{thm:unequal}. We take $\gamma = \gamma'/3$ and note that $n^{2\gamma}\geq 2d$. Hence, $R^{rc}_{\fff}(M, n^{1-\gamma})\leq n^{\eps}.$
\end{proof} 

Next, we eliminate the lower bound constraint on the $d_i$,
   at the cost of a slightly worse dependence of $\gamma$ on $d$.
   (Theorem~\ref{thm:row-column-main-noassum} below). In preparation for proving Theorem~\ref{thm:row-column-main-noassum}, we state its special case
   where all but at most one of the $d_i$ are restricted to the interval $[\sqrt{d}, d]$.
\begin{corollary}\label{cor:dtod2-bound}
Given $d\ge 2$ and $0<\eps <1$, there exists
$\gamma = \Omega\left(\dfrac{\eps^2}{d^{3/2}\log(d)\log^{2}(d/\eps)}\right)$
such that 
 the following holds. Consider any 
 $d_i\leq d$ and $M_i\in \fff^{d_i\times d_i}$ for $i\in [k]$.
 Let   $M = \bigotimes\limits_{i=1}^{k} M_i$ and
 $n = \prod\limits_{i=1}^{k} d_i$. Assume that
 $d_i\geq \sqrt{d}$ for all $i\geq 2$.
If $n\geq d^{1/\gamma}$.
Then $R^{rc}_{\fff}(M, n^{1-\gamma})\leq n^{\eps}$.
\end{corollary}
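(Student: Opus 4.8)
The plan is to peel off the one factor whose size may fall below $\sqrt d$ and reduce to Theorem~\ref{thm:unequal} applied to the remaining factors, all of which then lie in the band $[\sqrt d, d]$. We may assume $k\ge 2$: if $k=1$ then $n=d_1\le d$, so the hypothesis $n\ge d^{1/\gamma}$ fails (our $\gamma$ will be $<1$) and there is nothing to prove. Write $M=M_1\otimes M'$ with $M'=\bigotimes_{i=2}^{k}M_i$; by assumption every factor of $M'$ has size $d_i\in[\sqrt d, d]$. Put $\eps'=\eps/2$, $d'=\max_{2\le i\le k}d_i$ and $n'=n/d_1=\prod_{i=2}^{k}d_i$.

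The heart of the argument is to apply Theorem~\ref{thm:unequal} to $M'$ with the constant weight $w\equiv 1$ and with $\eps'$ in place of $\eps$ (reordering the factors of $M'$ to be nondecreasing in size first; this only permutes rows and columns of $M'$ and so does not change $R^{rc}_{\fff}(M',\cdot)$). With $w\equiv 1$ the two hypotheses of Theorem~\ref{thm:unequal} hold, in fact with equality, for $K=L=Q$ where
\[
Q \;=\; \frac{d'\sum_{i=2}^{k}1/d_i}{\sum_{i=2}^{k}\log d_i/\log d'}.
\]
The key estimate is that, \emph{precisely because every $d_i$ lies in $[\sqrt d, d]$}, we have $\log d_i\ge \tfrac12\log d\ge\tfrac12\log d'$ and $1/d_i\le 1/\sqrt d$ for $2\le i\le k$, which together force $1\le Q\le 2\sqrt d$ --- with \emph{no} extra logarithmic factor. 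Substituting $K=L=Q\le 2\sqrt d$ and $d'\le d$ into the bound $\gamma'=cL\eps'^2/(d'\log d'\cdot K^2\log^2(K/\eps'))$ supplied by Theorem~\ref{thm:unequal}, and using $\log(Q/\eps')=O(\log(d/\eps))$, we obtain
\[
R^{rc}_{\fff}\bigl(M',\,2d'(n')^{1-\gamma'}\bigr)\le (n')^{\eps'}
\quad\text{for some}\quad
\gamma' = \Omega\!\left(\frac{\eps^2}{d^{3/2}\log(d)\log^2(d/\eps)}\right).
\]

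It remains to reinstate $M_1$ and adjust the exponents. Let $Z'$ witness the bound for $M'$, and set $Z=M_1\otimes Z'$; then $M-Z=M_1\otimes(M'-Z')$, so $\rank(M-Z)\le d_1\cdot 2d'(n')^{1-\gamma'}$, while every row (resp.\ column) of $Z$ is the Kronecker product of a row (column) of $M_1$ with one of $Z'$ and hence has at most $d_1(n')^{\eps'}$ nonzero entries; thus $R^{rc}_{\fff}\bigl(M,\,2d_1d'(n')^{1-\gamma'}\bigr)\le d_1(n')^{\eps'}$. Now take $\gamma=\gamma'/4$. Since $n\ge d^{1/\gamma}$ we have $n^{\gamma}\ge d$, hence $2d_1d'\le 2d^2\le d^3\le n^{3\gamma}$ and therefore $2d_1d'(n')^{1-\gamma'}\le n^{3\gamma}\cdot n^{1-\gamma'}=n^{1-\gamma}$; likewise $n\ge d^{1/\gamma}$ forces $d\le n^{\eps/2}$, so $d_1(n')^{\eps'}\le d\,n^{\eps/2}\le n^{\eps}$. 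Combining these with the monotonicity of $R^{rc}_{\fff}(M,\cdot)$ in the target rank gives $R^{rc}_{\fff}(M,n^{1-\gamma})\le n^{\eps}$, with $\gamma$ of the asserted order.

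The only delicate point is the central estimate $Q\le 2\sqrt d$, i.e.\ that in the band $[\sqrt d, d]$ the parameters $K,L$ of Theorem~\ref{thm:unequal} may both be taken to be $\Theta(\sqrt d)$. If one instead invoked Corollary~\ref{cor:linear-gap} as a black box with $c=1/\sqrt d$, the term $\log(1/c)=\Theta(\log d)$ in its bound would cost an extra factor of $\log d$ and yield only $\gamma=\Omega\bigl(\eps^2/(d^{3/2}\log^2(d)\log^2(d/\eps))\bigr)$; going back to Theorem~\ref{thm:unequal} directly is what recovers the single power of $\log d$. Everything else --- the factorization $M=M_1\otimes M'$, the Kronecker bookkeeping, and the final parameter arithmetic --- is routine.
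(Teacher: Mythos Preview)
Your proof is correct and follows essentially the same route as the paper: apply Theorem~\ref{thm:unequal} with the constant weight $w\equiv 1$ and observe that when all sizes lie in $[\sqrt d,d]$ one may take $K=L=Q=O(\sqrt d)$, so that the $L/K^2$ in the formula for $\gamma$ becomes $\Theta(d^{-1/2})$ rather than $\Theta(d^{-1})$.

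The one tactical difference is how the potentially small factor $d_1$ is handled. You peel $M_1$ off, apply Theorem~\ref{thm:unequal} to $M'=\bigotimes_{i\ge 2}M_i$, and then tensor $M_1$ back in, paying an extra factor of~$d_1$ on both the rank and the sparsity and absorbing it into the exponent arithmetic. The paper instead applies Theorem~\ref{thm:unequal} directly to the full product: a single anomalous term $1/d_1\le 1$ in $\sum_i 1/d_i$ is negligible once $k$ is large (which the hypothesis $n\ge d^{1/\gamma}$ forces), so one still gets $K=L\le 4\sqrt d$ without separating~$M_1$. This makes the paper's argument a few lines shorter, but your version is equally valid and arguably cleaner conceptually, since all factors fed into Theorem~\ref{thm:unequal} genuinely sit in the band. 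Your closing remark about why one must go back to Theorem~\ref{thm:unequal} rather than quoting Corollary~\ref{cor:linear-gap} with $c=d^{-1/2}$ is exactly right.
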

\begin{proof} Define $w(d_i) = 1$. Then, the assumptions of Theorem~\ref{thm:unequal} are satisfied for $K = L \leq 4\sqrt{d}$.  Indeed,  for $d_k\leq d$ and $d_2 \geq \sqrt{d}$ and $k>d$, we have the following trivial bounds
\begin{equation}
 \sum\limits_{i=1}^{k} \log(d_i)/\log(d_k)\geq (k-1)/2\quad  \text{and}\quad  \sum\limits_{i=1}^{k} 1/d_i\leq (k-1)/\sqrt{d}+1\leq 2(k-1)/\sqrt{d}. 
\end{equation}
Hence, by Theorem~\ref{thm:unequal}, there exists $\gamma' = \Omega\left(\dfrac{\eps^2}{d^{3/2}\log(d)\log^{2}(d/\eps)}\right)$
such that 
\[R^{rc}_{\fff}(M, 2d n^{1-\gamma'})\leq n^{\eps}.\]
Take $\gamma = \gamma'/3$. Finally, note that the assumption $n\geq d^{1/\gamma}$ implies that $n^{2\gamma}\geq 2d$ and $k\geq d$. Hence,
$R^{rc}_{\fff}(M, n^{1-\gamma})\leq n^{\eps}$. 
\end{proof}

Finally, we prove our main result by eliminating $d_i \ge \sqrt{d}$ constraint using
a bin packing argument.

\begin{theorem}\label{thm:row-column-main-noassum}
Given $d\ge 2$ and $0<\eps <1$, there exists
$\gamma = \Omega\left(\dfrac{\eps^2}{d^{3/2}\log(d)\log^{2}(d/\eps)}\right)$
such that the following holds for any sequence of matrices
$M_1, M_2, \ldots, M_k$ where $M_i \in \fff^{d_i\times d_i}$
and $2 \le d_i \le d$.
Let $M = \bigotimes\limits_{i=1}^{k} M_i$ and
$n = \prod\limits_{i=1}^{k} d_i$.
If $n\geq d^{1/\gamma}$, then $R^{rc}_{\fff}(M, n^{1-\gamma})\leq n^{\eps}$.
\end{theorem}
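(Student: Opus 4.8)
The plan is to reduce to Corollary~\ref{cor:dtod2-bound} by a regrouping (bin-packing) argument, exploiting the fact that row-column rigidity is insensitive both to row/column permutations and to the order of the Kronecker factors. First I would record two elementary invariances of $R^{rc}_{\fff}$. For permutation matrices $P,Q$ and any matrix $A$ one has $R^{rc}_{\fff}(PAQ,r)=R^{rc}_{\fff}(A,r)$: if $\rank(A-Z)\le r$ with $Z$ having at most $t$ nonzero entries in every row and column, then $Z'=PZQ$ witnesses the same bound for $PAQ$, since $\rank(PAQ-Z')=\rank(A-Z)$ and $Z'$ has the same (permuted) row- and column-weight profile as $Z$. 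Moreover, reordering the factors of a Kronecker product simultaneously permutes its rows and its columns (the ``perfect shuffle'' permutation matrix), so $R^{rc}_{\fff}\!\left(\bigotimes_i M_i,r\right)$ does not depend on the order of the $M_i$. Consequently we may freely reorder $M_1,\dots,M_k$; in particular, we may list first all indices $i$ with $d_i<\sqrt d$ and then all indices with $d_i\ge\sqrt d$.

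Next I would bin-pack the ``small'' factors (those with $d_i<\sqrt d$). Scanning them in order, keep multiplying their sizes into a current group until the running product first reaches $\sqrt d$, then close that group and start a new one. The key invariant is that every small factor has size strictly less than $\sqrt d$, so at the moment a group's product first reaches $\sqrt d$ it is still strictly below $\sqrt d\cdot\sqrt d=d$. Hence every closed group has size-product lying in $[\sqrt d,d)$, and at most one group — the last one scanned — is left with size-product below $\sqrt d$; call it the remainder group (possibly empty). The ``large'' factors (those with $d_i\ge \sqrt d$) are kept as singleton groups; each has size in $[\sqrt d, d]$.

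Now I would form new Kronecker factors $N_1,\dots,N_m$ from these groups: if the remainder group is nonempty, let $N_1=\bigotimes_{i\in\text{rem}}M_i$, a $d'_1\times d'_1$ matrix with $d'_1<\sqrt d\le d$; then let the subsequent $N_j$ be the Kronecker products over the closed groups (sizes in $[\sqrt d,d)$); then let the remaining $N_j$ be the large matrices $M_i$ themselves (sizes in $[\sqrt d,d]$). By associativity of $\otimes$ together with the order-invariance above, $M=\bigotimes_{i=1}^{k}M_i$ equals $\bigotimes_{j=1}^{m}N_j$ up to a permutation of rows and columns, so $R^{rc}_{\fff}(M,\cdot)=R^{rc}_{\fff}\!\left(\bigotimes_j N_j,\cdot\right)$; also $n=\prod_j d'_j$, with $d'_j\le d$ for every $j$ and $d'_j\ge\sqrt d$ for every $j\ge 2$. (The degenerate possibility that the remainder group is everything — so $m=1$ and $d'_1<\sqrt d$ — cannot occur under the hypothesis $n\ge d^{1/\gamma}$, since $\gamma<1$ forces $d^{1/\gamma}\ge d>\sqrt d$.) Thus the hypotheses of Corollary~\ref{cor:dtod2-bound} hold for $N_1,\dots,N_m$ with the same $d$ and $\eps$.

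Finally, apply Corollary~\ref{cor:dtod2-bound} to $N_1,\dots,N_m$: it yields $\gamma=\Omega\!\left(\dfrac{\eps^2}{d^{3/2}\log(d)\log^2(d/\eps)}\right)$ such that $n\ge d^{1/\gamma}$ implies $R^{rc}_{\fff}\!\left(\bigotimes_j N_j,\,n^{1-\gamma}\right)\le n^{\eps}$, and by the invariance this equals $R^{rc}_{\fff}(M,n^{1-\gamma})$. This is exactly the asserted bound, with the same $\gamma$. I expect the only delicate points to be the order-invariance of row-column rigidity (needed to separate small from large factors before packing) and the packing invariant that multiplying by a factor smaller than $\sqrt d$ keeps a product below $\sqrt d$ from landing above $d$; the remaining steps are bookkeeping, and the edge case $d\le 3$ is vacuous because then there are no small factors and one applies Corollary~\ref{cor:dtod2-bound} directly with $N_i=M_i$.
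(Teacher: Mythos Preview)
Your proof is correct and follows essentially the same approach as the paper: bin-pack the factors so that all but at most one of the resulting Kronecker blocks have size in $[\sqrt d, d]$, then invoke Corollary~\ref{cor:dtod2-bound}. The paper's version is terser---it takes a minimum-cardinality packing into bins of product at most $d$ and observes that two bins of product $\le\sqrt d$ could be merged, contradicting minimality---while you spell out the permutation invariance and use an explicit greedy packing of the small factors, but the content is identical.
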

\begin{proof} 
Let us  split
the  $d_i$ into the smallest possible number $k'$ of bins, such that the product of numbers in each bin is at most $d$. Let $a_j$ be the product of
the  numbers in bin $j\in [k']$ and
let $A_j$ be the Kronecker product of the corresponding matrices. Then at most one $a_i$ is $\leq \sqrt{d}$.
Hence, the theorem follows from Corollary~\ref{cor:dtod2-bound}.
\end{proof}

\section{Application to rigidity upper bounds for Hadamard matrices}\label{sec:Hadamard}

As an application of our improved bounds, we show that the family of Kronecker products of sufficiently not rigid matrices and matrices of bounded size is not Valiant-rigid.

\begin{theorem}\label{thm:bounded-and-non-rigid} Let $0<\varepsilon <1/2$ and $b\geq 2$. Let $\mathcal{F}$ be a family of matrices over $\fff$, such that for every $d\times d$ matrix $A\in \mathcal{F}$ either $d\leq b$, or 
\begin{equation}
R_{\fff}^{rc}\left(A, d^{1-\gamma}\right) \leq d^{\varepsilon} \quad \text{for} \quad \gamma = \dfrac{12 (\log\log d)^2}{\varepsilon^3 \log d}.
\end{equation} 
Then, for every sequence of matrices $M_1, M_2, \ldots M_k \in\mathcal{F}$, the $n\times n$ matrix $M = \bigotimes\limits_{i\in [k]} M_i$ either satisfies
$
R_{\fff}^{rc}\left(M, {n}/{\log n}\right)\leq n^{6\varepsilon},
$  
 or $n$ is bounded above by a function of $b$ and $\varepsilon$.
\end{theorem}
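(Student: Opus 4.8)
The plan is to reduce everything, via the telescoping factorisation $\bigotimes_{i\in[k]}M_i=\prod_{i\in[k]}\bigl(I\otimes M_i\otimes I\bigr)$ (with identity blocks of the appropriate sizes) together with Lemma~\ref{lem:rigidity-prod}, to the row-column rigidity of the individual factors $I\otimes M_i\otimes I$ and of blocks of them. Such a factor is block-diagonal with $n/d_i$ copies of $M_i$, so reducing each copy of $M_i$ to rank $\rho$ costs only $R^{rc}_{\fff}(M_i,\rho)$ changes per row and per column of the whole product while adding $(n/d_i)\rho$ to the rank. For $d_i>b$ the hypothesis lets us take $\rho=d_i^{1-\gamma}$, so this factor adds $n\,d_i^{-\gamma}=n\exp\!\bigl(-12(\log\log d_i)^2/\varepsilon^3\bigr)$ to the rank and $d_i^{\varepsilon}$ to the per-row/column count; since $\prod_i d_i^{\varepsilon}=n^{\varepsilon}$, the number of changes is automatically under control, and the whole fight is to keep the accumulated rank below $n/\log n$.

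I would first dispose of the factors of size $\le b$ by gathering them into one Kronecker block $B=\bigotimes_{d_i\le b}M_i$ of order $N_S$. Theorem~\ref{thm:row-column-main-noassum} with parameter $b$ gives $R^{rc}_{\fff}(B,N_S^{1-\gamma_b})\le N_S^{\varepsilon}$ for an absolute constant $\gamma_b=\gamma_b(b,\varepsilon)>0$, as soon as $N_S\ge b^{1/\gamma_b}$; otherwise $N_S$ is bounded in terms of $b$ and $\varepsilon$, and then $B\otimes I$ may simply be changed to $0$ at a cost of $N_S\le n^{\varepsilon}$ entries per row/column once $n$ exceeds a bound depending only on $b,\varepsilon$ (if it does not, $n$ is bounded and we are in the exceptional alternative of the statement). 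Inserted into the factorisation, $B$ contributes $n/N_S^{\gamma_b}$ to the rank, which is $\le n/(2\log n)$ as soon as $N_S^{\gamma_b}\ge 2\log n$, so this part is harmless.

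The real work is the large factors. The crude estimate $n\sum_{d_i>b}d_i^{-\gamma}$ for their total rank contribution is too weak when $\Theta(\log n)$ of them sit near the threshold $b$ — but in that regime all of them have bounded size, so instead of treating them one at a time they should be grouped and handed to Theorem~\ref{thm:row-column-main-noassum} (or, for a group of nearly equal sizes, to Corollary~\ref{cor:linear-gap}), which reduces a Kronecker sub-product of order $D$ to rank $D^{1-\gamma'}$ with $\gamma'$ depending only on the bounded size, contributing only $n D^{-\gamma'}$ instead of $\sum_i n d_i^{-\gamma}$. I would therefore organise the large factors by size scale (dyadically), and at each scale use the stronger bounded-size bound when the scale is heavily populated and the per-factor bound from the hypothesis only when it is sparse; the point of the precise value $\gamma=12(\log\log d)^2/(\varepsilon^3\log d)$, which makes $d^{\gamma}=\exp\!\bigl(12(\log\log d)^2/\varepsilon^3\bigr)$ outgrow every fixed power of $\log d$, is exactly that it renders the sparsely populated scales affordable. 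Summing the contributions over the at most $\log n$ nonempty scales, recombining through Lemma~\ref{lem:rigidity-prod}, and falling back to the exceptional alternative ($n$ bounded in terms of $b,\varepsilon$) whenever some sub-product is too small to trigger a threshold, one arrives at $R^{rc}_{\fff}(M,n/\log n)\le n^{6\varepsilon}$. The main obstacle is precisely this multi-scale bookkeeping: choosing the scale thresholds consistently and verifying that the accumulated rank never leaves the band $n^{1-o(1)}$ while the accumulated per-row/column count stays inside $n^{6\varepsilon}$ — the delicate quantitative core of the argument, and the place where the specific constants in $\gamma$ are consumed.
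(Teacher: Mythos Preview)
Your overall architecture matches the paper's: split off the block $F=\bigotimes_{d_i\le b}M_i$ and handle it by Theorem~\ref{thm:row-column-main-noassum}, handle the large block $H=\bigotimes_{d_i>b}M_i$ by a dyadic decomposition in size, and recombine. Your telescoping-plus-Lemma~\ref{lem:rigidity-prod} device is also fine; it reproduces exactly Lemma~\ref{lem:ten-prod-bound} (the Dvir--Liu tensor inequality), which is what the paper actually quotes for the recombination.

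The gap is in the large block. Your two tools at a given size scale $D$ are (a) Theorem~\ref{thm:row-column-main-noassum}, which \emph{ignores} the non-rigidity hypothesis on the $M_i$ and has $\gamma'=\Theta(\varepsilon^2/D^{3/2})$, and (b) the per-factor hypothesis, which contributes $n\,D^{-\gamma(D)}=n\exp\bigl(-12(\log\log D)^2/\varepsilon^3\bigr)$ to the rank for each factor there. Neither covers the intermediate range. Take $D\approx(\log n)^{10}$: for (a) to give anything you would need $\log N_t\gtrsim D^{3/2}\log\log n$, which already exceeds $\log n$; for (b) even a \emph{single} factor at this scale contributes $n\exp\bigl(-\Theta((\log\log\log n)^2)\bigr)\gg n/\log n$, since $(\log\log\log n)^2=o(\log\log n)$. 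Your remark that $d^{\gamma}$ outgrows every fixed power of $\log d$ is true but beside the point: what you need is $D^{\gamma(D)}\ge\log n$, and for $D$ only polylogarithmic in $n$ this fails.

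What the paper supplies here is Lemma~\ref{lem:ten-betw-squares}: for factors of comparable size one writes $M_i=A_i+E_i$ with $\rank A_i\le d_i^{1-\gamma}$ and $E_i$ sparse, expands $\bigotimes_i(A_i+E_i)=\sum_{S}\bigotimes_{i\in S}A_i\otimes\bigotimes_{j\notin S}E_j$, and splits at $|S|=\varepsilon k$. The terms with $|S|\ge\varepsilon k$ have total rank at most $\binom{k}{\varepsilon k}\,n\,b^{-\varepsilon k\gamma}\le n^{1-\Theta(\varepsilon\gamma)}$ --- the rank reduction is \emph{multiplicative} in $k$, not additive as in your per-factor accounting --- while the terms with $|S|<\varepsilon k$ are sparse. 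This is the amortisation step your proposal is missing; Theorem~\ref{thm:row-column-main-noassum} cannot play its role because its exponent deteriorates polynomially in the size and it never uses the assumed non-rigidity of the $M_i$. Once Lemma~\ref{lem:ten-betw-squares} is in place, the dyadic combination over the $\log\log d_{\max}$ scales (Lemma~\ref{lem:general-ten-bound}) and the final merge with $F$ go through essentially as you outline.
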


 As an immediate application of Theorem~\ref{thm:bounded-and-non-rigid}, we significantly expand the family of Hadamard matrices known to be not Valiant rigid (Theorem~\ref{thm:main-Hadamard}). We rely on the rigidity upper bound for Paley-Hadamard matrices established in \cite{dvir-liu, field-matters}.

\begin{theorem}[\cite{dvir-liu, field-matters}]\label{thm:paley-hadamard-bound}
There exist constants $c_1>0$ and $c_2>0$ such that for all $\varepsilon>0$ and an arbitrary $d\times d$ Paley-Hadamard matrix $A$ we have
$
R_{\ccc}^{rc}\left(A, \dfrac{d}{\exp(\varepsilon^{c_1}(\log d)^{c_2})}\right)\leq d^{\varepsilon}
$.
\end{theorem}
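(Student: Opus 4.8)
The plan is to derive the bound from the Dvir--Liu theorem on the non-rigidity over $\ccc$ of circulant matrices of abelian groups \cite{dvir-liu}, by exhibiting every Paley--Hadamard matrix as a $G$-circulant ``core'' decorated by perturbations that are cheap for row--column rigidity. Recall the Paley construction from a prime power $q$ and the quadratic residue character $\chi$ of $\fff_q$: extending $\chi$ by a fixed value at $0$ to a $\{\pm1\}$-valued function $f$ on $\fff_q$, the matrix $C$ with $C_{a,b}=f(a-b)$ is exactly the $G$-circulant $M(f)$ for $G=\fff_q^+$, an abelian group of order $q=\Theta(d)$; the constant diagonal of the Paley core is absorbed into the choice of $f(0)$, so no separate correction is needed for it. The Paley--Hadamard matrix of type I ($q\equiv 3\pmod 4$, dimension $d=q+1$) is $C$ bordered by one extra row and one extra column of $\pm1$'s; the one of type II ($q\equiv1\pmod 4$, dimension $d=2(q+1)$) is of the form $C'\otimes H_2 + E_1 + E_2$, where $C'$ is the corresponding symmetric $\fff_q^+$-circulant core, $H_2=\left(\begin{smallmatrix}1&1\\1&-1\end{smallmatrix}\right)$, $E_1$ has rank $O(1)$, and $E_2$ has at most two nonzero entries in each row and column. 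Since row--column rigidity is invariant under multiplication by monomial matrices, we may assume $A$ is in one of these normal forms.

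First I would record three elementary facts. (a) \emph{Bordering}: if $A'$ is obtained from $A$ by adjoining one row and one column, then $R^{rc}_\fff(A',r+2)\le R^{rc}_\fff(A,r)$ --- pad a sparsifier $Z$ of $A$ by a zero row and a zero column, and observe $\left(\begin{smallmatrix}A-Z & u\\ v^{T}&c\end{smallmatrix}\right)$ has rank at most $\rk(A-Z)+2$. (b) \emph{Additive corrections}: for any matrix $E$, $R^{rc}_\fff(A+E,\,r+\rk E)\le R^{rc}_\fff(A,r)$, and if moreover $E$ has at most $t$ nonzero entries per row and column then $R^{rc}_\fff(A+E,r)\le R^{rc}_\fff(A,r)+t$; in both cases one folds $E$ into the sparsifier. (c) \emph{Kronecker with a small factor}: if $B$ is $b\times b$ then $R^{rc}_\fff(A\otimes B,\,br)\le b\cdot R^{rc}_\fff(A,r)$, obtained by subtracting $Z_A\otimes B$. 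Applying (a) to the type-I form gives $R^{rc}_\ccc(A,r+2)\le R^{rc}_\ccc(C,r)$, and applying (b),(c) to the type-II form gives $R^{rc}_\ccc(A,\,2r+O(1))\le 2\,R^{rc}_\ccc(C',r)+O(1)$. In both cases the row--column rigidity of $A$ is governed by that of a $\fff_q^+$-circulant of order $q=\Theta(d)$.

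Next I would invoke the Dvir--Liu theorem in the quantitative form it provides: there are absolute constants $c_1',c_2'>0$ such that for every abelian group $G$ of order $N$, every $\varepsilon>0$, and every $G$-circulant $M$ over $\ccc$, one has $R^{rc}_\ccc\big(M,\,N/\exp(\varepsilon^{c_1'}(\log N)^{c_2'})\big)\le N^{\varepsilon}$. (This is precisely the statement ``$G$-circulant matrices for any abelian group $G$ are not Valiant-rigid over $\ccc$'' from the introduction, and its shape is what dictates the form of the bound we are proving.) Applying it to the core with parameter $\varepsilon/2$ (to absorb the constant factors from the previous step) and combining with the reductions above yields $R^{rc}_\ccc\big(A,\;\Theta(q)/\exp((\varepsilon/2)^{c_1'}(\log q)^{c_2'})\big)\le q^{\varepsilon}\le d^{\varepsilon}$. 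It remains to massage parameters: the target rank is $d^{1-o(1)}$, so the additive $O(1)$ terms are negligible, and since $q=\Theta(d)$ and $\log q\ge\tfrac12\log d$ one rewrites $\exp((\varepsilon/2)^{c_1'}(\log q)^{c_2'})$ as $\exp(\varepsilon^{c_1}(\log d)^{c_2})$ for suitable absolute $c_1,c_2>0$, enlarging $c_1$ if necessary. For $\varepsilon$ close to or larger than $1$, and for $d$ below an absolute constant, the desired inequality instead follows from the trivial bounds $R^{rc}_\fff(A,r)\le\max(0,d-r)$ and $R^{rc}_\fff(A,0)\le d$ together with a sufficiently generous (absolute) choice of $c_1$ --- enlarging $c_1$ only increases the target rank in the main regime $\varepsilon<1$, so it costs nothing there.

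The quantitative heart of the matter is imported wholesale from \cite{dvir-liu}; the only genuinely new content is the structural reduction, and I expect its difficulties to be purely bookkeeping. The points that need care are: (i) that both types of Paley--Hadamard matrix --- and, since ``Paley--Hadamard matrix'' is specified only up to monomial equivalence, all their signed-permutation relatives --- really do decompose as a $\fff_q^+$-circulant together with corrections of the kind allowed by facts (a),(b),(c), with no hidden dense full-rank term, and in particular that the constant diagonal of the core can be absorbed rather than patched; (ii) that one is entitled to apply the Dvir--Liu theorem with $G=\fff_q^+$ where $q$ may be prime, i.e. that one uses that theorem in full generality and not only for ``well-factorable'' orders; and (iii) in the type-II case, that the factor $2$ between the order $q$ of the core and the dimension $d\approx 2q$ is harmless, which requires a little routine parameter care only when $\varepsilon$ is very close to $1$. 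None of these is a real obstacle, but they are the places where an oversight could hide.
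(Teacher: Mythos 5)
The paper does not prove Theorem~\ref{thm:paley-hadamard-bound}; it is imported as a black box from \cite{dvir-liu, field-matters} (see the sentence preceding it: ``We rely on the rigidity upper bound for Paley-Hadamard matrices established in \cite{dvir-liu, field-matters}''). Your proposal is essentially a reconstruction of the argument in those references: reduce each type of Paley--Hadamard matrix to the Dvir--Liu $G$-circulant bound for $G=\fff_q^{+}$ via a bordered-circulant decomposition (type I) or a circulant-tensor-$H_2$ decomposition with low-rank and sparse corrections (type II), and then propagate the rank/sparsity through the standard row-column rigidity composition facts. The three auxiliary facts (a)--(c) you state are correct (and (c) is a specialization of the paper's Lemma~\ref{lem:ten-prod-bound}), and the overall reduction is sound; the only places that genuinely need care --- and which you already flag --- are the exact normal form of the type-II Paley matrix as $C\otimes H_2 + (\text{rank }O(1)) + (\text{$O(1)$-sparse})$ including the $q$ versus $q{+}1$ padding, and the regime where $d^{\eps}$ is too small to absorb the constant factors picked up along the way (the ``trivial bound'' fallback there needs $\lceil d-r\rceil\le d^{\eps}$, so the constants $c_1,c_2$ and the small-$\eps$/small-$d$ case should be handled together rather than the small-$d$ case alone). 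None of these affects the shape of the argument, which matches what the cited sources do.
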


We restate Theorem~\ref{thm:main-Hadamard} for convenience.

\begin{theorem} Let $\mathcal{F}_0$ be the family of Paley-Hadamard matrices and Hadamard matrices of bounded size. Let $\mathcal{F}$ be the family of all matrices that can be obtained as Kronecker products of some matrices from $\mathcal{F}_0$. Then $\mathcal{F}$ is not absolutely Valiant-rigid.
\end{theorem}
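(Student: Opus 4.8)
The plan is to deduce the theorem from Theorem~\ref{thm:bounded-and-non-rigid} by verifying that the family $\mathcal{F}_0$ satisfies the hypotheses of that theorem for every $0<\varepsilon<1/2$, with a suitable bound $b=b(\varepsilon)$ on the size of the ``small'' matrices. Concretely, fix $\varepsilon\in(0,1/2)$. I want to find $b\ge 2$ so that every $d\times d$ matrix $A\in\mathcal{F}_0$ with $d>b$ satisfies $R_{\ccc}^{rc}(A, d^{1-\gamma})\le d^{\varepsilon}$ for $\gamma = 12(\log\log d)^2/(\varepsilon^3\log d)$. Every matrix in $\mathcal{F}_0$ of size $>b$ is a Paley--Hadamard matrix (the ``Hadamard matrices of bounded size'' are exactly the ones excluded once $b$ is at least their common bound), so it suffices to check the Paley--Hadamard case using Theorem~\ref{thm:paley-hadamard-bound}.

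The next step is the comparison of rank thresholds. Theorem~\ref{thm:paley-hadamard-bound} gives, for a $d\times d$ Paley--Hadamard matrix $A$, the bound $R_{\ccc}^{rc}(A, d/\exp(\varepsilon^{c_1}(\log d)^{c_2}))\le d^{\varepsilon}$. To apply Theorem~\ref{thm:bounded-and-non-rigid} I need this with target rank $d^{1-\gamma} = d\cdot\exp(-\gamma\log d) = d\cdot\exp(-12(\log\log d)^2/\varepsilon^3)$. So I must check that for all sufficiently large $d$,
\[
  \exp\!\left(\varepsilon^{c_1}(\log d)^{c_2}\right) \;\ge\; \exp\!\left(\frac{12(\log\log d)^2}{\varepsilon^3}\right),
\]
i.e.\ $\varepsilon^{c_1}(\log d)^{c_2}\ge 12\varepsilon^{-3}(\log\log d)^2$. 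Since $c_2>0$ is a positive constant, $(\log d)^{c_2}$ grows faster than any power of $\log\log d$, so this inequality holds for $d\ge b(\varepsilon)$ for an appropriate threshold $b(\varepsilon)$ depending only on $\varepsilon$ (and the absolute constants $c_1,c_2$); rigidity is monotone decreasing in the target rank, so the smaller target rank $d^{1-\gamma}$ still gives $R_{\ccc}^{rc}(A,d^{1-\gamma})\le d^\varepsilon$. Choosing $b(\varepsilon)$ to also exceed the size bound of the finitely many Hadamard matrices of bounded size in $\mathcal{F}_0$, the hypotheses of Theorem~\ref{thm:bounded-and-non-rigid} are met.

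Now apply Theorem~\ref{thm:bounded-and-non-rigid}: for any sequence $M_1,\dots,M_k\in\mathcal{F}_0$, the Kronecker product $M=\bigotimes_{i} M_i$ either satisfies $R_{\ccc}^{rc}(M, n/\log n)\le n^{6\varepsilon}$ or $n$ is bounded by a function of $b(\varepsilon)$ and $\varepsilon$, hence bounded by a function of $\varepsilon$ alone. Since $n/\log n \le n/\log\log n$ for large $n$, this gives $R_{\ccc}(M, n/\log\log n)\le n\cdot R_{\ccc}^{rc}(M, n/\log\log n)\le n^{1+6\varepsilon}$ for all $n$ above some threshold depending on $\varepsilon$. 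As $\varepsilon\in(0,1/2)$ is arbitrary, for every $\delta>0$ we may take $\varepsilon=\delta/6$ and conclude that no subsequence of matrices in $\mathcal{F}$ of increasing order can satisfy $R_{\ccc}(A_n, n/\log\log n)\ge n^{1+\delta}$ for all large $n$; thus $\mathcal{F}$ is not absolutely Valiant-rigid. The main obstacle is purely the threshold comparison in the middle paragraph — confirming that the quasi-polynomially small ``rank defect'' $\exp(\varepsilon^{c_1}(\log d)^{c_2})$ supplied by the Dvir--Liu/Paley bound dominates the defect $\exp(12(\log\log d)^2/\varepsilon^3)$ demanded by Theorem~\ref{thm:bounded-and-non-rigid}; everything else is bookkeeping with the monotonicity of rigidity and the relation $R_{\fff}(A,r)\le n\cdot R_{\fff}^{rc}(A,r)$.
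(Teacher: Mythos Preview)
Your proposal is correct and follows essentially the same route as the paper: verify that for each fixed $\varepsilon$ the Paley--Hadamard rigidity bound (Theorem~\ref{thm:paley-hadamard-bound}) meets the hypothesis of Theorem~\ref{thm:bounded-and-non-rigid} for all sufficiently large $d$, by comparing $\varepsilon^{c_1}(\log d)^{c_2}$ against $12(\log\log d)^2/\varepsilon^3$, and then invoke Theorem~\ref{thm:bounded-and-non-rigid}. One wording slip: once the inequality holds, $d^{1-\gamma}$ is the \emph{larger} target rank (not smaller), which is precisely why monotonicity yields $R^{rc}_{\ccc}(A,d^{1-\gamma})\le R^{rc}_{\ccc}(A,\,d/\exp(\varepsilon^{c_1}(\log d)^{c_2}))\le d^{\varepsilon}$; your conclusion is right, only the word ``smaller'' is inverted.
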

\begin{proof}
Fix $\varepsilon>0$. By Theorem~\ref{thm:paley-hadamard-bound}, the inequality
\[ \gamma = \dfrac{\varepsilon^{c_1}}{(\log d)^{1-c_2}}\geq \dfrac{12(\log\log d)^2}{\varepsilon^3 \log d}\quad  \Leftrightarrow \quad \dfrac{(\log d)^{c_2}}{(\log\log d)^2}\geq 12\varepsilon^{-3-c_1}\]
holds for all sufficiently large $d$. Thus, the claim of the theorem follows from Theorem~\ref{thm:bounded-and-non-rigid}.
\end{proof}

In order to prove Theorem~\ref{thm:bounded-and-non-rigid}, we use the following inequality that relates the row-column rigidity of the Kronecker product of a pair of matrices to the row-column rigidities of each of the matrices participating in the product.

\begin{lemma}[Dvir, Liu {\cite[Lemma 4.9]{dvir-liu}}]\label{lem:ten-prod-bound} Let $A$ be an $n\times n$ matrix and $B$ be an $m\times m$ matrix. Then
\[ R_{\fff}^{rc}(A\otimes B, r_a m+r_b n)\leq R_{\fff}^{rc}(A, r_a)\cdot R_{\fff}^{rc}(B, r_b)\]
\end{lemma}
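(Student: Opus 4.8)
The plan is to lift the two obvious facts that (i) a Kronecker product with a low‑rank matrix stays relatively low rank, and (ii) a Kronecker product of two row/column‑sparse matrices stays row/column‑sparse. Write $t_a=R_{\fff}^{rc}(A,r_a)$ and $t_b=R_{\fff}^{rc}(B,r_b)$. By the definition of row‑column rigidity, fix decompositions $A=L_A+S_A$ with $\rank(L_A)\le r_a$ and every row and column of $S_A$ having at most $t_a$ nonzero entries, and likewise $B=L_B+S_B$ with $\rank(L_B)\le r_b$ and $S_B$ having at most $t_b$ nonzeros in each row and column.

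First I would expand, using bilinearity of $\otimes$ in each argument,
\[
A\otimes B=(L_A+S_A)\otimes(L_B+S_B)=L_A\otimes B+S_A\otimes L_B+S_A\otimes S_B,
\]
where the first two terms came from $L_A\otimes L_B+L_A\otimes S_B=L_A\otimes(L_B+S_B)=L_A\otimes B$. Set $Z:=S_A\otimes S_B$, so that $A\otimes B-Z=L_A\otimes B+S_A\otimes L_B$. Using $\rank(X\otimes Y)\le\rank(X)\rank(Y)$ together with the trivial bounds $\rank(B)\le m$ and $\rank(S_A)\le n$, this gives $\rank(A\otimes B-Z)\le\rank(L_A)\cdot m+n\cdot\rank(L_B)\le r_a m+r_b n$. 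For the sparsity of $Z$: a row of a Kronecker product $X\otimes Y$ is indexed by a pair (a row of $X$, a row of $Y$) and its entries are the pairwise products of the entries of those two rows, so its number of nonzero entries is the product of the numbers of nonzeros of the two rows; hence every row of $Z$ has at most $t_a t_b$ nonzeros, and the same holds for columns. Thus $Z$ witnesses $R_{\fff}^{rc}(A\otimes B,\,r_a m+r_b n)\le t_a t_b=R_{\fff}^{rc}(A,r_a)\cdot R_{\fff}^{rc}(B,r_b)$.

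There is no substantive obstacle here; the only care needed is in the regrouping, so that the ``error'' term is exactly the Kronecker product $S_A\otimes S_B$ of the two sparse parts and not a sum still containing mixed terms, and in using the trivial rank bounds $\rank(B)\le m$, $\rank(S_A)\le n$ (each paired with the genuine rank bound on the other factor) — it is precisely this asymmetry that produces the target rank $r_a m+r_b n$ rather than something symmetric. This is the additive, Kronecker‑product counterpart of the multiplicative bound for ordinary matrix products recorded in Lemma~\ref{lem:rigidity-prod}, and the argument is that of Dvir and Liu.
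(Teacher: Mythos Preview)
Your argument is correct. The paper does not supply its own proof of this lemma; it simply quotes it from Dvir and Liu. Your decomposition $A\otimes B = L_A\otimes B + S_A\otimes L_B + S_A\otimes S_B$, with the rank bound coming from the first two summands and the row/column sparsity from the last, is exactly the standard argument (and is what Dvir and Liu do), so there is nothing further to compare.
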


In the pair of lemmas below we show that the Kronecker product of arbitrary many sufficiently large and sufficiently not rigid matrices is sufficiently not rigid itself. These lemmas are essentially the proof content of Lemma 4.10 in~\cite{dvir-liu}, which was stated and proved for more specific needs. 

\begin{lemma}\label{lem:ten-betw-squares} Let $2\leq  b\leq d_1\leq \ldots \leq d_k \leq b^2$ be integers. Let $0<\eps<1$ and  $\gamma >0$ such that $\gamma\log (b)\geq 2\log(3/\varepsilon)$ and $b\geq 3/\varepsilon$. Assume that for every $i\in [k]$, $M_i$ is a $d_i\times d_i$ matrix over $\fff$ that satisfies
\[ R_{\fff}^{rc}(M_i, d_i^{1 - \gamma}) \leq d_i^{\varepsilon}. \] 
Define $M = \bigotimes\limits_{i\in [k]} M_i$ and $n = \prod\limits_{i\in [k]} d_i$. Then, 
\[R_{\fff}^{rc}\left(M, n^{1 - \varepsilon \gamma/4}\right)\leq n^{4\varepsilon}.\] 
\end{lemma}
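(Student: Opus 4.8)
The plan is to apply the Dvir–Liu product inequality (Lemma~\ref{lem:ten-prod-bound}) iteratively across the $k$ matrices, using the hypothesis that each factor $M_i$ has row-column rigidity at most $d_i^{\varepsilon}$ for target rank $d_i^{1-\gamma}$. First I would set up the induction: writing $M^{(j)} = \bigotimes_{i\le j} M_i$ and $n_j = \prod_{i\le j} d_i$, I want to maintain the invariant that $R^{rc}_{\fff}(M^{(j)}, n_j^{1-\alpha}) \le n_j^{4\varepsilon}$ for a suitable exponent $\alpha$ (which will turn out to be $\varepsilon\gamma/4$, or something close that I can then absorb). The base case $j=1$ is just the hypothesis on $M_1$, since $1-\gamma \le 1 - \varepsilon\gamma/4$ and $\varepsilon \le 4\varepsilon$. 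For the inductive step, I apply Lemma~\ref{lem:ten-prod-bound} to $A = M^{(j-1)}$ (size $n_{j-1}$) and $B = M_j$ (size $d_j$) with $r_a = n_{j-1}^{1-\alpha}$ and $r_b = d_j^{1-\gamma}$, obtaining
\[
R^{rc}_{\fff}\bigl(M^{(j)},\ n_{j-1}^{1-\alpha} d_j + d_j^{1-\gamma} n_{j-1}\bigr) \le n_{j-1}^{4\varepsilon}\cdot d_j^{\varepsilon}.
\]
The right-hand side is $\le n_j^{4\varepsilon}$ since $d_j^{\varepsilon} \le d_j^{4\varepsilon}$, so that part is immediate.

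The core of the argument is to show that the target rank $n_{j-1}^{1-\alpha} d_j + d_j^{1-\gamma} n_{j-1}$ is at most $n_j^{1-\alpha} = n_{j-1}^{1-\alpha} d_j^{1-\alpha}$. Dividing through by $n_{j-1}$, this amounts to showing $n_{j-1}^{-\alpha} d_j + d_j^{1-\gamma} \le d_j^{1-\alpha}$. Here is where the hypotheses $b\le d_i\le b^2$, $\gamma\log b \ge 2\log(3/\varepsilon)$, and $b\ge 3/\varepsilon$ enter. Since $n_{j-1} \ge b$ (it is a product of at least one $d_i \ge b$, when $j\ge 2$), we have $n_{j-1}^{-\alpha} d_j \le b^{-\alpha} d_j$; the plan is to choose $\alpha = \varepsilon\gamma/4$ (so $\alpha \le \gamma$, consistent with the base case) and check that $b^{-\alpha}d_j \le \tfrac{1}{2} d_j^{1-\alpha}$ and $d_j^{1-\gamma} \le \tfrac12 d_j^{1-\alpha}$ separately, each giving a factor reflecting the slack between $\gamma$ and $\alpha$. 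The first reduces to $b^{-\alpha} \le \tfrac12 d_j^{-\alpha}$, i.e. $(d_j/b)^{\alpha}\le \tfrac12$ — but $d_j/b \le b$, so it suffices that $b^{\alpha} \le$ something; more carefully, since $d_j\le b^2$, we get $d_j^\alpha \le b^{2\alpha}$, and one wants $b^{-\alpha}\cdot b^{2\alpha}\le \tfrac12$... that runs the wrong way, so instead I will bound $n_{j-1}^{-\alpha} d_j$ using $n_{j-1} \ge d_j^{1/2}$ once $j$ is large enough, or simply note $n_{j-1}\ge b \ge d_j^{1/2}$, giving $n_{j-1}^{-\alpha} d_j \le d_j^{1-\alpha/2}$, and then use $\gamma \ge \alpha$ plus $\gamma\log b \ge 2\log(3/\varepsilon) \ge 2\log 3$ to beat the $\tfrac12 d_j^{1-\alpha}$ target via the exponent gap $\alpha/2$ against $d_j\ge b$. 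The second inequality, $d_j^{1-\gamma}\le \tfrac12 d_j^{1-\alpha}$, i.e. $d_j^{\alpha-\gamma}\le \tfrac12$, i.e. $d_j^{\gamma-\alpha}\ge 2$, follows since $\gamma - \alpha = \gamma(1-\varepsilon/4) \ge \tfrac{3\gamma}{4}$ and $d_j^{3\gamma/4} \ge b^{3\gamma/4} = \exp(\tfrac34\gamma\log b) \ge \exp(\tfrac32\log(3/\varepsilon)) \ge 2$.

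I expect the routine-but-fiddly bookkeeping of these exponent comparisons to be the only real obstacle; there is no conceptual difficulty once the telescoping product is set up, but one must be careful that the invariant's rank exponent $1-\alpha$ with $\alpha = \varepsilon\gamma/4$ survives every step and that the edge case $j=1$ (where $n_{j-1}$ is empty) is handled by starting the induction at $j=1$ with the bare hypothesis rather than by an inductive step. After the induction completes at $j=k$, the statement $R^{rc}_{\fff}(M, n^{1-\varepsilon\gamma/4})\le n^{4\varepsilon}$ is exactly the claimed bound.
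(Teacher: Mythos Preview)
Your inductive approach via Lemma~\ref{lem:ten-prod-bound} has a fatal gap: the invariant $R^{rc}_{\fff}(M^{(j)}, n_j^{1-\alpha})\le n_j^{4\varepsilon}$ cannot be maintained. When you divide the target-rank inequality by $n_{j-1}$ you should get
\[
n_{j-1}^{-\alpha} d_j + d_j^{1-\gamma} \;\le\; n_{j-1}^{-\alpha} d_j^{1-\alpha},
\]
not $\le d_j^{1-\alpha}$ as you wrote; the factor $n_{j-1}^{-\alpha}$ does not disappear from the right-hand side. After cancelling it from the first term on each side, the first term alone already demands $d_j \le d_j^{1-\alpha}$, which is impossible for $\alpha>0$. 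Equivalently, $r_a d_j = n_{j-1}^{1-\alpha} d_j = n_j^{1-\alpha}\cdot d_j^{\alpha} > n_j^{1-\alpha}$, so a single application of Lemma~\ref{lem:ten-prod-bound} already overshoots the desired rank. Your attempted repairs (using $n_{j-1}\ge d_j^{1/2}$, etc.) are all aimed at the mis-stated target $d_j^{1-\alpha}$ and do not address this. More generally, iterating Lemma~\ref{lem:ten-prod-bound} across all $k$ factors with $r_i=d_i^{1-\gamma}$ yields target rank $n\sum_i d_i^{-\gamma}\le nk\,b^{-\gamma}$, and for large $k$ this is far bigger than $n^{1-\varepsilon\gamma/4}$ (since $n\le b^{2k}$ forces $n^{-\varepsilon\gamma/4}\le b^{-\varepsilon\gamma k/2}\ll k\,b^{-\gamma}$).

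The paper's proof avoids this accumulation by a global argument: write $M_i=A_i+E_i$ with $\rank A_i\le d_i^{1-\gamma}$ and $E_i$ sparse, expand $\bigotimes_i(A_i+E_i)=\sum_{S\subseteq[k]}\bigotimes_{i\in S}A_i\otimes\bigotimes_{j\notin S}E_j$, and split according to whether $|S|\ge \varepsilon k$ (these terms together have rank at most $\binom{k}{\varepsilon k}\,n\,b^{-\varepsilon k\gamma}\le n^{1-\varepsilon\gamma/4}$, using $b^{\gamma}\ge(3/\varepsilon)^2$) or $|S|<\varepsilon k$ (these terms together have at most $n^{4\varepsilon}$ nonzeros per row and column). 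The threshold $|S|\ge\varepsilon k$ is exactly what produces an exponent proportional to $k$ in the rank saving, which is what your step-by-step induction cannot achieve.
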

\begin{proof}
By the assumptions of the lemma, we can write $M_i = A_i+E_i$, where $A_i, E_i \in \fff^{d_i\times d_i}$, $\rank(A_i)\leq d_i^{1-\gamma}$, and $E_i$ has at most $d_i^{\varepsilon}$ non-zero entries in every row and column. Then 
\begin{equation}
 \begin{split}
 \bigotimes\limits_{i\in [k]} M_i = & \bigotimes\limits_{i\in [k]} (A_i+E_i) = \sum\limits_{S\subseteq [k]} \bigotimes\limits_{i\in S} A_i\bigotimes\limits_{j\in [k]\setminus S} E_j 
 \end{split}
\end{equation}
Now all the summands can be split into two groups: when $|S|\geq \varepsilon k$ and when $|S|< \varepsilon k$. We are going to bound the rank of the sum of the first group and the number of non-zero entries of the sum of the second group.
\begin{equation}
\begin{split}
\rank\left(\sum\limits_{S\subseteq [k], |S|\geq \varepsilon k} \bigotimes\limits_{i\in S} A_i\bigotimes\limits_{j\in [k]\setminus S} E_j \right)\leq \sum\limits_{S\subseteq [k], |S| = \varepsilon k}\rank\left( \bigotimes\limits_{i\in S} A_i\right)\prod\limits_{j\in [k]\setminus S} d_j \leq \\
\leq \binom{k}{\varepsilon k}\max\limits_{S\subseteq [k], |S| = \varepsilon k} \prod\limits_{i\in S} d_i^{1-\gamma}\prod\limits_{j\in [k]\setminus S} d_j\leq \left(\dfrac{3}{\varepsilon}\right)^{\varepsilon k} \cdot n\cdot b^{-\varepsilon k \gamma}
\end{split}
\end{equation}
Since $b^{\gamma}\geq (3/\varepsilon)^2$ and $n\leq b^{2k}$, we have
\begin{equation}
\rank\left(\sum\limits_{S\subseteq [k], |S|\geq \varepsilon k} \bigotimes\limits_{i\in S} A_i\bigotimes\limits_{j\in [k]\setminus S} E_j \right)\leq n\cdot b^{-\varepsilon k \gamma/2}\leq n^{1 - \varepsilon \gamma/4}
\end{equation}
Next, consider the remaining terms
\begin{equation}
E = \sum\limits_{S\subseteq [k], |S|< \varepsilon k} \bigotimes\limits_{i\in S} A_i\bigotimes\limits_{j\in [k]\setminus S} E_j. 
\end{equation}
Using Lemma~\ref{lem:tail-bound}, every column and every row of $E$ has at most \begin{equation}
\left(\sum\limits_{i = 0}^{\varepsilon k} \binom{k}{i}\right) (b^2)^{\varepsilon k} \prod_{i \in [k]} d_i^{\varepsilon} \leq \left(\dfrac{3}{\varepsilon}\right)^{\varepsilon k} n^{2\varepsilon}n^{\varepsilon}\leq b^{\varepsilon k}n^{3\varepsilon}\leq n^{4\varepsilon}
\end{equation}
 non-zero entries.
\end{proof}

\begin{lemma}\label{lem:general-ten-bound}
Let $2\leq b\leq d_1\leq \ldots \leq d_k $ be integers. Let $0<\eps<1$ and  $\gamma_1, \gamma_2, \ldots, \gamma_k >0$ be such that $b\geq 3/\varepsilon$ and for all $i\in [k]$ we have $\gamma_i \cdot \log d_i \geq 4\log(3/\varepsilon)$. Assume that for every $i\in [k]$, $M_i$ is a $d_i\times d_i$ matrix over $\fff$, that satisfies 
$ R_{\fff}^{rc}(M_i, d_i^{1 - \gamma_i}) \leq d_i^{\varepsilon}$.

Define $M = \bigotimes\limits_{i\in [k]} M_i$ and $n = \prod\limits_{i\in [k]} d_i$. Then, 
\[ R_{\fff}^{rc}\left(M, n^{1 - \psi}\right)\leq n^{5\varepsilon}\quad \text{for} \quad  \psi = \dfrac{\varepsilon^2\min_{i}\gamma_i}{4\log\log d_k}-\dfrac{\log\log\log d_k}{\log n}\] 
\end{lemma}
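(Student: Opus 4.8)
The plan is to sort the factors $M_i$ into $O(\log\log d_k)$ ``scale classes'' on each of which $\max d_i<(\min d_i)^2$, apply Lemma~\ref{lem:ten-betw-squares} inside each class, and then glue the classes together with the Kronecker-product rigidity inequality of Dvir and Liu (Lemma~\ref{lem:ten-prod-bound}). Write $\gamma^{*}:=\min_i\gamma_i$, and note throughout that $R^{rc}_{\fff}(A,\cdot)$ is non-increasing in the target rank, since a witness $Z$ for rank $r$ is also a witness for every rank $\ge r$.

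First I would partition $[k]$ into the classes $G_j:=\{\,i:\,b^{2^{j}}\le d_i<b^{2^{j+1}}\,\}$ for $j\ge0$, and set $B_j:=b^{2^{j}}$. Because $b\ge3/\eps>3$ forces $b\ge4$ and all $d_i$ lie in $[b,d_k]$, the number $G$ of non-empty classes satisfies $G\le\log_2\log_b d_k+1\le\log_2\log_2 d_k=\log\log d_k$, the middle step using $b\ge4$ (so $\log_2 b\ge2$) to absorb the ``$+1$'' coming from the class of scale $0$. Within $G_j$ all sizes lie in $[B_j,B_j^{2})$ with $B_j\ge b\ge3/\eps$. Put $\gamma_j^{\circ}:=\min_{i\in G_j}\gamma_i$; the index of $G_j$ attaining this minimum has size $d_i<B_j^{2}$, so its hypothesis $\gamma_i\log d_i\ge4\log(3/\eps)$ yields $\gamma_j^{\circ}\log B_j>2\log(3/\eps)$. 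Also, for every $i\in G_j$ we have $\gamma_j^{\circ}\le\gamma_i$, hence $d_i^{1-\gamma_j^{\circ}}\ge d_i^{1-\gamma_i}$ and so $R^{rc}_{\fff}(M_i,d_i^{1-\gamma_j^{\circ}})\le d_i^{\eps}$ by monotonicity. Thus Lemma~\ref{lem:ten-betw-squares} applies to the factors of $G_j$ with base $B_j$ and common exponent $\gamma_j^{\circ}$; writing $N_j:=\bigotimes_{i\in G_j}M_i$ and $n_j:=\prod_{i\in G_j}d_i$ it gives $R^{rc}_{\fff}(N_j,n_j^{1-\eps\gamma_j^{\circ}/4})\le n_j^{4\eps}$, hence (monotonicity again, and $\gamma_j^{\circ}\ge\gamma^{*}$)
\[ R^{rc}_{\fff}\bigl(N_j,\ n_j^{\,1-\eps\gamma^{*}/4}\bigr)\ \le\ n_j^{4\eps}. \]

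Now $M=\bigotimes_j N_j$. I would call a class \emph{big} if $n_j>n^{\eps/G}$ and \emph{small} otherwise; since there are at most $G$ small classes, $\prod_{j\ \mathrm{small}}n_j\le n^{\eps}$. For a small class use the trivial bound $R^{rc}_{\fff}(N_j,0)\le n_j$, and for a big class use the display above with target $r_j=n_j^{1-\eps\gamma^{*}/4}$. The $G$-fold form of Lemma~\ref{lem:ten-prod-bound} (obtained by induction, with target rank $\sum_j r_j\prod_{l\ne j}n_l$ and bound $\prod_j R^{rc}_{\fff}(N_j,r_j)$) then yields a rigidity bound of at most $\prod_{j\ \mathrm{big}}n_j^{4\eps}\cdot\prod_{j\ \mathrm{small}}n_j\le n^{4\eps}\cdot n^{\eps}=n^{5\eps}$, with target rank
\[ \sum_{j\ \mathrm{big}}n_j^{1-\eps\gamma^{*}/4}\cdot\frac{n}{n_j}\ =\ n\sum_{j\ \mathrm{big}}n_j^{-\eps\gamma^{*}/4}\ \le\ n\cdot G\cdot\bigl(n^{\eps/G}\bigr)^{-\eps\gamma^{*}/4}\ =\ G\cdot n^{\,1-\eps^{2}\gamma^{*}/(4G)}. \]
Using $n^{(\log\log\log d_k)/\log n}=\log\log d_k$ we have $n^{1-\psi}=\log\log d_k\cdot n^{\,1-\eps^{2}\gamma^{*}/(4\log\log d_k)}$, so the inequality $G\cdot n^{1-\eps^{2}\gamma^{*}/(4G)}\le n^{1-\psi}$ is equivalent to $G/\log\log d_k\le n^{(\eps^{2}\gamma^{*}/4)(1/G-1/\log\log d_k)}$, which holds because $G\le\log\log d_k$ makes the left side $\le1$ and the right-side exponent $\ge0$. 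A final application of monotonicity of $R^{rc}_{\fff}$ gives $R^{rc}_{\fff}(M,n^{1-\psi})\le n^{5\eps}$.

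The steps requiring the most care (rather than a genuine obstruction) are: controlling the number of scale classes — this is exactly where $b\ge3/\eps$ is used, since it forces $\log_2 b\ge2$ so that the unavoidable ``$+1$'' from the scale-$0$ class does not push $G$ above $\log\log d_k$; and keeping a class that consists of a single small matrix from ruining the final exponent — handled by isolating the small classes, whose contribution to the rank is $0$ (via the trivial rigidity bound) and to the sparsity is only $n^{\eps}$, using that there are at most $\log\log d_k$ of them. The factor $4$ in the hypothesis $\gamma_i\log d_i\ge4\log(3/\eps)$ (versus the factor $2$ in Lemma~\ref{lem:ten-betw-squares}) is precisely what survives the loss of a factor $2$ incurred when passing from $d_i$ to the class base $B_j$, since $d_i<B_j^{2}$.
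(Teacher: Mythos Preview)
Your proposal is correct and follows essentially the same route as the paper: partition the factors into $O(\log\log d_k)$ dyadic scale classes $\{i:b^{2^j}\le d_i<b^{2^{j+1}}\}$, apply Lemma~\ref{lem:ten-betw-squares} within each class, then combine via the iterated Dvir--Liu inequality, treating classes with $n_j\le n^{\eps/G}$ by the trivial bound. Your write-up is in fact a bit more careful than the paper's in two places---you explicitly start the scale index at $j=0$ (the paper's indexing $t=1,\dots,L$ omits the bottom interval $[b,b^2]$), and you spell out why $b\ge 3/\eps$ (hence $b\ge4$) absorbs the ``$+1$'' so that $G\le\log\log d_k$, and why the hypothesis constant $4$ survives the halving $\log d_i<2\log B_j$ to give the $2$ required by Lemma~\ref{lem:ten-betw-squares}.
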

\begin{proof} 
Let $I_t = \{ i\mid d_i \in (b^{2^t}, b^{2^{t+1}}]\}$ for $t = 1, \ldots, L = \log\log d_k$. Let $A_t = \bigotimes\limits_{i\in I_t} M_i$ and $n_t = \prod\limits_{i\in I_t} d_i$. Let $\gamma = \min_i \gamma_i$. By Lemma~\ref{lem:ten-betw-squares},
\begin{equation}
R_{\fff}^{rc}(A_t, n_t^{1 - \varepsilon \gamma/4})\leq n_t^{4\varepsilon}
\end{equation}
Let $S = \{t\mid n_t \geq n^{\varepsilon / L}\}$ and $N_S = \prod\limits_{t\in S} n_t$. Then, by Lemma~\ref{lem:ten-prod-bound}, 
\begin{equation}
R_{\fff}^{rc}\left(\bigotimes\limits_{t\in S} A_t ,\  N_S\left(\sum\limits_{t\in S} n_t^{-\varepsilon \gamma/4}\right)\right)\leq N_S^{4\varepsilon}.
\end{equation}
Hence,
\begin{equation}
R_{\fff}^{rc}\left(\bigotimes\limits_{t\in S} A_t ,\  L \cdot N_S n^{-\varepsilon^2\gamma/(4L)}\right)\leq N_S^{4\varepsilon}
\end{equation}
Observe that, $n/N_S\leq \left(n^{\varepsilon/L}\right)^L\leq n^{\varepsilon}$. Thus, $R_{\fff}^{rc}\left(\bigotimes\limits_{t\in [L]\setminus S} A_t ,\ 0\right) \leq n^{\varepsilon} $.
Hence, by Lemma~\ref{lem:ten-prod-bound},
\begin{equation}
R_{\fff}^{rc}\left(M, n^{1 - \psi}\right)\leq n^{5\varepsilon},
\end{equation}
as $n^{-\psi} = \log\log d_k \cdot n^{-\varepsilon^2\gamma/(4\log\log d_k)} = Ln^{-\varepsilon^2\gamma/(4L)}$. 

\end{proof}

Finally, we are ready to prove Theorem~\ref{thm:bounded-and-non-rigid}.

\begin{proof}[Proof of Theorem~\ref{thm:bounded-and-non-rigid}]
%[Proof of Theorem~\ref{thm:bounded-and-non-rigid}]
 Define $b_\varepsilon = 3/\varepsilon$ and $b_* = \max(b, b_{\varepsilon})$. Let $\gamma_b =c_0\dfrac{1}{b_{*}^{3/2}\log^3(b_*)}\cdot\dfrac{\eps^2}{\log^2(1/\eps)}$, where $c_0>0$ is the constant given by Theorem~\ref{thm:row-column-main-noassum}. We may assume $c_0<1$. Define $N_b = b_{*}^{24/(\varepsilon \gamma_b)}$. 

Denote the size of $M_i$ by $d_i$ for $i\in [k]$ and let $S = \{ i\in [k] \mid d_i \leq b_{*}\}$. Let 
\begin{equation}
F = \bigotimes\limits_{i\in S} M_i \quad \text{and}\quad  H = \bigotimes\limits_{j\in [k]\setminus S} M_j. 
\end{equation} 
Denote by $N_F$ and $N_H$ be the orders of $F$ and $H$, respectively. Then $N_F\cdot N_H = n$. Let $d_{\max} = \max_i d_i$. By Lemma~\ref{lem:general-ten-bound},
\begin{equation}
R_{\fff}^{rc}\left(M, N_H^{1 - \psi_H}\right)\leq N_H^{5\varepsilon},
\end{equation}
where
\begin{equation}\psi_H \geq \dfrac{\varepsilon^2}{4\log\log d_{\max}}\cdot \dfrac{12(\log\log d_{\max})^2}{\varepsilon^3 \log d_{\max}} - \dfrac{\log\log\log d_{\max}}{\log N_H}\geq \dfrac{2\log\log n}{\varepsilon \log n}.
\end{equation}
At the same time, by Theorem~\ref{thm:row-column-main-noassum}, if $N_F\geq N_b$, then
\begin{equation}
R_{\fff}^{rc}\left(F, N_F^{1-\gamma_b}\right)\leq N_F^{\varepsilon}.
\end{equation}
Note, since $b_{*}\geq 3/\varepsilon$, we have 
\begin{equation}
     \log\log N_b\leq \log( 1/\gamma_b)+2\log(b_{*})\leq 12\log(b_{*}) \leq \gamma_b\varepsilon \log N_b/2.  
\end{equation}
So, in the case $n\geq N_F\geq N_b$ we have $\gamma_b\geq \dfrac{2\log\log n}{\varepsilon \log n}$.

\noindent If $\min(N_H, N_F)\geq n^{\varepsilon}$ and $N_F\geq N_b$, then, by Lemma~\ref{lem:ten-prod-bound},
\begin{equation}
\begin{split}
& R^{rc}_{\fff}\left( M, n/\log n\right) \leq  
R^{rc}_{\fff}\left( M, n(n^{-\varepsilon\psi_H}+n^{-\varepsilon\gamma_b})\right) \leq \\
& \quad \leq  R^{rc}_{\fff}\left( H\cdot F, n(N_H^{-\psi_H}+N_F^{-\gamma_b})\right) \leq  R_{\fff}^{rc}\left(H, N_H^{1 - \psi_H}\right) \cdot R_{\fff}^{rc}\left(F, N_F^{1 - \gamma_b}\right) \leq n^{5\varepsilon}.
\end{split}
\end{equation}
If $N_F\leq n^{\varepsilon}$, then $N_H\geq n^{\varepsilon}$, and by Lemma~\ref{lem:ten-prod-bound}, 
\begin{equation}
R^{rc}_{\fff}\left( M, n/\log n\right) \leq R_{\fff}^{rc}\left(M, N_FN_H^{1 - \psi_H}\right)\leq N_H^{5\varepsilon}\cdot R_{\fff}^{rc}\left(F,0\right)\leq N_H^{5\varepsilon}\cdot N_F \leq n^{6\varepsilon}. 
\end{equation}
Similarly, if $N_H\leq n^{\varepsilon}$ and $N_F\geq N_b$, then $R^{rc}_{\fff}\left( M, n/\log n\right) \leq n^{6\varepsilon}$.

Finally, if $N_b > N_F\geq n^{\varepsilon}$, then the size of $M$ is bounded by a function of $b$ and $\varepsilon$.
\end{proof}

%\pagebreak

\appendix

\section{Proof of Observation~\ref{obs:factor}}\label{sec:proof-obs-factor}

In this appendix we prove Observation~\ref{obs:factor}.

\begin{observation}\label{obs:factor-app} For any matrix $A \in \fff^{d\times d}$ there exists $B\in \fff^{(d-1)\times (d-1)}$, vectors $x, y\in \mathbb{F}^d$, $\lambda\in \{0, 1\}\subseteq \fff$ and permutation matrices $P_1, P_2\in \fff^{d\times d}$ such that
\begin{equation}\label{eq:factor-app} A = P_1 \cdot  G_d(y)^T  \cdot \left(\begin{matrix}
		B & 0\\
		0 & \lambda
		\end{matrix}\right)   \cdot G_d(x)  \cdot P_2.
\end{equation}
\end{observation}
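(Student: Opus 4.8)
The plan is to reduce the factorization~\eqref{eq:factor-app} to a statement about a single row and a single column of $A$, and then to locate a suitable row and column by a short case analysis on $\rank(A)$.

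First I would record the effect of the three middle factors by a $2\times 2$ block computation. Write $x=(x',x_d)$ and $y=(y',y_d)$ with $x',y'\in\fff^{d-1}$ and $x_d,y_d\in\fff$, so that $G_d(x)=\left(\begin{smallmatrix}I_{d-1}&x'\\ 0&x_d\end{smallmatrix}\right)$ and $G_d(y)^T=\left(\begin{smallmatrix}I_{d-1}&0\\ (y')^T&y_d\end{smallmatrix}\right)$. Multiplying out,
\[
G_d(y)^T\begin{pmatrix}B&0\\ 0&\lambda\end{pmatrix}G_d(x)=\begin{pmatrix}B&Bx'\\ (y')^TB&(y')^TBx'+\lambda\,y_dx_d\end{pmatrix}.
\]
Since left multiplication by $P_1$ and right multiplication by $P_2$ merely rearrange the rows and columns, it suffices to find a row index $i$ and a column index $j$ of $A$ such that, after moving row $i$ to the last position and column $j$ to the last position, $A$ takes the block form $\left(\begin{smallmatrix}B&c\\ r^T&a\end{smallmatrix}\right)$ with $B\in\fff^{(d-1)\times(d-1)}$, $c,r\in\fff^{d-1}$, $a\in\fff$, and
\[
c\in\operatorname{colspan}(B)\qquad\text{and}\qquad r\in\operatorname{rowspan}(B);
\]
call these two conditions $(\star)$. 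Granting $(\star)$, I would pick $x'$ with $Bx'=c$ and $y'$ with $(y')^TB=r^T$, and then choose $\lambda\in\{0,1\}$ and scalars $x_d,y_d$ with $\lambda\,y_dx_d=a-(y')^TBx'$ — take $\lambda=0$ if the right-hand side is $0$, and otherwise $\lambda=y_d=1$ and $x_d=a-(y')^TBx'$. Comparing with the block identity above and undoing the rearrangement of rows and columns then gives~\eqref{eq:factor-app}.

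It remains to produce $i$ and $j$ satisfying $(\star)$, and I would split into two cases. If $\rank(A)\le d-1$, the columns of $A$ are linearly dependent, so some column is a linear combination of the others; let $j$ be its index, and symmetrically let $i$ be the index of a row that is a linear combination of the remaining rows. Deleting the $i$-th coordinate from the column relation exhibits $c$ as the same linear combination of the columns of $B$, so $c\in\operatorname{colspan}(B)$, and likewise $r\in\operatorname{rowspan}(B)$. If $\rank(A)=d$, then any $d-1$ columns of $A$ are independent, so $A$ has a nonsingular $(d-1)\times(d-1)$ submatrix $A[S,T]$; taking $i$ outside $S$ and $j$ outside $T$ makes $B=A[S,T]$ invertible, and $(\star)$ holds trivially since then $\operatorname{colspan}(B)=\operatorname{rowspan}(B)=\fff^{d-1}$.

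The argument is elementary, so there is no serious obstacle; the one place that needs a little care is the choice of which row and column to delete, since deleting an arbitrary row and column can violate $(\star)$ (for instance, deleting the first row and second column of $I_3$). One could also merge the two cases by choosing $j$ with $\rank(A[:,\,\ne j])=\min(\rank A,\,d-1)$ and then $i$ with $\rank(A[\ne i,\,\ne j])$ equal to this same number, but the split above is cleaner to write. Everything else — the block identity, and the fact that the constraint $\lambda\in\{0,1\}$ is harmless because $\lambda$ only has to absorb the single scalar $a-(y')^TBx'$ — is routine.
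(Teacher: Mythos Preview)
Your proof is correct and follows essentially the same strategy as the paper: a case split on whether $\rank(A)=d$ or $\rank(A)<d$, in each case locating a row and a column that can be moved to the last position so that the resulting $(d-1)\times(d-1)$ top-left block $B$ absorbs the remaining row and column. Your upfront block identity for $G_d(y)^T\mathrm{diag}(B,\lambda)G_d(x)$ makes the reduction to condition~$(\star)$ very clean; the only cosmetic difference is that in the full-rank case the paper expresses $e_d$ as a combination of the columns of $A$ (forcing the middle block to have bottom-right entry $1$), whereas you simply pick an invertible $(d-1)\times(d-1)$ submatrix and let $\lambda$ absorb whatever scalar is left over.
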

\begin{proof}
We consider two cases. First, assume that $A$ has rank $d$. In this case, the basis vector $e_d$ can be written as a linear combination of the columns $A_i$ of $A$. Moreover, by changing the order of columns (by some permutation matrix, say $Q_1$) we may assume that the coefficient in front of the last column is non-zero. In other words, there exist coefficients $\mu_1, \ldots, \mu_d$, with $\mu_d\neq 0$ such that
\[ e_d = \mu_1 (AQ_1)_1+\mu_2(AQ_1)_2 + \ldots +\mu_d (AQ_1)_d. \]
%By changing the order of the columns of $A$ we may assume that $\mu_d \neq 0$. Then
%\[ A_d = -\dfrac{\mu_1}{\mu_d} A_1+ -\dfrac{\mu_1}{\mu_d}A_2 +\ldots + -\dfrac{\mu_1}{\mu_d}+\dfrac{1}{\mu_d}e_d. \]
Define $x_i = -\mu_i/\mu_d$ for $i\in [d-1]$ and $x_d = \dfrac{1}{\mu_d}$. Denote by $A'$ the matrix consisting of the first $d-1$ columns of $AQ_1$. Then 
\begin{equation}
 AQ_1 = \left( A'\  \begin{matrix}
0_{d-1}\\
1
\end{matrix}\right)G_d(x).
\end{equation}
Since $A$ has full rank, the first $d-1$ rows of $A'$ span $\fff^{d-1}$. So the last row of $A'$ can be written as a linear combination of the first $d-1$ rows. Hence, for some vector $y\in \fff^d$ with $y_d = 1$,  
\begin{equation}
(A')^T = \left( B\ 0_{d-1}\right)G_d(y) \quad \Rightarrow \quad AQ_1 =   G_d(y)^T  \cdot \left(\begin{matrix}
		B & 0\\
		0 & 1
		\end{matrix}\right)   \cdot G_d(x). 
\end{equation}  
If $A$ has rank less than $d$, there exists a column and a row of $A$ that can be expressed as a linear combination of all other columns and rows of $A$, respectively. By changing the order of rows and columns we may assume that these are $d$-th column and $d$-th row. Then, similarly as above, we see that $A$ can be written in the form~\eqref{eq:factor-app} with $\lambda = 0$.   
\end{proof}

\bibliography{mfcs-kivva}

\end{document}